    \setlist[itemize]{noitemsep, topsep=0pt}
    \setlist[enumerate]{noitemsep, topsep=0pt}
\newtheorem{thm}{Theorem}
\newcommand{\bra}[1]{\langle#1|}
\newcommand{\ket}[1]{|#1\rangle}
\newcommand{\be}{\begin{equation}}
\newcommand{\ee}{\end{equation}}
\newcommand{\<}{\langle}
\renewcommand{\>}{\rangle}
\newcommand{\Tr}{{\rm Tr\,}}
\newcommand{\tr}{{\rm tr\,}}
\def\hL{\mathcal{L}}
\newcommand{\vz}{\vec{0}}
\newcommand{\vk}{\vec{k}}
\newcommand{\hZ}{\mathcal{Z}}
\newcommand{\bsmu}{{\boldsymbol \mu}}
\newcommand{\bsnu}{{\boldsymbol \nu}}
\renewcommand{\vec}[1]{{\bf #1}}
\newtheorem{lem}{Lemma}
\newtheorem{dfn}{Definition}
\newcommand{\x}{\raisebox{-0.027cm}{\includegraphics[width=.014\linewidth]{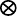}}}
\renewcommand{\o}{\raisebox{-0.06cm}{\includegraphics[width=.012\linewidth]{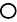}}}
\renewcommand{\*}{\raisebox{-0.06cm}{\includegraphics[width=.012\linewidth]{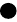}}}
\newcommand{\sss}{\raisebox{0.05cm}{\includegraphics[width=.035\linewidth]{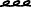}}}
\newcommand{\sq}{\raisebox{-0.027cm}{\includegraphics[width=.014\linewidth]{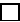}}}
\newcommand{\zzz}{\raisebox{0.05cm}{\includegraphics[width=.021\linewidth]{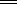}}}
\newcommand{\vvv}{\raisebox{0.1\height}{\includegraphics[ width=.035\linewidth]{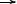}}}
\newcommand{\nnn}{\raisebox{0.5\height}{\includegraphics[ width=.021\linewidth]{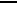}}}
\newcommand{\review}[1]{\textcolor{black}{#1}}
\definecolor{crimson}{RGB}{220, 20, 60}
\newcommand{\reviewtwo}[1]{\textcolor{black}{#1}}
\let\svlabel\label
\let\svref\ref
\newcommand\labelx[2][\relax]{
  \svlabel{#2}
  \ifx\relax#1\else
    \immediate\write\@mainaux{\string
      \expandafter\gdef\noexpand\csname custom#2\noexpand\endcsname{#1}}
  \fi
}
\newcommand\refx[2][\relax]{
  \ifx s#1\csname custom#2\endcsname\else\svref{#2}\fi
}
\begin{document}

\title{Bounds on Autonomous Quantum Error Correction}

\date{}

 \author[1]{Oles Shtanko$^*$}
 \author[2,3]{Yu-Jie Liu$^*$}
 \author[4,5]{Simon Lieu}
\thanks{This work was done prior to joining AWS.}
\thanks[*]{These authors contributed equally to this work.}
 \author[4,5]{Alexey V. Gorshkov}
 \author[5]{Victor V. Albert}
 \affil[1]{\small{IBM Quantum, IBM Research -- Almaden, San Jose, CA, USA}}
 \affil[2]{\small{Technical University of Munich, TUM School of Natural Sciences, Physics Department, 85748 Garching, Germany}}
 \affil[3]{\small{Munich Center for Quantum Science and Technology (MCQST), Munich, Germany}}
 \affil[4]{\small{Joint Quantum Institute, NIST/University of Maryland, College Park, MD, USA}}
 \affil[5]{\small{Joint Center for Quantum Information and Computer Science, NIST/University of Maryland, College Park, MD, USA}}

\maketitle

\vspace{-1cm}

\begin{abstract}
Autonomous quantum memories are a way to passively protect quantum information using engineered dissipation that creates an ``always-on'' decoder. We analyze Markovian autonomous decoders that can be implemented with a wide range of qubit and bosonic error-correcting codes, and derive several upper bounds and a lower bound on the logical error rate in terms of correction and noise rates. These bounds suggest that, in general, there is always a correction rate, possibly size-dependent, above which autonomous memories exhibit arbitrarily long coherence times. For any given autonomous memory, size dependence of this correction rate is difficult to rule out: we point to common scenarios where autonomous decoders that stochastically implement active error correction must operate at rates that grow with code size.
For codes with a threshold, we show that it is possible to achieve faster-than-polynomial decay of the logical error rate with code size by using superlogarithmic scaling of the correction rate. We illustrate our results with several examples. One example is an exactly solvable global dissipative toric code model that can achieve an effective logical error rate that decreases exponentially with the linear lattice size, provided that the recovery rate grows proportionally with the linear lattice size.
\end{abstract}

\vspace{0.8cm}

One of the biggest challenges in quantum computing is the problem of noise. Any realistic qubit architecture is prone to dissipation due to interactions with the environment, leading to errors and subsequent loss of quantum information. Traditional error correction strategies have focused on manual periodic error \review{detection} and
correction  \cite{terhal2015quantum, gottes1997}. 
In recent years, however, there has been a surge in autonomous ``hardware'' methods designed to compensate for noise using engineered dissipation  \cite{Paz1998,Ahn2002,Sarovar2005,Mabuchi2009,Oreshkov2013}. 
Several dissipative quantum memories have been successfully implemented, in particular using various \eczoohref[bosonic codes]{oscillators}\footnote{The name of each code, when introduced for the first time in the manuscript, will contain a hyperlink to the code's corresponding webpage in the Error Correction Zoo  \cite{ErrorCorrectionZoo}.} \cite{science.aaa2085,PhysRevX.8.021005,gertler2021protecting,lescanne_exponential_2020,de_neeve_error_2022,campagne-ibarcq_quantum_2020,sivak_real-time_2023}, but they have not been fully exploited for error correction in real many-body systems consisting of several qubits or qudits. 

A topic closely related to dissipatively stabilized quantum memories is \eczoohref[self-correction]{self_correct}  \cite{brown2016quantum}, a process in which the propagation of errors is naturally limited 
without performing 
active error correction.
An example of self-correction in the classical world is the storage of information in magnetic hard drives.
Here, classical information is encoded in a ferromagnet by its collective spin magnetization, and errors resulting from spontaneous individual spin flips become energetically unfavorable and are therefore eliminated by 
a local thermalization process.  
This mechanism ensures a memory lifetime that scales exponentially with system size  \cite{peierls_1936, griffiths1964}. 

The principle of classical self-correction can be extended to quantum systems, such as those stabilized by frustration-free Hamiltonians  \cite{gottes1997}. 
For example, the \eczoohref[four-dimensional toric code]{higher_dimensional_surface}  \cite{dennis2002} demonstrates a finite-temperature topological order that naturally protects quantum information  \cite{Alicki:2010,pastawski:2011,liu2023dissipative}.
Unfortunately, several no-go results preclude local frustration-free Hamiltonians from achieving self-correction with a constant error rate in two dimensions  \cite{Bravyi2009,cardinal2013local,haah2012logical,hastings2011topological} and for some three-dimensional models  \cite{Yoshida2011,haah2013commuting,pastawski2015fault}. 
Such studies are also hampered by broad challenges associated with quantum complexity  \cite{hallgren2013local,Aharonov2009,Schuch2007,Piddock2015}. 

The use of dissipative processes to aid or induce self-correction in quantum systems continues to be actively studied.
In some cases, it is possible to show numerically that lower-dimensional systems can still offer quantum memory times that grow with system size  \cite{Haah2011,hamma2009toric,wootton2013topological,chesi2010self,kapit2015passive,lieu2024candidate}. It is also possible to establish general rules and construct examples of memory whose performance is provably suppressed by increasing the recovery dissipation strength  \cite{lihm2018implementation, lebreuilly2021autonomous}.
However, there is currently no unified approach that allows general conclusions to be drawn about scalability of a generic memory performance, and there are no universal bounds on the aid that a dissipative process can provide.

Continuing existing line of work  \cite{lihm2018implementation, lebreuilly2021autonomous}, we derive a general non-perturbative bound (see Theorem \ref{generic_bound}), valid for a wide class of autonomous quantum memories 
and restricted by 
only a few intuitive assumptions. This bound expresses the logical error rate of a dissipative memory in terms of the noise-to-recovery ratio --- the ratio between the benevolent recovery dissipation rates and the strength of any malevolent noise.  The core idea is to use the resummation of Dyson's perturbative expansion, which allows one to derive the logical error in the late-time limit. We show that, as soon as the noise-to-recovery ratio is less than a critical value, autonomous memory can achieve lifetimes that grow exponentially with the inverse noise rate. At the same time, this general bound promises an exponentially small logical error rate only if the recovery rate grows with system size.

To understand better the results of Theorem 1, we specialize our analysis to a subclass of dissipative memories that we call \textit{global decoders}. This specific model represents an oversimplified recovery process that takes any error state \textit{directly} into the codespace. Although this model is technically different from many dissipative memories that use the gradual relaxation of error states into the codespace, it serves as 
an exactly
solvable example of decoding dynamics. As we will show below, it also saturates the logical error rate bound established in Theorem 1.

Our global decoder model can be seen as an autonomous version of active error correction with the assumption that rounds of error correction essentially ``take zero time'' (cf.  \cite{Gottesman2013}).
We compress syndrome extraction, any classical post-processing involved in decoding, and the corresponding recovery operation into a pre-compiled procedure---in the form of a series of \textit{jump operators}---that is implemented instantaneously and autonomously.
As such, the model can be seen as an idealized form of autonomous correction
for which 
local syndrome measurements, efficient decoders, and recoveries are implemented instantaneously at random times.

Our global decoder assumption, while seemingly unphysical, demonstrates that 
autonomous memories, including those utilizing seemingly limitless resources, do not guarantee performance comparable to active error correction.
We find that the assumption of immediate system-wide corrections does not automatically yield memory times that scale exponentially, or even polynomially, with system size. For multi-qubit systems undergoing Pauli noise, we show that memory lifetime scales exponentially only with the ratio of two values: the code distance multiplied by recovery process rate and the \textit{total} error rate (i.e. sum of error rates for each physical qubit)---see Theorem~\ref{lem:all_time1} and the discussion around it.  Thus, even if the code distance grows linearly with system size (as in the case of asymptotically good \eczoohref[quantum low-density parity-check (QLDPC) codes]{qldpc}  \cite{lin2022good,leverrier2022quantum, Panteleev2022asymptotically}), the correction rate must grow with system size to yield system-size growing memory lifetime.
By providing a lower bound derived for a stabilizer code with Pauli noise under a few reasonable assumptions, we also show that this scaling cannot be improved in this scenario.

The conclusion implied by the bounds described above is that it is impossible to obtain a global decoder with either exponential or polynomial lifetime using our seemingly powerful engineered dissipation with a constant recovery rate. This result illustrates that the details of the recovery process matter, and the simple fact that it has dissipative gap---i.e. it gets the system into the codespace {in finite time---}alone is not enough for a scalable dissipative memory.

It is noteworthy that the global decoder model combined with Pauli noise is an example of a \textit{Poissonian} process. Such Lindbladian dynamics can be represented as a sum of independently sampled stochastic trajectories, including error and recovery quantum channels. This representation allows us to bound the performance of the global decoder by counting the trajectories that lead to a logical error. This representation also provides a simple intuitive picture of the complex noisy dynamics. This idea may also be useful to obtain bounds on the relaxation time for perturbed local Lindbladians~ \cite{liu2023dissipative}, which would otherwise typically require detailed balance, among other assumptions~ \cite{Alicki:2009,Alicki:2010,lucia2023thermalization,bardet:2023}.

\begin{table*}[t!]
\review{
\begin{tabular}{llll}
\textbf{Logical error bound}  
& \textbf{Statement} & \textbf{Noise} & \textbf{Decoder} \\
& & & \\
Upper bound*               &   Theorem~\ref{generic_bound} &  General  & Strictly error-reducing \\
Upper bound* &   Theorem~\ref{lem:all_time1}    &   Poissonian     &  Global       \\
Upper bound  &   Theorem~\ref{upper_bound_sl1}    &   Uniform Pauli     &  Global      \\
Early-time upper bound       &   Theorem~\ref{upper_bound_early_time}    &
Poissonian &
Global      \\
Lower bound*        &  Theorem~\ref{them:lower_bound}    &   Pauli
 &   Global for stabilizer codes     \\
\end{tabular}
}
\caption{\review{Summary of results. The notions of logical error, strictly error-reducing decoder, and global decoder are explained in Section~\ref{sec:model}. The notion of Poisson noise is discussed in Section~\ref{sec:poisson_error_models}, the definition of Pauli noise is given in Section~\ref{sec:upper_bound1_2}. The asterisk denotes main results.}}
\label{tab:summary}
\end{table*}

\review{The rest of the manuscript is organized as follows. In Section~\ref{sec:model}, we introduce the model of a dissipative memory and, in particular, the global decoder. Then, in Section~\ref{sec:general}, we provide a logical error bound for general dissipative memories that satisfy a few assumptions. To improve this result, in Section~\ref{sec:poisson_error_models} we introduce the Poisson error model, which, in the presence of the global decoder model, provides a better bound on the logical error probability obtained by counting the stochastic trajectories. In Section~\ref{sec:upper_bound1}, we provide an upper bound on the performance of the global decoder with Poisson noise. To justify the tightness of this result, in Section~\ref{sec:lower} we derive a lower bound on the logical error probability for a special case of models with Pauli noise. In Section~\ref{sec:asymptotic}, we derive the asymptotic late-time behavior of the process using this stochastic trajectory representation. Finally, in Section~\ref{sec:examples} we present several examples and study them numerically.  
A summary of the results can be found in Table~\ref{tab:summary}.}

\setcounter{thm}{0}

\section{Model}
\label{sec:model}

Our model of an autonomous quantum memory consists of three ingredients: a noisy quantum system, a codespace, and a recovery map.

 \textbf{Noisy system}.
We consider a noisy quantum system with Markovian noise  \cite{breuer2002theory} and accessible Hilbert space $\mathsf H$ of dimension $D$.
The evolution of the system, in the absence of external control, is characterized by the Lindblad master equation  \cite{gorini1976completely,Lindblad1976}
\be
\frac d{dt} \rho = \Delta\mathcal L_E \rho,
\ee
where $\rho$ is the density matrix of the system, $\Delta$ is the noise rate, and $\mathcal L_E$ is the \textit{error Lindbladian} that takes the form
\be\label{eq:noisy_L}
\mathcal L_E = \sum_{\mu=1}^{N}  \lambda_\mu \Bigl(E_\mu \rho E^\dag_\mu -\frac 12 \{E^\dag_\mu E_\mu,\rho\}\Bigl).
\ee
Here, $\{E_\mu\}_{\mu=1}^N$ are error operators 
and $\lambda_\mu>0$ are real weights satisfying $\sum_\mu\lambda_\mu = N$.

We choose error operators to be sufficiently general to encompass both noise models
in many-body systems, for which the Hilbert space $\mathsf H$ has a tensor-product structure, as well as
in \eczoohref[bosonic modes]{oscillators} \cite{science.aaa2085,PhysRevX.8.021005,gertler2021protecting,lescanne_exponential_2020,de_neeve_error_2022,campagne-ibarcq_quantum_2020,sivak_real-time_2023}, for which $\mathsf H$ is embedded in a single countably infinite space.
As such, the error jump operators $E_\mu$ should be interpreted as the generators of error combinations, or strings, that constitute our model's error set.
The accumulated error operators are $K_{\boldsymbol \mu} := K_{\mu_1}\dots K_{\mu_k}$ labeled by all possible error sequences $\boldsymbol \mu = (\mu_1,\dots,\mu_k)$, where $k\geq 1$, and the elementary errors in the sequence are
\begin{equation}\label{eq:elem_err}
    K_\mu \in \{E_1,\dots, E_N, E_1^\dag E_1,\dots, E_N^\dag E_N \}~.
\end{equation}
The set of elementary errors includes quadratic combinations of jump operators, such as $E^\dag_\mu E_\mu$, manifested in the last term of the Lindblad equation. When we decompose the Lindblad evolution into many stochastic trajectories, this term of the Lindblad equation characterizes the error accumulated between the error jumps. It is important to note, however, that these quadratic errors loose their relevance for unitary error operators $E_\mu$. In these situations, quadratic errors become trivial and should be ignored.

 In the case of \eczoohref[qubit codes]{qubits_into_qubits}, the error jump operators $E_\mu$ are often Pauli operators acting on a single or a few qubits, while the $K_{\boldsymbol \mu}$ are tensor products of such operators. In this scenario, many error sequences are equivalent since, for example, two Pauli errors cancel each other out, $E_\mu^2 = I$. Therefore, we count only unique errors as part of the full set of $K_{\boldsymbol \mu}$.
In another example of a \eczoohref[bosonic mode]{oscillators}  \cite{science.aaa2085,PhysRevX.8.021005,gertler2021protecting,lescanne_exponential_2020,de_neeve_error_2022,campagne-ibarcq_quantum_2020,sivak_real-time_2023} undergoing photon loss, the only jump operator $E_\mu$ is the bosonic annihilation operator, and $K_{\boldsymbol \mu}$ are powers of that operator and 
its adjoint, the creation operator. In all cases, the length of $\boldsymbol{\mu}$, denoted as $|\boldsymbol\mu|$, provides an upper bound on the number of resulting accumulated errors and quantifies their potential severity. We will sometimes refer to $|\boldsymbol\mu|$ as the weight of the error.

\textbf{Codespace}. We consider a logical quantum information encoded in the $q$-dimensional codespace $\mathsf C\subset \mathsf H$.
We further assume that $\mathsf C$ is a \eczoohref[quantum code]{qecc} with \textit{error radius} $\ell$, defined as the largest number for which all errors of weight $\ell$ and below are correctable.
In particular,
a code with error radius $\ell$ satisfies the Knill-Laflamme condition  \cite{PhysRevA.55.900} 
 \be\label{eq:knill_laflamme}
 \begin{split}
PK_{\boldsymbol{\mu}}^{\dag}K_{\boldsymbol{\nu}}&P=C_{\boldsymbol{\mu\nu}}P\\
&\forall\boldsymbol{\mu},\boldsymbol{\nu}\quad\text{such that}\quad|\boldsymbol{\mu}|,|\boldsymbol{\nu}|\leq\ell\,,
 \end{split}
 \ee
where $C_{\boldsymbol{\mu \nu}}$ are constants, and $P$ is the projector on the codespace. At the same time, there exist one or more errors of weight $\ell+1$ such that, when added to the above set of correctable errors, they violate Eq.~\eqref{eq:knill_laflamme}. 
For example, qubit codes undergoing Pauli noise with distance $d$ have error radius $\ell=\lfloor(d-1)/2\rfloor$ with respect to local Pauli noise.
On the other hand, \eczoohref[bosonic rotation codes]{bosonic_rotation}  \cite{PhysRevX.10.011058} undergoing photon loss and being able to detect $S$ photon losses have an error radius of $\lfloor S/2\rfloor$ with respect to loss errors.
\label{defs}

\begin{figure*}[t!]
    \centering
    \includegraphics[width=1\textwidth]{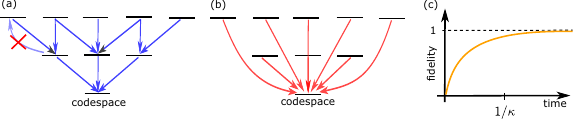}
    \caption{\textbf{Autonomous error correction.} The horizontal lines in (a) and (b) represent different error subspaces in the Hilbert space $\mathsf H$, including the codespace $\mathsf C$. (a) General model of \review{strictly error-reducing} autonomous error correction: the correction process 
    causes transitions between error states, 
    lowering the effective error weight 
    until the system reaches the codespace. \review{No transition to larger-weight errors allowed.} (b) Global decoder: the recovery process causes transitions directly into the codespace.  (c) Starting with an arbitrary initial state, the fidelity 
    (i.e.~the probability of being in the codespace) in the absence of error processes approaches one with rate $\kappa$, for both general and global decoders.}
    \label{fig:main_fig}
\end{figure*}

\textbf{Recovery process}.
We consider the autonomous recovery process described by the master equation
\be\label{eq:recovery_lindblad}
\frac{d}{dt}\rho = \mathcal L_R(\rho),
\ee
where $\mathcal L_R$ is the Lindbladian generator of the recovery dynamics. For this work, we specialize to a class of maps that can be described as decoders by making three major assumptions.

First, we assume that the recovery dynamics has a certain non-zero convergence rate, i.e. that the solution $\rho(t) : = e^{\mathcal L_Rt}(\rho)$ has a well-defined limit $\rho_\infty:=\lim_{t\to\infty}\rho(t)$ that generally depends on $\rho(0)$, and there exist both time-independent rate $\kappa>0$ and a parameter $0<\chi<\infty$ such that
\be\label{eq:contraction_lR}
\forall \rho\in \mathsf N,\, t>0: \qquad \|\rho(t)-\rho_\infty\|_1\leq \chi e^{-\kappa t}~,
\ee
where we denote $\mathsf N = \{\rho \in \mathsf {\rm End}(\mathsf H): \rho\geq0, \Tr\rho = 1\}$ to be the set of all positive semidefinite operators with unit trace and $\|A\|_1: = \Tr \sqrt{A^\dag A}$ is the trace norm. The $\kappa$ parameter plays the role of an effective correction rate. For example, this condition holds for systems that exhibit a generalized notion of the rapid mixing condition  \cite{cubitt2015stability,brandao2015area} applied to states that share the same fixed point, i.e.
\be\label{eq:contraction_rm_cond}
\begin{split}
\forall \rho, \sigma \in \mathsf N: \rho_\infty = &\sigma_\infty,\, t > 0:\\
&\|\rho(t) - \sigma(t)\|_1 \leq \chi e^{-\kappa t},
\end{split}
\ee
 where $\kappa$ is a constant and parameter $\chi$ is either a constant or depends at most polylogarithmically on the dimension of the Hilbert space $D$ (for finite-dimensional systems). For the sake of generality, we will keep arbitrary $\chi = \chi(D)<\infty$ as a function of $D$, although the result we present is primarily relevant to systems where it is a constant, $ \chi  = O(1)$.
As we will see below, for such cases, the ratio of the correction rate to the noise rate determines the ability of the decoder to counteract the noise.

The second assumption implies that the dynamics described in Eq.~\eqref{eq:recovery_lindblad} constitute a legitimate recovery process. This statement imposes a condition on the late-time recovery map, defined as $\mathcal{R} := \lim_{t \to \infty} \exp(\mathcal{L}_R t)$, requiring it to correct all errors within the error radius $\ell$, i.e.,
\begin{align}\label{eq:assumptions}
    \forall\rho_0 \in \mathsf{N}_0,\,|\boldsymbol{\mu}|,|\boldsymbol{\nu}|\leq\ell, \quad \mathcal R(K_{\bsmu}\rho_0K^\dag_{\bsnu}) \propto \rho_0,
\end{align}
 where $\mathsf{N}_0 = \{\rho \mid \operatorname{Tr}(\rho) = 1, \, \rho \geq 0, \rho \in {\rm End}(\mathsf C)\oplus 0\}$ is the set of density operators in the codespace $\mathsf C$. Here, ${\rm End}(\mathsf C) \oplus 0$ denotes the space of operators with range and support in the subspace $\mathsf C\subset \mathsf H$. In other words, any state affected by a correctable error is restored to the codespace without introducing any logical errors.

The third assumption requires that the decoder process be \textit{strictly error-reducing}. This means that any recovery operation performed at intermediate times does not increase the weight of the error state (see Fig.~\ref{fig:main_fig}a). In other words, for all $t \geq 0$, there exists a representation
\be\label{eq:good_correction}
\exp(\mathcal L_R t)(K_{\bsmu}\rho_0K^\dag_{\bsnu}) = \sum_{\bsmu'\bsnu'}a_{\bsmu\bsnu,\bsmu'\bsnu'}(t)K_{\bsmu'}\rho_0K^\dag_{\bsnu'},
\ee
where $a_{\bsmu\bsnu,\bsmu'\bsnu'}(t)=0$ if $|\bsmu'|>|\bsmu|$ or $|\bsnu'|>|\bsnu|$. 
This idealization may technically exclude certain high-performing decoders that might increase the error weight, even in negligible subsets of error configurations. Nonetheless, it enables us to derive our main result presented in Theorem~\ref{generic_bound}. Note that the right side of Eq.~\eqref{eq:good_correction} may admit multiple representations in terms of error operators and different coefficients \( a_{\bsmu\bsnu,\bsmu'\bsnu'} \). This arises because two distinct operators \( K_{\bsmu} \) and \( K_{\bsmu'} \) can have the same action on the code space, i.e., \( K_{\bsmu}|\psi_0\> \propto K_{\bsmu'}|\psi_0\> \) for all \( |\psi_0\> \in \mathsf{C} \). It suffices that at least one such representation satisfies the condition mentioned above.
Starting from any initial state $\rho(0) = \rho_0$, the evolution to time $t\geq0$ under the Lindbladian satisfying the first condition in Eq.~\eqref{eq:contraction_lR}  can be formally written as
\be\label{eq:full_model}
\rho(t) = e^{-\kappa t}\mathcal{K}_t(\rho_0)+(1-e^{-\kappa t})\mathcal{R}(\rho_0),
\ee
where $\mathcal K_t: = e^{\kappa t}\left(\exp(\mathcal L_R t)-(1-e^{-\kappa t})\mathcal R\right)$ represents the early-time evolution and whose action can always be bounded as
\be
\begin{split}
\|\mathcal K_t(\rho)\|_1 & = \|\rho_\infty+e^{\kappa t}\left(\rho(t)-\rho_\infty\right)\|_1\\
& \leq \|\rho_\infty\|_1+e^{\kappa t}\|\rho(t)-\rho_\infty\|_1\leq \chi+1,
\end{split}
\ee
where we used the contraction property in  Eq.~\eqref{eq:contraction_lR}.

\reviewtwo{Most of our results concern the special case of \textit{global decoder models},} which is a special case of the model in Eq.~\eqref{eq:full_model} where we additionally remove any partial recovery, i.e.\ we assume $\mathcal K_t \equiv \mathcal I$.
In this case, the Lindladian for the recovery process is given by
\be \label{eq:noiseless_lind}
\begin{split}
&\mathcal L_R = \kappa\bigl(\mathcal R(\rho)-\rho\bigl),\\
&\rho(t) = e^{-\kappa t}\rho_0+(1-e^{-\kappa t})\mathcal R(\rho_0).
\end{split}
\ee
In this model, the recovery dynamics return error states directly to the codespace (see Fig.~\ref{fig:main_fig}b).
The recovery \(\cal R\) may act non-locally in order to recover information into the codespace in one step.
The definition of the dissipative gap
remains the same for both general and such global decoders (see Fig.~\ref{fig:main_fig}c). \reviewtwo{Note that since in this case $\|\mathcal K_t\| = \|\mathcal I\|= 1$, the results for the global decoder can be compared to the general case above by effectively taking $\chi = 0$.}

The primary distinction between global decoders and the broader category of decoders satisfying Eq.~\eqref{eq:assumptions} lies in the applicability to real-world decoders. This broader class of decoders may include decoders for bosonic codes (in the case of an ideal implementation), as implemented in Ref.~ \cite{sivak_real-time_2023}. At the same time, it may be difficult to construct a decoder for qubit systems that is both local and consistent with Eq.~\eqref{eq:assumptions}. However, local decoders, such as those based on the sweep rule  \cite{kubica2019cellular}, must satisfy these conditions with an error that decreases rapidly with code size. The rationale is that even though the sweep rule may locally produce more physical errors, the likelihood that these operations will increase the total number of errors is exponentially small in the number of physical qubits. To prove such a fact, it is necessary to exploit the properties of a particular decoder. This aspect allows a direct comparison between the general case and local decoders. In contrast, global decoders are fundamentally different from local decoders in their approach: they map error states to the codespace in a single step. Although this one-step requirement complicates the practical implementation of global decoders, the simple mathematical framework of this model provides broader possibilities for analytical exploration.\\

\textbf{Logical errors and critical error rate.} The combination of the recovery process and the error process results in a dynamical equation that describes the autonomous quantum memory:
\be\label{eq:markov_process}
\frac d{dt} \rho = \mathcal L(\rho) := \mathcal L_R(\rho)+\Delta\mathcal L_E(\rho).
\ee
Our goal is to explore the performance of such a quantum memory. In particular, we aim to find a regime where, in the presence of noise ($\Delta>0$), the probability of a logical error after recovery for a family of codes with increasing $\ell$ vanishes polynomially or exponentially in the limit $\ell\to\infty$. 

As a measure to quantify logical errors, we consider the trace distance as a function of time
between two initially orthogonal logical pure states in the codespace. Let $G$ be the set of all pairs of orthogonal states in codespace, $G =\{(|\psi_0\>,|\psi_1\>): |\psi_0\>,|\psi_1\>\in\mathsf C, \<\psi_0|\psi_1\> = 0\}$. Given this setup, we proceed to define an error measure as
\be\label{eqs:trace_dis}
\delta(t) := 1-\min_{|\psi_0\>,\, |\psi_1\>\in G}T\Bigl(\exp(\mathcal Lt)[P_0], \exp(\mathcal Lt)[P_1]\Bigl),
\ee
where $P_0 = |\psi_0\>\<\psi_0|$, $P_1 = |\psi_1\>\<\psi_1|$, and $T(\rho,\sigma)$ is the trace distance.
\footnote{The trace distance is the maximum probability of distinguishing between two quantum states and is expressed as $T(\rho,\sigma) = \frac 12 \|\rho-\sigma\|_1$, where $\|A\|_1: = \Tr \sqrt{A^\dag A}$ is the trace norm.}

The error measure $\delta(t)$ vanishes if and only if there exists a recovery map that always returns the logical qubit to its initial, error-free configuration.
However, such a map may be complex and a priori unknown. As an alternative, we define a simpler logical error measure that quantifies our ability to recover information using the recovery map $\mathcal R$ based on the fidelity of recovery starting from a pure initial state:\review{
\be\label{eq:overlap_log_error}
\begin{split}
\epsilon(t) := 1-\min_{|\psi_0\>\in \mathsf C}\Tr\Bigl(|\psi_0\>\<\psi_0|\mathcal R\exp(\mathcal L t)|\psi_0\>\<\psi_0|\Bigl),
\end{split}
\ee
where the minimum is taken over pure states in the codespace.}

For a single logical qubit, the two error measures $\delta(t)$ and $\epsilon(t)$ are related by (see \refx[s]{appendix_error_measures})
\be\label{eq:error_ineq}
\delta(t)\leq 2\epsilon(t)~.
\ee
We will focus below on the measure $\epsilon(t)$ in Eq.~\eqref{eq:overlap_log_error}. However, some results also apply to $\delta(t)$ (see Theorems~\ref{generic_bound} and \ref{upper_bound_early_time} below). \review{Strictly speaking, both measures $\delta(t)$ and $\epsilon(t)$ are functions of the recovery map $\mathcal{R}$ and the noise process, as well as the corresponding rates. For simplicity, we omit explicit dependence on these parameters in our notations. 
}\\

\section{General bound for strictly error-reducing decoders}
\label{sec:general}

In this section, we present a result that shows that autonomous memories always exhibit exponential memory lifetimes for sufficiently small noise rate. To do this, we must first introduce a parameter that quantifies the strength of the noise.  We first introduce the correctable error space ${\mathsf E}$ consisting of the range of error states $|\psi_{w\boldsymbol\mu}\> \propto K_{\boldsymbol\mu}|w\>$, where $|w\>\in \mathsf C$ denotes the set of $q$ mutually orthogonal codewords arbitrarily chosen in the codespace $\mathsf C$, $q$ is the dimension of the codespace, and $K_{\boldsymbol\mu}$ are errors of weight less or equal to $\ell$ that satisfy the condition in Eq.~\eqref{eq:knill_laflamme}. This subspace satisfies $\mathsf C \subset {\mathsf E}\subset \mathsf H$. Then we can introduce $\|\mathcal L_E\|_{1\to1,\mathsf E}$ as the contraction norm of the superoperator $\mathcal L_E$ over the subspace ${\mathsf E}$, i.e. 
 \be\label{eq:contr_norm_E_def}
 \|\mathcal L_E\|_{1\to1,\mathsf E} : = \max_{O\in \mathsf {\rm End}(\mathsf E)\oplus 0} \frac{\|\mathcal L_E(O)\|_1}{\|O\|_1},
 \ee
 where ${\rm End}(\mathsf E) \oplus 0$ denotes the space of operators with range and support in the subspace $\mathsf E\subset \mathsf H$ and $\|\cdot\|_1$ is the 
 trace norm.
By definition, for finite systems this norm is bounded by the full norm, i.e. $\|\mathcal L_E\|_{1\to1,\mathsf E}\leq \|\mathcal L_E\|_{1\to1,\mathsf H} \equiv \|\mathcal L_E\|_{1\to1}$. The norm $\|\mathcal L_E\|_{1\to1,\mathsf E}$ could remain well-defined even if the norm $\|\mathcal L_E\|_{1\to1}$ becomes unbounded for systems with a formally infinite Hilbert space dimension, e.g.~in the bosonic codes. However, for systems with bounded local Hilbert space, such as qubits on a lattice, the difference between these norms is less significant, so one could replace $\|\mathcal L_E\|_{1\to 1,\mathsf E}$ by $\|\mathcal L_E\|_{1\to 1}$ below without losing the essence of the result. In particular, for single-qubit noise, both norms are expected to be bounded and to have the same linear scaling with the number of physical qubits $n$.

Then we can write our first result as the following theorem: 
\begin{thm} \label{generic_bound} For an arbitrary error model in Eq.~\eqref{eq:noisy_L} and recovery map in Eq.~\eqref{eq:full_model} satisfying the \reviewtwo{ assumptions outlined in Eqs.~\eqref{eq:contraction_lR}, \eqref{eq:assumptions} and \eqref{eq:good_correction}, the error rate is bounded by}
\be\label{eq:general_theorem}
\begin{split}
\epsilon(t),\delta(t)&\leq\frac 1{\chi+1}\left(\frac{(\chi+1)\Delta\|{\cal L}_{E}\|_{1\to1,\mathsf E}}{\kappa}\right)^{\ell+1}F_{\ell}\bigl(\kappa t\bigl)\,\\
&\leq\frac 1{\chi+1}\left(\frac{(\chi+1)\Delta\|{\cal L}_{E}\|_{1\to1,\mathsf E}}{\kappa}\right)^{\ell+1}\kappa t\\
&\equiv\frac 1{\chi+1}\eta^{\ell+1}\kappa t~,
\end{split}
\ee
where \reviewtwo{$\eta := (\chi+1)\Delta\|{\mathcal L}_{E}\|_{1\to 1,\mathsf E}/\kappa$}, $F_\ell(z) = zg(\ell,z)-\ell g(\ell+1,z)\leq z$ and $g(\ell,z)$ is the regularized lower incomplete Gamma function  \cite{NIST:DLMF}.
\end{thm}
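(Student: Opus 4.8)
The plan is to expand the full propagator $e^{\mathcal{L}t}$, with $\mathcal{L}=\mathcal{L}_R+\Delta\mathcal{L}_E$, as a Dyson (iterated Duhamel) series ordered by the number of error insertions and truncated after order $\ell$ with an exact integral remainder,
\begin{equation}
e^{\mathcal{L}t}=\sum_{k=0}^{\ell}\mathcal{D}_k(t)+R_\ell(t),
\end{equation}
where $\mathcal{D}_k(t)$ collects the $k$-insertion terms whose time segments all evolve under $\mathcal{L}_R$, and
\begin{equation}
R_\ell(t)=\Delta^{\ell+1}\!\!\int e^{\mathcal{L}(t-u_{\ell+1})}\,\mathcal{L}_E\cdots\mathcal{L}_E\,e^{\mathcal{L}_R u_1}\,d^{\ell+1}u
\end{equation}
runs over the ordered simplex $0\le u_1\le\cdots\le u_{\ell+1}\le t$, carrying $\ell+1$ insertions, recovery segments $e^{\mathcal{L}_R(u_{j+1}-u_j)}$ between consecutive insertions, and the \emph{full} propagator $e^{\mathcal{L}(t-u_{\ell+1})}$ on the final segment. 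I would then argue that both error measures are controlled by $\mathcal{R}R_\ell(t)$ alone. Using $\mathcal{R}e^{\mathcal{L}_R s}=\mathcal{R}$ and the correctability assumption Eq.~\eqref{eq:assumptions}, each truncated term obeys $\mathcal{R}\mathcal{D}_k(t)\propto P_0$ for $k\le\ell$, so $\mathcal{R}e^{\mathcal{L}t}(P_0)=P_0+\mathcal{R}R_\ell(t)$. For $\epsilon(t)$ this directly gives $\epsilon(t)=\Tr\!\big((I-P_0)\mathcal{R}R_\ell(t)\big)\le\|\mathcal{R}R_\ell(t)\|_1$; for $\delta(t)$ I would test the two evolved states against the Heisenberg-recovered logical observable $M=\mathcal{R}^\dagger(P_0-P_1)$, which obeys $-I\le M\le I$ because $\mathcal{R}^\dagger$ is unital, and lower-bound the trace distance by $\tfrac12\Tr\!\big(M e^{\mathcal{L}t}(P_0-P_1)\big)$, reducing $\delta(t)$ to the same two remainders.

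The crux is to bound $\|\mathcal{R}R_\ell(t)\|_1$ while producing only $\ell+1$ factors of $\|\mathcal{L}_E\|_{1\to1,\mathsf{E}}$ and extracting the decay in $\kappa$. First I would record the structural facts that codewords are fixed points, $e^{\mathcal{L}_R s}(P_0)=P_0$ (Eq.~\eqref{eq:good_correction} with trivial labels), that each $\mathcal{L}_E$ raises the error weight by at most one on each side, and that each $e^{\mathcal{L}_R s}$ is strictly error-reducing. Hence after $j$ insertions the state lies in $\mathrm{End}(\mathsf{E})$ for $j\le\ell$, so every one of the $\ell+1$ insertions acts on an input supported on $\mathsf{E}$ and may be charged $\|\mathcal{L}_E\|_{1\to1,\mathsf{E}}$ --- exactly $\ell+1$ such factors, which is why the full norm never appears. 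The decisive observation concerns the intermediate recovery segments: each is fed a state that is both \emph{traceless} (it carries at least one $\mathcal{L}_E$, which annihilates the trace, with all other maps trace preserving) and \emph{correctable} (weight $\le\ell$); since $\mathcal{R}$ sends any correctable operator to a multiple of $P_0$, a traceless correctable operator is sent to $0$. Splitting $e^{\mathcal{L}_R\tau}=e^{-\kappa\tau}\mathcal{K}_\tau+(1-e^{-\kappa\tau})\mathcal{R}$ as in Eq.~\eqref{eq:full_model}, each such segment collapses to $e^{-\kappa\tau}\mathcal{K}_\tau$, contributing a genuine decay $e^{-\kappa\tau}$ and a factor $\|\mathcal{K}_\tau\|_{1\to1}\le\chi+1$; dropping the leftmost $\mathcal{R}e^{\mathcal{L}(t-u_{\ell+1})}$ (trace-norm nonincreasing) then yields the integrand bound $\|\mathcal{L}_E\|_{1\to1,\mathsf{E}}^{\ell+1}(\chi+1)^{\ell}\,e^{-\kappa(u_{\ell+1}-u_1)}$.

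It remains to evaluate the time-ordered integral. Integrating out the free middle variables $u_2,\dots,u_\ell$ gives $(u_{\ell+1}-u_1)^{\ell-1}/(\ell-1)!$, and substituting $v=u_{\ell+1}-u_1$ together with $\int_0^w v^{\ell-1}e^{-\kappa v}\,dv=\kappa^{-\ell}\gamma(\ell,\kappa w)$ reduces the simplex integral to
\begin{equation}
\int e^{-\kappa(u_{\ell+1}-u_1)}\,d^{\ell+1}u=\frac{1}{\kappa^{\ell+1}}\int_0^{\kappa t} g(\ell,z)\,dz.
\end{equation}
One integration by parts, using $\partial_z g(\ell,z)=z^{\ell-1}e^{-z}/\Gamma(\ell)$ and $\gamma(\ell+1,z)=\ell!\,g(\ell+1,z)$, identifies the last integral with $F_\ell(\kappa t)=\kappa t\,g(\ell,\kappa t)-\ell\,g(\ell+1,\kappa t)$. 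Collecting $\Delta^{\ell+1}\|\mathcal{L}_E\|_{1\to1,\mathsf{E}}^{\ell+1}(\chi+1)^{\ell}\kappa^{-(\ell+1)}=\tfrac{1}{\chi+1}\eta^{\ell+1}$ reproduces the stated bound, and the elementary estimate $F_\ell(z)\le z$ yields the linear-in-$t$ form.

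The main obstacle is the second paragraph. Setting up the exact remainder so that precisely $\ell+1$ error factors appear, with the tail absorbed into a single contractive propagator, is bookkeeping; but the genuinely essential point is the traceless-plus-correctable cancellation that turns each intermediate recovery segment into a true contraction $e^{-\kappa\tau}$. Without it one only obtains an uncontrolled sum over all orders in $\Delta\|\mathcal{L}_E\|_{1\to1,\mathsf{E}}$ rather than the sharp $\eta^{\ell+1}$ scaling with its $\kappa$- and $\chi$-dependence; this is exactly where the three recovery assumptions Eqs.~\eqref{eq:contraction_lR}, \eqref{eq:assumptions}, and \eqref{eq:good_correction} must be used in concert.
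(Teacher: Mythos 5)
Your proposal is correct and follows essentially the same route as the paper's proof: the same Dyson expansion truncated after $\ell$ error insertions, the same key cancellation (a traceless correctable operator is annihilated by $\mathcal{R}$, which—via the split $e^{\mathcal{L}_R\tau}=e^{-\kappa\tau}\mathcal{K}_\tau+(1-e^{-\kappa\tau})\mathcal{R}$—removes every intermediate-recovery term and converts each segment into a genuine $e^{-\kappa\tau}$ contraction), the same charging of $\|\mathcal{L}_E\|_{1\to1,\mathsf{E}}$ per insertion and $\chi+1$ per segment, and the same test operator $\mathcal{R}^\dagger(P_0-P_1)$ to handle $\delta(t)$. The only difference is presentational: you work in the time domain with an explicit Duhamel remainder and evaluate the ordered simplex integral directly to obtain $F_\ell(\kappa t)$, whereas the paper performs the identical resummation in Laplace space with diagrammatic notation and recovers $F_\ell$ as an inverse Laplace transform.
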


\begin{figure*}[t!]
    \centering
    \includegraphics[width=1\textwidth]{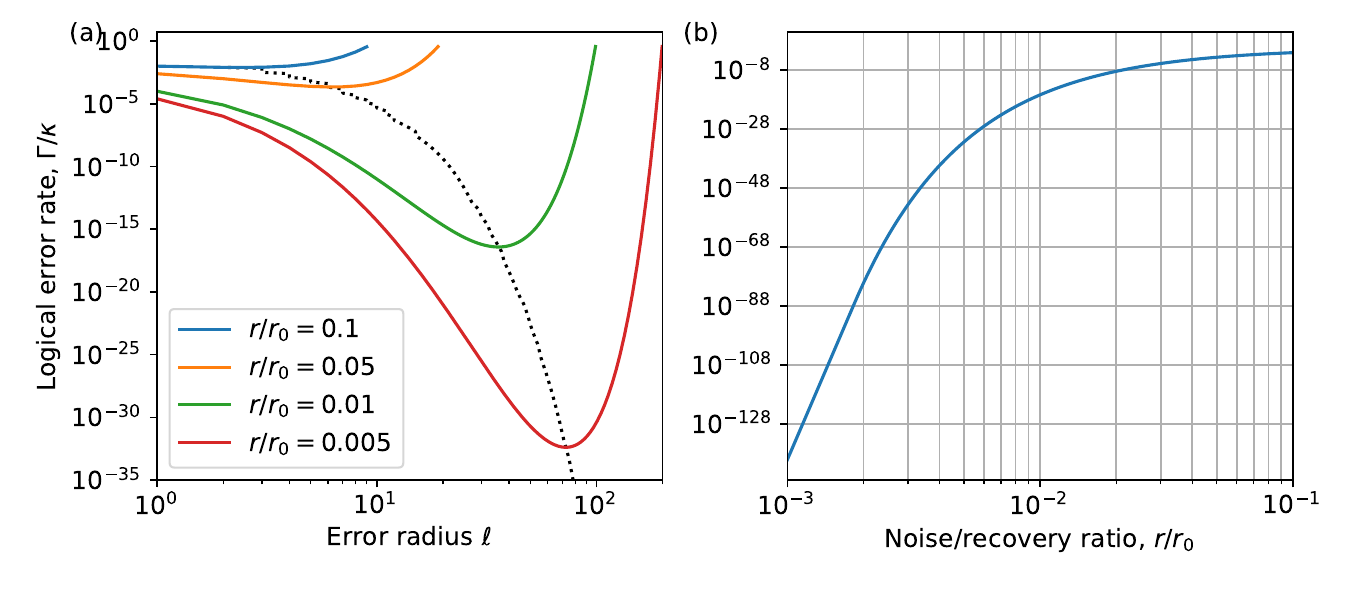}
    \caption{\textbf{Logical rate for generic qubit-based models.} (a) The logical rate as a function of the error radius for different values of the noise-to-recovery ratio $r$. The function has a minimum for $\ell_{\rm min} = O(r_0/r)$, the positions of the minima are shown as the dotted black curve. (b) The minimum logical error rate ratio as a function of the noise-to-recovery ratio, see Eq.~\eqref{eq:gamma_min}.}
    \label{fig:soft_threshold}
\end{figure*}

 To prove this result, we use Dyson's perturbative expansion of the evolution superoperator with $\Delta$ as a small parameter. We show that the first $\ell$ perturbative terms in the series vanish, and use a re-summation of the remaining terms to obtain a non-perturbative expression. The proof can be found in \refx[s]{appendix_generic_bound}. Note that this result holds not only for finite-dimensional systems, but also for certain (infinite-dimensional) bosonic systems where ${\mathsf E}$ is a finite subspace. This bound can be seen as a generalization of the model-specific bound proposed in  \cite{lebreuilly2021autonomous}, which essentially describes the special case in Eq.~\eqref{eq:noiseless_lind} (see next section for better bound for this case). In contrast, the bound in Eq.~\eqref{eq:general_theorem} holds for any recovery dissipation that satisfies the criteria listed in the previous section. It also captures nonlinear early-time error scaling.

For short times, defined as $\kappa t\ll1$, the special function has polynomial scaling $F_\ell(\kappa t)\sim(\kappa t)^{\ell+1}$, and the error bound scales as $\epsilon(t) = O(((\chi+1)\Delta \|\mathcal L_E\|_{1\to1,\mathsf E} t)^{\ell+1})$. This reflects the fact that, in this limit, the error is described by perturbation theory and the lowest non-vanishing terms are of order $\ell+1$. 
In contrast, in the non-perturbative regime $\kappa t\gg 1$, the logical error bound grows linearly and its rate is proportional to $\eta^{\ell+1} $, where \reviewtwo{$\eta = (\chi+1)\Delta\|{\mathcal L}_{E}\|_{1\to 1,\mathsf E}/\kappa$.} Therefore, this rate is exponentially suppressed in $\ell$ if the noise is small enough that $\eta<1$.

\review{Let us analyze this suppression for certain multi-qubit codes for constant recovery rate $\kappa$ and \reviewtwo{prefactor $\chi$}. In multi-qubit systems, the Lindblad operators \reviewtwo{are expected to satisfy the symptotic equivalence $\|\mathcal L_E\|_{1\to1,\mathsf E}\leq \|\mathcal L_E\|_{1\to1} \sim cn$ in the limit of large $n$, where} the dimensionless multiplication factor $c$ reflects the rate of local error processes. One can also find QLDPC codes, such as those defined on expander graphs  \cite{lin2022good,leverrier2022quantum,Panteleev2022asymptotically}, whose radius is proportional to the number of qubits, such that $\ell \geq \alpha n$ for some $\alpha<1$.} As a result, the logical error rate is $\Gamma :=\delta(t)/t\leq (\ell r/r_0)^{\ell+1}\kappa$, where $r = \Delta/\kappa$ is a renormalized noise to recovery ratio and \reviewtwo{$r_0 \sim \alpha/(\chi+1)c$ in the limit of large $n$}. Unlike in active error correction, the logical error rate bound does not become arbitrarily small for ever-increasing $\ell$. In fact, for small $r\ll 1$, the bound has its minimum at $\ell \simeq r_0/er$, where $e$ is the base of the natural logarithm. This dependency is shown in Fig.~\ref{fig:soft_threshold}(a). The minimum logical error rate bound is
\be\label{eq:gamma_min}
\Gamma_{\rm min} = O\left(\kappa e^{-r_0/r}\right), \qquad r = \Delta/\kappa.
\ee
 This is the minimum error rate bound for fixed $\kappa$. Therefore, we show that, for constant $\kappa$, there exists a universal upper bound on the error rate that is sufficiently smaller than the original error rate $\Delta$ once $r<r_0$. In this regime, a small improvement in the error rate $\Delta$ yields, at least, an exponential  improvement in the logical error rate bound, see Fig.~\ref{fig:soft_threshold}(b). This is a universal ``soft threshold'' result applicable to autonomous error-correcting codes based on linear-error-radius QLDPC codes.

Of course, we know that this bound is not tight: for some autonomous codes we can have arbitrarily small errors for some $\Delta/\kappa$ below the threshold  \cite{Alicki:2010,pastawski:2011,liu2023dissipative}. However, our result suggests that, in general, to get such unbounded error reduction, we must have \reviewtwo{$\kappa\propto \|\mathcal L_E\|_{1\to1, \mathsf E}$}, i.e., growing with system size. In fact, by considering a particular example below in Section~\ref{sec:lower}, we show that this scaling condition cannot be relaxed in general.

\section{Global decoders and a Poissonian noise}
\label{sec:poisson_error_models}

In this section, we derive tighter bounds than the generic-decoders bound in Eq.~\eqref{eq:general_theorem} by focusing on global decoders in Eq.~\eqref{eq:noiseless_lind} and specializing to unitary error jump operators (including Pauli noise).
Such bounds allow for a better grasp of the capabilities of a recovery rate $\kappa$ that is sublinear in the number of qubits.
 
\begin{dfn}[\textbf{Poissonian noise}]\label{dfn:pois_noise} The error model is Poissonian if the error operators in Eq.~\eqref{eq:noisy_L} \review{are unitary, i.e. they} satisfy
$
E^\dag_\mu E_\mu = I
$,
where $I$ is the identity matrix. 
\end{dfn}

We refer to this dynamics as \textit{Poissonian} as it maps directly to a Poissonian point process  \cite{kossakowski1972quantum}. 
While this restriction still covers generic Pauli noise for qubit, \eczoohref[modular-qudit]{qudits_into_qudits}  \cite{gottesman1999}, and \eczoohref[Galois-qudit]{galois_into_galois} codes  \cite{bierbrauer2000quantum,ketkar2006nonbinary}, it does not include processes such as photon loss or additive Gaussian white noise applicable to bosonic codes.

Assuming Poissonian errors, we can rewrite the Lindblad equation in Eq.~\eqref{eq:markov_process} as
\be
\frac{d}{dt}\rho=\mathcal{L} (\rho) = \gamma\sum_{\mu=0}^{N}p_{\mu}\bigl(\mathcal{E}_{\mu}(\rho)-\rho\bigl)\,,\label{eq:standard_{f}orm}
\ee
where 
\be{\cal E}_{\mu}(\rho)=\begin{cases}
\mathcal{R}(\rho) & \mu=0\\
E_{\mu}\rho E_{\mu}^{\dag} & \mu>0
\end{cases}\nonumber
\ee
and $\gamma = \kappa+ N\Delta$,
$p_0 = \kappa/\gamma$, and $p_{\mu>0} = \lambda_\mu\Delta/\gamma$.
Notably, parameters $p_\mu$ are positive, satisfy the normalization condition $\sum_\mu p_\mu = 1$, and can therefore  be treated as probabilities. 

The analytical solution of Eq.~\eqref{eq:standard_{f}orm} can be obtained from the exponentiation of $\mathcal{L}$ and the consequent decomposition of the exponent using Taylor series, which takes the form of a sum of multiple stochastic trajectories:
\begin{equation}
\label{eq:poisson_form}
\begin{split}
\exp(\mathcal{L}t)&=e^{-\gamma t}\exp\left(\gamma t\sum_{\mu=0}^{N}p_{\mu}\mathcal{E}_{\mu}\right)\\ 
&=\sum_{\boldsymbol{\mu}\in F}p(\boldsymbol{\mu},t)\mathcal{E}_{\boldsymbol{\mu}},
\end{split}
\end{equation}
where the set $F$ includes individual 
\textit{trajectories} $\boldsymbol \mu$ of any length, including consequent errors and recoveries, and 
$\mathcal{E}_{\boldsymbol{\mu}} = \mathcal{E}_{\mu_k}\circ\dots \circ\mathcal{E}_{\mu_1}$ is the trajectory map, where $\circ$ denotes the composition of superoperators (we omit it below). The probability of a trajectory of error weight $|\boldsymbol\mu|=k$ occurring at time $t$ has the form

\be\label{eq:stoch_prob}
p(\boldsymbol \mu,t) = \frac{1}{k!}(\gamma t)^k e^{-\gamma t}p_{\mu_1}\dots p_{\mu_k},
\ee
where probabilities $p_{\mu}$ are defined below Eq.~\eqref{eq:standard_{f}orm}.
It is easy to confirm that, for a given $t$, the probabilities $p(\boldsymbol \mu,t)$ for any time $t$ sum to one. 
As a result, we can interpret the dynamics as a homogeneous Poisson point process, in which error and recovery events occur at random times following a Poisson distribution with average spacing $\gamma^{-1}$.

\review{\section{Upper bounds for global decoder with Poissonian noise}\label{sec:upper_bound1}}

\review{In this section, we first present a rigorous upper bound on the error measure $\epsilon(t)$ for $n$-qubit codes subject to Poissonian noise (Theorem~\ref{lem:all_time1}). We then present two other upper bounds that give a better estimate of the logical error rate in more specific settings (Theorem~\ref{upper_bound_sl1} and Theorem~\ref{upper_bound_early_time}). } 

\review{
Many of the quantum error-correcting codes can still reliably store the logical information when they are subjected to much larger noise than allowed by the Knill-Laflamme condition in Eq.~\eqref{eq:knill_laflamme}. This is due to the existence of a \textit{error threshold} of the codes. To account for the improved error tolerance beyond the Knill-Laflamme condition, we introduce the \textit{tolerable error weight}.    
}

\review{\begin{dfn}[\textbf{Tolerable error weight}] \label{dfn:tol_error_rate}Consider a family of $n$-qubit error-correcting codes for increasing $n$. Each code has a codespace $\mathsf{C}$, code distance $d = d(n)$, an error channel $\mathcal{E}$ and a recovery map $\mathcal R$. We say that the code family has a tolerable error weight $h$ if $h = h(n)$ is an integer-valued function such that, 
for any $|\psi\rangle\in\mathsf{C}$ and non-negative integer $k\leq h$, the following inequality holds:
\be \label{sm:eq:dfn_threshold}
\mathcal R\mathcal E^k\bigl(|\psi\>\<\psi|\bigl)- (1-\xi)|\psi\>\<\psi|\geq 0,
\ee
where $\xi = 2^{-\Omega(d)}$ is independent of $\ket{\psi}$. Let $G$ be the set of all possible constants $f$ such that $h(n) = \lfloor fn\rfloor$ is a tolerable error weight. We say that the code family has a threshold $f_c$ if $f_c = \sup G >0$.\label{sm:def:threshold}
\end{dfn}
The inequality in Eq.~\eqref{sm:eq:dfn_threshold} means that the eigenvalues of the operator on the left side of the inequality are non-negative. Since $\mathcal R\mathcal E^k$ is a quantum channel, this inequality is equivalent to the statement $\mathcal R\mathcal E^k|\psi\>\<\psi|= (1-\xi)|\psi\>\<\psi|+\sum_i p_i|\psi_i\>\<\psi_i|$ for some $|\psi_i\>\in\mathsf{C}$ and $p_i\geq 0$ that satisfy $\sum_i p_i\leq \xi$. In other words, for any initial logical state, the error occurs with $\xi$-small probability. By this definition, an $n$-qubit code with a threshold can recover the encoded quantum information with arbitrary precision, even if it is subjected to a large number of $k\leq h$ error channel rounds. Examples of error-correcting codes that have a threshold against single-qubit Pauli noise include the repetition code and many quantum stabilizer codes such as the toric code and color codes.
For all codes, error weights below error radius $\ell$ are by definition tolerable. }

\review{With the concept of tolerable error weight, we are now ready to state the upper bound for systems subject to Poissonian noise.}
\begin{thm}\label{lem:all_time1}
Consider a family of $n$-qubit error-correcting codes for increasing $n$. Each code has a codespace $\mathsf{C}$ with code distance $d = d(n)$, a Poissonian noise model $\{E_\mu\}$, a recovery map $\mathcal R$ and a tolerable error weight $h = h(n)$ \review{with respect to the total error channel $\mathcal{E}=\frac{1}{N}\sum_{\mu=1}^N\mathcal{E}_{\mu}$}. Then there exists a small parameter $\xi=2^{-\Omega(d)}$ such that the logical error probability for the global decoder in Eq.~\eqref{eq:poisson_lemma}, for any $t\geq 0$, satisfies
\begin{align}\label{eq:late_time_bbound_sl}
\begin{split}
 \epsilon(t)\leq 
 1-\exp\Bigl(-(1-\xi)&N\Delta\left(\frac{N\Delta}{\kappa+N\Delta}\right)^{h}t\\
 &-\xi(\kappa+N\Delta)t\Bigl).
 \end{split}
\end{align}
\end{thm}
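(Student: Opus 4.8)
The plan is to exploit the Poissonian trajectory decomposition in Eq.~\eqref{eq:poisson_form}, which represents $\exp(\mathcal{L}t)$ as a convex mixture of channels $\mathcal{E}_{\bsmu}$ indexed by stochastic trajectories $\bsmu$ of a homogeneous marked Poisson process: events arrive at total rate $\gamma=\kappa+N\Delta$, each being independently a recovery (probability $\kappa/\gamma$) or an error (probability $N\Delta/\gamma$), carrying the weight $p(\bsmu,t)$ of Eq.~\eqref{eq:stoch_prob}. Since $\epsilon(t)=1-\min_{|\psi_0\>\in\mathsf{C}}\Tr\bigl(P_0\,\mathcal{R}\exp(\mathcal{L}t)[P_0]\bigr)$ by Eq.~\eqref{eq:overlap_log_error}, the whole problem reduces to lower-bounding, uniformly in $|\psi_0\>$, the recovered overlap $\sum_{\bsmu}p(\bsmu,t)\,\Tr\bigl(P_0\,\mathcal{R}\mathcal{E}_{\bsmu}[P_0]\bigr)$ by the survival probability of a purely classical counting process.

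First I would partition each trajectory into the segments delimited by its recovery events (together with the initial segment and the final segment closed off by the extra $\mathcal{R}$ in Eq.~\eqref{eq:overlap_log_error}), and call a segment \emph{good} if it contains at most $h$ error jumps. Summing $p(\bsmu,t)$ over the error \emph{types} (but not their count) inside a good segment reconstructs an integer power $\mathcal{E}^{k}$, $k\le h$, of the averaged error channel of Definition~\ref{dfn:tol_error_rate}; the operator-positivity form in Eq.~\eqref{sm:eq:dfn_threshold} then guarantees that each segment-plus-recovery returns the \emph{exact} input logical state with probability at least $1-\xi$ and otherwise scatters into some other codeword with probability $\le\xi$. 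Chaining these resets, a trajectory whose segments are all good preserves $P_0$, and hence contributes in full to the overlap, provided each of its recoveries lands in the $1-\xi$ branch. A logical error can therefore be attributed only to one of two \emph{failure events}: (i) a segment accumulating more than $h$ errors, or (ii) a recovery on a good segment that nevertheless takes the $\xi$ branch.

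Next I would bound the instantaneous rate of these failure events by Poisson thinning, turning the survival probability into a clean exponential. Conditioned on an error jump, the probability that its $h$ immediately preceding events were all errors (no intervening recovery) is $(N\Delta/\gamma)^{h}=(N\Delta/(\kappa+N\Delta))^{h}$; hence failures of type (i) are produced at rate at most $N\Delta\,(N\Delta/(\kappa+N\Delta))^{h}$, and the factor $1-\xi$, inserted to avoid double-counting events already assigned to type (ii), yields the first exponent. Failures of type (ii) occur at most once per event, hence at rate at most $\xi\gamma=\xi(\kappa+N\Delta)$, the second exponent. Because both marked sub-processes are Poissonian and time-homogeneous, the probability that neither failure occurs in $[0,t]$ is at least $\exp\bigl(-(1-\xi)N\Delta(N\Delta/(\kappa+N\Delta))^{h}t-\xi(\kappa+N\Delta)t\bigr)$, which lower-bounds the recovered overlap and reproduces Eq.~\eqref{eq:late_time_bbound_sl}; uniformity in $|\psi_0\>$ is inherited from Definition~\ref{dfn:tol_error_rate}, whose constant $\xi$ is state-independent.

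The main obstacle is the quantum bookkeeping of the second step, i.e.\ converting ``all segments good'' into a genuine probabilistic lower bound on the overlap. The subtleties are that (a) the specific error operators along a trajectory are non-commuting and time-ordered, whereas Definition~\ref{dfn:tol_error_rate} is stated for powers of the \emph{averaged} channel, so the type-sum reconstructing $\mathcal{E}^{k}$ must be performed segment by segment while respecting the weights $\lambda_\mu$; (b) the $\ge 0$ operator form of Eq.~\eqref{sm:eq:dfn_threshold}, rather than a mere trace identity, is what licenses treating $1-\xi$ as an honest branching probability that multiplies across successive resets; and (c) a recovery on a \emph{bad} segment may scatter into an arbitrary codeword, so such trajectories can only be discarded (bounded below by zero), which is precisely why the argument delivers an upper bound on $\epsilon(t)$ rather than an equality. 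Obtaining the prefactor $1-\xi$ and the tail $(N\Delta/(\kappa+N\Delta))^{h}$ without over- or under-counting is the delicate point; the remaining resummation of the Poisson weights is routine.
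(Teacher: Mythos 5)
Your overall strategy is the same as the paper's: pass to the Poisson trajectory picture, cut each trajectory into segments delimited by recovery events, use the tolerable-error-weight inequality \eqref{sm:eq:dfn_threshold} to treat each good segment-plus-recovery as a reset that returns the exact logical state with probability at least $1-\xi$, and then lower-bound the probability of trajectories with no bad segment. Your identification of the two failure types and your final exponent both match the paper. However, there is a genuine gap at the step you dismiss as routine: the conversion of a failure \emph{rate} into the survival probability $\exp(-\text{rate}\cdot t)$. The type-(i) failure events (an error jump whose $h$ immediately preceding events are all errors) do \emph{not} form a Poisson process, and the justification ``because both marked sub-processes are Poissonian and time-homogeneous'' is false: these events are defined by runs and are strongly positively correlated --- if one event is a failure, the next event is a failure with probability $p_1=N\Delta/(\kappa+N\Delta)$, not $p_1^{h+1}$. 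For a dependent point process, the probability of seeing no event in $[0,t]$ is not $e^{-\lambda t}$ with $\lambda$ the mean intensity; an inequality in the needed direction (clustering helps survival) is plausible but must be proven, and it is exactly where the paper does its real work. Concretely, the paper conditions on the total number $k$ of jumps and proves the combinatorial inequality $\sum_{\{\textbf{a}_h\}} p_{a_1}\cdots p_{a_k}\geq (1-p_1^{h+1})^k$ [Eq.~\eqref{eq:induct0}], via the nontrivial resummation identity in Eq.~\eqref{sm:eq:final_lem2}; only after this discrete bound does the Poisson average over $k$ in Eq.~\eqref{eq:PnC} produce the exponential.

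A second, smaller gap of the same kind: you multiply the two survival exponentials for type-(i) and type-(ii) failures, which tacitly assumes the two failure mechanisms can be decoupled. They are not independent (the number of recoveries and the run structure of errors are dependent), so the factorization also needs an argument. The paper sidesteps both issues simultaneously by proving the uniform per-trajectory bound $p_e(k)\leq 1-[(1-\xi)(1-p_1^{h+1})]^k$ [Eq.~\eqref{eq:ineq}], where the $(1-\xi)^k$ factor comes from the induction in Eq.~\eqref{sm:eq:recover_prob} (using that the number of resets is at most $k$), and the product structure in $k$ makes the final Poisson resummation immediate. So your proposal is the right plan with the right answer, but the delicate counting that constitutes the core of the proof is asserted rather than established, and the stated justification for it does not hold.
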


We provide a proof of the theorem in \refx[s]{appendix_b}. 
Similar to the asymptotic estimate in Section~\ref{sec:asymptotic}, the proof utilizes the Poissonian picture.\footnote{It is worth noting that this bound holds for a more generic class of errors: instead of requiring $E^{\dag}_{\mu}E_{\mu} = 1$ for all $\mu$, the jump operators only have to satisfy $\sum_{\mu}E^{\dag}_{\mu}E_{\mu}=N$. This class of errors also has the same convenient properties as the Poissonian model~\eqref{eq:poisson_form}. Thus, the bound also applies in situations involving non-unitary errors such as those described by Pauli ladder operators $\sigma^\pm=(X\pm iY)/2$, provided $\sum_{\mu}E^{\dag}_{\mu}E_{\mu}=N$.} 
It uses the fact that the dynamics of the system is an ensemble average over trajectories where the single-shot recovery and the errors happen stochastically. 
Along a given trajectory, the occurrence of a recovery event resets the system back to the codespace. If no more than $h$ errors take place between any such consecutive resets, the recovery is almost guaranteed to send the system back to the correct codeword (up to a small failure rate $\xi$). 
We can therefore obtain an upper bound for the logical error probability by lower bounding the probability of trajectories consisting of only such faithful resets. 

Consider an ideal recovery map that corrects only errors within the error radius, i.e. a tolerable error weight $h = \ell$ and $\xi = 0$. It therefore follows that
\be\label{eq:upper_bound_1}
\epsilon(t)
\leq
1-\exp\left(-\frac{N \Delta t}{(1+\kappa/N\Delta)^\ell}\right).
\ee
This result can be compared to Eq.(3) in  \cite{lebreuilly2021autonomous}. A notable difference is that the current logical error bound in Eq.~\eqref{eq:upper_bound_1} remains meaningful even for $N\Delta/\kappa>1$ in the regime $\ell\gg1$.  This upper bound is also shown in Fig.~\ref{fig:fig1} along with the others.

\begin{figure*}[t!]
    \centering
    \includegraphics[width=1\textwidth]{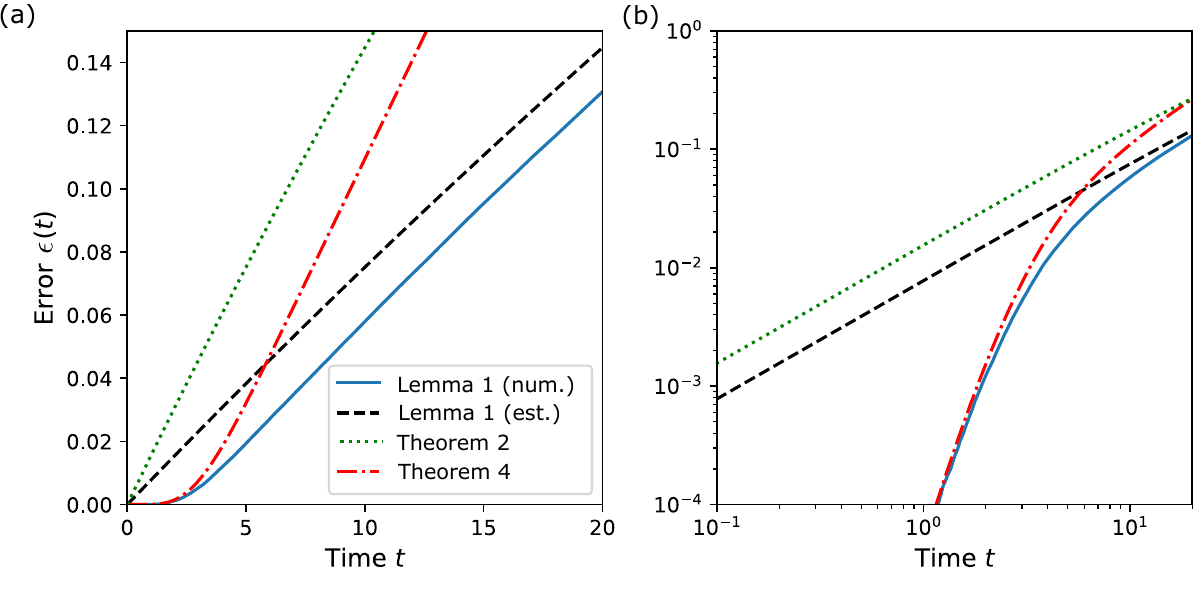}
    \caption{\review{\textbf{The results for a Poissonian error model}. 
The bounds and estimates for the logical error $\epsilon(t)$ for a code with $\ell = 6$ and $\kappa = N\Delta = 1$. The plots illustrate: the bound in Eq.~\eqref{eq:upper_bound_1} is shown as a dashed green curve, the bound in Eq.~\eqref{eq:early-bound} is shown as a dotted red curve, the bound in Eq.~\eqref{eq:poisson_lemma} with $p(t)$ evaluated numerically is shown as a blue curve, and its estimate in Eq.~\eqref{eq:asymptotic dynamics} is shown as a dashed black curve. Panel (a) plots the linear scale, while panel (b) plots the logarithmic scale.
}}
    \label{fig:fig1}
\end{figure*}

\review{Below, we consider two additional bounds for Poissonian noise that work for more specific settings. In the first bound, we assume that the noise is Pauli noise. The second bound is less tight but applies to all types of Poissonian noise and works better for early times.}

\review{\subsection{Upper bound for uniform Pauli noise}
\label{sec:upper_bound1_2}}

\review{The upper bound can be further improved for the system consisting of many qubits. In these systems, a stronger result can be obtained if we restrict the errors to Pauli noise (see definition below) and assume that the rate is uniform. This model is one of the standard error models considered in practical quantum computation.}

\review{\begin{dfn}[\textbf{Pauli noise}]\label{dfn:pauli_noise} Consider a system consisting of qubits. The channel $\mathcal E(\cdot)=\sum_\mu E_\mu \cdot E_\mu$ is defined as Pauli noise if $E_\mu \subset \mathsf P$, where $\mathsf  P$ is the set of generalized multi-qubit Pauli operators.
\end{dfn}}

\review{An example of a Pauli noise model is depolarizing noise, where the errors at each location $i$ are described by three Pauli jump operators, $E_{3i+1} = X_i,E_{3i+2} = Y_i$, and $E_{3i+3} = Z_i$, where $i=0,\dots,n-1$. Another example is dehasing $E_i = Z_i$.}

\review{To incorporate the error threshold while exploiting the simple structure of Pauli noise, we consider a subset of codes subject to uniform Pauli noise, i.e., in Eq.~\eqref{eq:noisy_L} $\lambda_{\mu} = \lambda_{\mu'}$ for all $\mu,\mu'$, and have a tolerable error weight as defined below. }

\review{\begin{dfn}[\textbf{Tolerable error weight for uniform Pauli noise}]\label{dfn:twpn} Consider the conditions of Definition~\ref{sm:def:threshold} and additionally assume that error channel $\mathcal E$ is Pauli noise. Then we say that the code family has a tolerable error weight $h$ for uniform Pauli noise if $h = h(n)$ is an integer-valued function such that, 
for any $|\psi\rangle\in\mathsf{C}$ and non-negative integer $k\leq h$, the following inequality holds:
\be \label{sm:eq:dfn_threshold_pauli}
\mathcal{R}\mathcal{Q}_k\bigl(|\psi\>\<\psi|\bigl)- (1-\xi)|\psi\>\<\psi|\geq 0,
\ee
where $\xi = 2^{-\Omega(d)}$ is independent of $\ket{\psi}$. Here, $\mathcal{Q}_0 = \mathcal{I}$ is the identity channel and $\mathcal{Q}_k(\cdot) = \frac{1}{|\textbf{S}_k|}\sum_{\{\mu\}_k\in S_k} E_{\{\mu\}_k}(\cdot) E_{\{\mu\}_k}^{\dag}$, where $\textbf{S}_k$ is the set of all the possible $k$ distinct error indices and $E_{\{\mu\}}=\prod_{\mu\in\{\mu\}}E_\mu$.
\end{dfn}}

\review{ We expect many common error-correcting codes, such as quantum stabilizer codes, to have a tolerable error weight for uniform single-qubit Pauli noise. As a consistency check, if a code family is subject to uniform Pauli noise and has a tolerable error weight for uniform Pauli noise, then it has a tolerable error weight according to Definition~\ref{sm:def:threshold}. }

Since Pauli operators mutually commute or anticommute and their square is identity, two identical errors in the sequence cancel each other out. This fact means that some of the physical error sequences with length greater than $h$ may not contribute to the logical error due to such cancellation. This allows us to improve the upper bound, which leads to the following theorem.
\begin{thm} \label{upper_bound_sl1}
Under the conditions of Theorem~\ref{lem:all_time1} and consider a code family that is \review{subject to Pauli noise model with uniform rates, i.e. in Eq.~\eqref{eq:noisy_L} $\lambda_{\mu} = \lambda_{\mu'}$ for all $\mu,\mu'$, and has a tolerable error weight $h(n)$ for uniform Pauli noise}, the logical error in Eq.~\eqref{eq:overlap_log_error} satisfies
\begin{equation}\label{eq:ineq0}
    \epsilon(t)
    \leq 1-
   \exp\Bigl(-(1-\xi)N\Delta s_1t-\xi(\kappa+N\Delta)t\Bigl),
\end{equation}
where $s_1$ is the solution to the recurrence relation
\begin{equation}\label{eq:recurrence}
     s_v = \frac{v}{N}p_1s_{v-1} + \left(1-\frac{v}{N}\right)p_1s_{v+1},\ s_0 = 0,\ s_{h+1} = 1,
\end{equation}
with $p_1 = N\Delta/(\kappa+N\Delta)$ and $v\in\{0,1,2,\cdots, h+1\}$.
\end{thm}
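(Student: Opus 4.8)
The plan is to reuse the Poisson-trajectory machinery behind Theorem~\ref{lem:all_time1} and to replace its crude ``every physical error adds one to the accumulated weight'' counting with a birth--death random walk that accounts for Pauli cancellations. Recall from Eq.~\eqref{eq:poisson_form} that $\mathcal R\exp(\mathcal L t)$ decomposes into a convex mixture of trajectories $\boldsymbol\mu$, each an ordered string of recovery events (probability $p_0=\kappa/\gamma$ per event) and uniformly chosen Pauli errors (total probability $p_1=N\Delta/\gamma$ per event, with each of the $N$ operators equally likely, so $p_0=1-p_1$). As in Theorem~\ref{lem:all_time1}, a trajectory returns the encoded state faithfully provided that, in every segment between two consecutive recoveries (and in the final segment closed off by the read-out map $\mathcal R$), the accumulated error is tolerable; the residual possibility of an intrinsic mis-correction is controlled by $\xi$ and ultimately contributes the $-\xi(\kappa+N\Delta)t$ term in the exponent. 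Since the errors are Pauli and square to the identity, the operator accumulated in a segment depends only on the subset of jump indices applied an odd number of times, and its $\mathcal Q_k$-weight is the size $v$ of that subset.

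First I would track $v$ along a segment as a Markov chain: an error event toggles a uniformly random index, so from weight $v$ the subset shrinks to $v-1$ with conditional probability $v/N$ and grows to $v+1$ with conditional probability $1-v/N$, while a recovery event closes the segment. I would then define $s_v$ as the probability that this walk, started at weight $v$, reaches the fatal weight $h+1$ before either returning to the codespace ($v=0$) or being interrupted by a recovery. First-step analysis gives exactly the recurrence~\eqref{eq:recurrence} with boundary data $s_0=0$ and $s_{h+1}=1$, the factor $p_1$ being the probability that the closing event is an error rather than a recovery. A useful consistency check is that suppressing the down-transitions (sending the $v/N$ terms to zero, i.e.\ ignoring cancellations) collapses the recurrence to $s_v=p_1 s_{v+1}$, giving $s_1=p_1^{\,h}$ and reproducing the rate of Theorem~\ref{lem:all_time1}; the genuine walk has $s_1<p_1^{\,h}$, which is the source of the improvement.

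With $s_1$ in hand, I would convert the per-excursion failure probability into an effective Poisson failure rate exactly as in the proof of Theorem~\ref{lem:all_time1}: excursions out of the codespace are initiated by the first error after a reset, which occurs at rate $N\Delta$, and each such excursion independently reaches the fatal weight with probability $s_1$, so logical failures are dominated by a Poisson process of rate $(1-\xi)N\Delta s_1$, to which the intrinsic $\xi$-mis-correction rate (bounded crudely by $\xi\gamma=\xi(\kappa+N\Delta)$) is added. This yields the survival bound $\epsilon(t)\le 1-\exp(-[(1-\xi)N\Delta s_1+\xi(\kappa+N\Delta)]t)$ claimed in Eq.~\eqref{eq:ineq0}.

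The step I expect to be the main obstacle is making the reduction to the single scalar $s_1$ genuinely rigorous rather than merely heuristic. Two points need care. First, Definition~\ref{dfn:twpn} controls only the \emph{averaged} channel $\mathcal Q_k$, whereas a fixed trajectory produces one specific $k$-subset of errors; I would close this gap by grouping trajectories and using that, under uniform Pauli noise, the ensemble average over all subsets of a given size reconstitutes precisely $\mathcal Q_k$, so the averaged density matrix really does see $\mathcal Q_k$ at each recovery. Second, I must argue that declaring failure the instant the walk hits $h+1$---and using the raw subset size as the weight---only over-counts logical errors: higher-order Pauli identities can merge active indices and lower the true correctable weight, and an excursion that climbs above $h$ may descend again before the next recovery, so both effects push the true failure probability below the walk's $s_1$ and keep the resulting bound valid.
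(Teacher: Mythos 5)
Your proposal is correct and follows essentially the same route as the paper's proof in \refx[s]{appendix_c}: the Poisson-trajectory decomposition, the decomposition $\mathcal{Q}_1^k = \sum_i r_{ki}\mathcal{Q}_i$ via permutation symmetry of uniform Pauli noise, the birth--death walk with transition probabilities $v/N$ and $1-v/N$ whose first-step analysis yields the recurrence~\eqref{eq:recurrence}, the over-counting argument for stopping at weight $h+1$, and the conversion to the exponential bound through the Poisson sum. The one step you flag as the main obstacle---making the reduction to the scalar $s_1$ rigorous---is precisely what the paper supplies, by constructing the auxiliary stochastic process $\Phi(k)$ of $k$ consecutive independent walks and exhibiting a one-to-one correspondence between its first $k$ jumps and the weight-restricted trajectory sums, so that $p_e(k)\leq 1-\bigl[(1-\xi)(1-p_1 s_1)\bigr]^k$.
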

The proof is similar to that of Theorem~\ref{lem:all_time1} and is given in \refx[s]{appendix_c}.
The recurrence relation 
corresponds to a classical random walk, where a left or right move corresponds to an application of an error operator that increases or reduces the weight of the resulting 
total error\footnote{
Here, we 
count the error weight 
after cancelling all the repeating errors in a jump trajectory described by a string of elementary errors (Eq.~\eqref{eq:elem_err}).}. 
Note that the recurrence relation Eq.~\eqref{eq:recurrence} always has a solution. This can be obtained by first initializing the recurrence with $s_0 = 0$ and $s_1 = 1$, and then dividing the resulting sequence $\{s_v\}$ by $s_{h+1}$ (to satisfy the boundary condition $s_{h+1}=1$).

We analyze the recurrence relation in Eq.~\eqref{eq:recurrence} numerically. In particular, we compute $s_1$ for different $\kappa$ and $\Delta$, also varying $N$ up to $10^7$ (see \refx[s]{appendix_c} for the numerical results). We have the following empirical observations:\footnote{Here we use $\sim$ to denote asymptotic scaling with respect to a small/large parameter, and $\propto$ as a standard notion of proportionality (i.e., difference by a constant factor).} \\
\begin{enumerate}
    \item For $0< h/N<1/2$, we find that $\log s_1 \sim -\frac{\kappa}{\Delta}$ for $N\gg 1$.\\

    \item For $h/N = 1/2$, we find that $\log s_1 \sim -\frac{\kappa}{4\Delta}\log N$ for $N\gg 1$.
\end{enumerate}
\vspace{0.25cm}

For generic codes satisfying $h/N <1/2$, the Pauli-noise bound in Theorem~\ref{upper_bound_sl1} yields an error rate lower than that of the general-noise bound in Theorem~\ref{lem:all_time1}. In the special case when $h/N = 1/2$, Theorem~\ref{upper_bound_sl1} predicts a memory lifetime that increases as $N^{\kappa/4\Delta-1}$ when $\kappa>4\Delta$. This case applies, for example, to the classical repetition code subject to the single-qubit bit-flip noise or the surface code subject to only qubit-erasure noise~ \cite{stace:2009} or only Pauli-$Y$ noise~ \cite{tuckett:2019}. While the bound in Theorem 1 fails to capture it, Theorem 2 predicts an unbounded lifetime as $N\to\infty$ when $\kappa>4\Delta$. 

\review{The result of Theorem~\ref{upper_bound_sl1} also holds when the jump operators are not Pauli operators but \textit{Pauli-type}, i.e., they only satisfy $E^2_\mu = I$ and $E_\mu E_{\mu'} = \pm E_{\mu'}E_\mu$ for all $\mu,\mu^{\prime}$, but not necessarily $E_\mu\in P$. For example, we can consider a noise model where at each qubit the error is described by the same single Pauli jump operator, i.e.~for a site $i$ the error operator is $E_i = c_xX_i+c_yY_i+c_zZ_i$ where $c_x,c_y,c_z\in\mathbf{R}$ and $c_x^2+c_y^2+c_z^2=1$.}

\review{\subsection{Upper bound tight for early times}}
\label{sec:early_time}

Applying the bound we found in Eq.~\eqref{eq:late_time_bbound_sl} for early times indicates that $\epsilon(t)= O(t)$. However, numerical simulations shown in Fig.~\ref{fig:5q_and_toric_code} suggest that linear scaling is only relevant at late times. Below, we establish a complementary bound $\epsilon(t)= O(t^{\ell+1})$ that confirms the slower-than-linear growth at early times.
This bound also applies to the trace distance $\delta(t)$.\\

\begin{thm} \label{upper_bound_early_time}
\review{Assuming global decoder in Eq.~\eqref{eq:noiseless_lind} and any Poissonian noise model} in Eq.~\eqref{eq:poisson_form}, the logical error rate is bounded as
\be\label{eq:early-bound}
\epsilon(t),\delta(t) \leq  \frac{1}{(1+\kappa/N \Delta)^{\ell+1}}F_\ell\Bigl((\kappa +N \Delta)t\Bigl),
\ee
where $F_\ell(x)$ is defined in Theorem \ref{generic_bound}.
\end{thm}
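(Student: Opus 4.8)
The plan is to exploit the Poissonian trajectory decomposition in Eq.~\eqref{eq:poisson_form} together with the global-decoder structure in Eq.~\eqref{eq:noiseless_lind}, and to compare the resulting early-time expansion against the general bound in Theorem~\ref{generic_bound}. The key observation is that a global decoder is precisely the special case $\mathcal K_t \equiv \mathcal I$, for which the paper already remarks that the general result applies with the effective substitution $\chi = 0$. So the most economical route is to verify that the hypotheses of Theorem~\ref{generic_bound} hold for the global decoder with $\chi=0$ and with the correct identification of the rate parameters, and then simply read off the claimed inequality.

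First I would match parameters. In the Poissonian formulation the total event rate is $\gamma = \kappa + N\Delta$, and the contraction estimate \eqref{eq:contraction_lR} holds for the global decoder with the effective correction rate playing the role of $\kappa$ and $\chi=0$. The subtlety is that the clean exponential convergence toward $\mathcal R(\rho_0)$ in Eq.~\eqref{eq:noiseless_lind} happens at rate $\kappa$, but the full Poissonian generator mixes errors and recoveries at total rate $\gamma$. I would therefore plug into the first line of Eq.~\eqref{eq:general_theorem} the replacements $\chi \to 0$, so that the prefactor $1/(\chi+1)\to 1$ and $\eta \to \Delta\|\mathcal L_E\|_{1\to1,\mathsf E}/\kappa$, and then use the Poissonian structure to sharpen $\|\mathcal L_E\|_{1\to1,\mathsf E}$ and the time argument. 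Concretely, for unitary (Poissonian) jumps the contraction norm satisfies $\Delta\|\mathcal L_E\|_{1\to1,\mathsf E} \le N\Delta$, the factor $\Delta\|\mathcal L_E\|/\kappa$ becomes $N\Delta/\kappa$, and the combination that appears naturally is the error probability per event $N\Delta/(\kappa+N\Delta)$, giving the prefactor $(1+\kappa/N\Delta)^{-(\ell+1)}$, while the natural time variable becomes $\gamma t = (\kappa+N\Delta)t$ inside $F_\ell$.

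The cleaner and self-contained alternative, which I would favor to avoid delicate reuse of the general proof, is to redo the Dyson expansion directly in the Poissonian picture. Using Eq.~\eqref{eq:poisson_form}, $\epsilon(t)$ (and likewise $\delta(t)$) is controlled by the total probability weight of trajectories that fail to return to the correct codeword. Since the recovery map $\mathcal R$ corrects every error of weight $\le \ell$ exactly, by Eq.~\eqref{eq:assumptions}, any trajectory in which at most $\ell$ error jumps occur before the final recovery contributes zero logical error. Hence only trajectories containing at least $\ell+1$ error events between consecutive recoveries can contribute, and I would bound $\epsilon(t)$ by the probability that at least $\ell+1$ errors accumulate. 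Using the Poisson weights in Eq.~\eqref{eq:stoch_prob} with error-event probability $p=N\Delta/\gamma$ and summing, the tail reorganizes into the regularized incomplete Gamma function, yielding exactly $F_\ell(\gamma t)/(1+\kappa/N\Delta)^{\ell+1}$.

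The main obstacle will be establishing the bound for $\delta(t)$ as well as $\epsilon(t)$, and in correctly bookkeeping the cross-terms $K_{\bsmu}\rho_0 K^\dag_{\bsnu}$ in the density matrix rather than just diagonal error populations. Because $\delta(t)$ is defined through a trace distance over pairs of orthogonal codewords, I would need the off-diagonal coherences to be suppressed at the same order; the strictly-error-reducing structure guarantees that both $|\bsmu|$ and $|\bsnu|$ must exceed $\ell$ for a term to survive the recovery, so the trace-norm contribution of every surviving trajectory is again weighted by a Poisson tail starting at index $\ell+1$. Reassembling this tail into $F_\ell$ via the identity $F_\ell(z)=z\,g(\ell,z)-\ell\,g(\ell+1,z)$ and checking the boundary cases ($F_\ell(0)=0$, and the crossover from the $(\gamma t)^{\ell+1}$ early-time scaling to the linear regime) completes the argument.
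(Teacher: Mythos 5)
Your second, trajectory-based route is correct for the $\epsilon(t)$ half of the statement, and it is genuinely different from what the paper does. The paper never counts trajectories: it continues the Laplace-space diagrammatic resummation of Theorem~\ref{generic_bound}, using the Poissonian identity $\Delta\mathcal L_E = N\Delta(\mathcal E-\mathcal I)$ to absorb the $-N\Delta\,\mathcal I$ piece into the resolvents, which shifts every denominator from $(s+\kappa)$ to $(s+\kappa+N\Delta)$; after $\ell$ such steps a single remainder term survives and is bounded by operator norms before inverting the Laplace transform. Your route instead combines Lemma~\ref{stoch_bound} with a union bound: every non-faithful trajectory has a first position $k\geq\ell+1$ at which a run of $\ell+1$ consecutive error events completes; since the marks and the arrival times of the Poisson process are independent, that event has probability $p_1^{\ell+1}g(k,(\kappa+N\Delta)t)$ with $p_1=N\Delta/(\kappa+N\Delta)$, and summing over $k$ gives exactly the claimed bound because $\sum_{k\geq\ell+1}g(k,z)=z\,g(\ell,z)-\ell\,g(\ell+1,z)=F_\ell(z)$ (both sides equal the mean overshoot $\mathbb E[(M-\ell)_+]$ of a Poisson variable $M$ with mean $z$). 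You assert this ``reorganization'' rather than prove it, so you must spell out the union bound over run positions and this Gamma-function identity; once you do, the $\epsilon(t)$ bound follows by an argument that is more elementary than the paper's and bypasses the resolvent machinery entirely. By contrast, your first route---plugging $\chi=0$ into Theorem~\ref{generic_bound}---cannot work on its own: it yields $(N\Delta/\kappa)^{\ell+1}F_\ell(\kappa t)$, and no bookkeeping of constants converts $\kappa$ into $\kappa+N\Delta$ simultaneously in the prefactor and in the time argument; that improvement is the content of the theorem, not a substitution.

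The genuine gap is the $\delta(t)$ half. From the trajectory picture, the arguments you sketch give at best $\delta(t)\leq 2p(t)$: whether you use the witness $Q=\mathcal R^\dag(P_0-P_1)$ or simply Eq.~\eqref{eq:error_ineq}, $\delta\leq2\epsilon$, each non-faithful segment can degrade the trace distance by \emph{twice} its probability weight, and this factor of $2$ is not an artifact of loose bookkeeping. A bad segment whose recovery implements a coherent logical flip $P_0\leftrightarrow P_1$ really does cost twice in trace distance: in the three-qubit repetition code under bit-flip noise, two flips on distinct qubits followed by a recovery event map $P_0$ to $P_1$ exactly, so the two evolved states drift toward each other from both sides. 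Your remark that both $|\bsmu|$ and $|\bsnu|$ must exceed $\ell$ for a term to survive the recovery does not remove this factor. So, as written, your proposal proves the stated inequality for $\epsilon(t)$ but only a $2\times$ weaker inequality for $\delta(t)$.

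I would add that this particular shortfall may not be repairable, and is arguably not a defect of your method. Tracking the paper's own resummation carefully, after $\ell$ iterations the remainder term is $\Delta\mathcal L_E\,\mathcal E^{\ell}$ acting on $\delta\rho_0$ (one Lindbladian, $\ell$ channels), not $\mathcal E^{\ell+1}$ as written; converting the last $\mathcal L_E$ into $N\Delta(\mathcal E-\mathcal I)$ produces an extra nonvanishing term, and bounding the honest remainder reintroduces a factor of order $2$. Consistently, a rate estimate for the repetition code with $\kappa\gg N\Delta$ (logical flip rate $\approx\tfrac23 N\Delta\,p_1$, hence $\delta(t)\approx\tfrac43\,p_1^{2}(\kappa+N\Delta)t$ in the intermediate-time linear regime) exceeds the right-hand side of Eq.~\eqref{eq:early-bound}, suggesting the constant claimed for $\delta(t)$ cannot be achieved, whereas your union-bound argument for $\epsilon(t)$ and the factor-$2$ version for $\delta(t)$ are both safe.
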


The proof of this result can be found in \refx[s]{appendix_generic_boundarly_time_diagrams}.
It follows the same steps as the proof of Theorem~\ref{generic_bound}, with a slightly different resummation procedure made possible by the Poissonian error assumption. 
Since $\|\mathcal L_E\|_{1\to1}\leq N$ for $\mathcal L_E$ that consists of $N$ independent processes with unitary jumps, this yields a bound that improves on Theorem~\ref{generic_bound} by a factor at most $\bigl((\chi+1)(1+N\Delta/\kappa)\bigl)^{\ell}$ for the case of a Poissonian error model.

The result from Theorem \ref{upper_bound_early_time} provides an accurate scaling for the logical error at early times, while also capturing the error rate at later times. In the limit $x\to0$, we observe the scaling $F_\ell(x)\sim x^{\ell+1}$. On the other hand, when $x\gg1$, this function behaves as $F_\ell(x) \sim x$. This means that the error rate at early times grows as $\epsilon(t) \sim  (tN\Delta)^{\ell+1}$ and saturates to a linear rate.  In the late-time regime, the logical error satisfies
\be
\begin{split}
\epsilon(t) &\leq \left(\frac{N \Delta}{\kappa+N \Delta}\right)^{\ell+1}(\kappa+N\Delta) t = \frac{N \Delta t}{(1+\kappa/N \Delta)^{\ell}}.
\end{split}\label{eq:short_t_loose}
\ee
Thus, this bound is consistent with the Taylor expansion of the bound in Eq.~\eqref{eq:upper_bound_1} if the logical error rate is small. It is also illustrated in Fig.~\ref{fig:fig1} along with the bounds we derived previously.

\section{Lower bound: stabilizer global decoder with Pauli noise}\label{sec:lower}

\begin{figure*}[t!]
\centering
\includegraphics[width=1\textwidth]{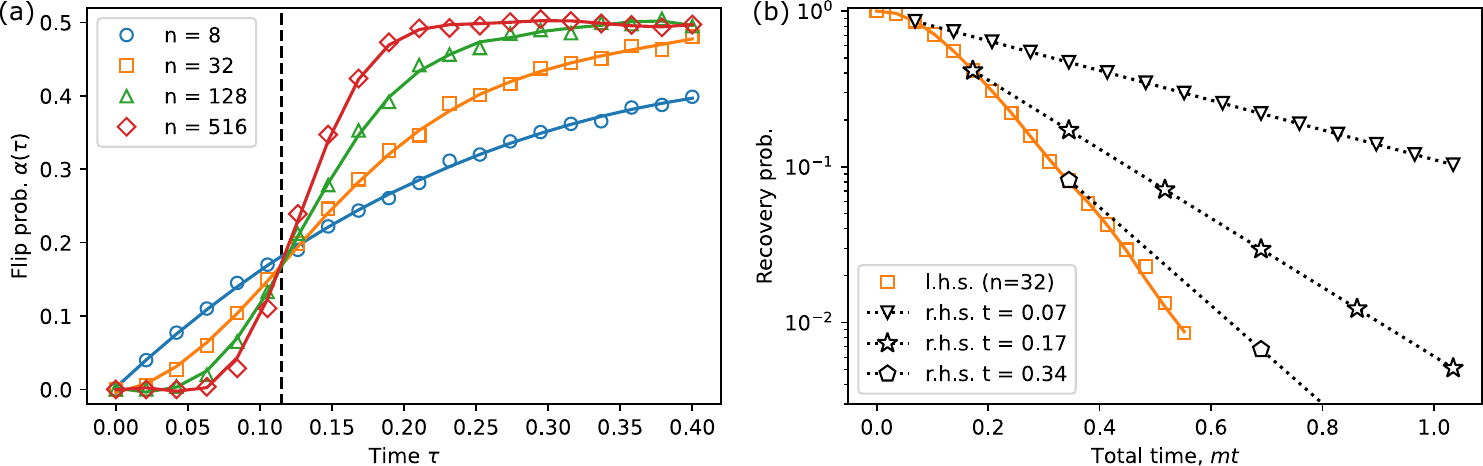}
\caption{ \textbf{Logical error after recovery for 2D toric code.} Here, we put $\Delta=1$ and consider only bit-flip errors $E_\mu \equiv X_\mu$, where $\mu$ enumerates the physical qubits. We utilize the recovery map $\mathcal R$, based on the minimum-weight-matching 
algorithm (see Section~\ref{sec:examples} for details). (a) The flip probability $\alpha_n(\tau)$ in Eq.~\eqref{eq:flip_prob} in the absence of recovery ($\kappa = 0$) for a different number of spins on a square lattice. Dots represent numerical data for the $L\times L$ lattice (the total number of qubits $n = 2L^2$ given in the legend), lines are smooth interpolations. For times larger than $\tau_c \approx 0.115$, marked by a dashed line, the logical flip probability is always nonzero, approaching the value of $0.5$ for large codes. (b) Comparison of the recovery probabilities for $n=32$ qubits \review{and $\kappa = 0$} in two cases: (i) a single recovery at the end, as given by the left-hand side of Eq.~\eqref{eq:inequality_cond} (orange dots), and (ii) repeated recoveries, as given by the right-hand side, for different times $t$ (dotted curves) and different numbers of recoveries $m$ (white triangles, stars, and pentagons). For all parameters, repeated recoveries in case (ii) perform better than applying the final recovery only in case (i). \review{We also numerically verified, but not included here, Assumption 2 for cases when $\kappa\neq 0$.}}
\label{fig:tc_check}
\end{figure*}

The above upper bounds on the logical error rate scale no faster than $N\exp(-c\kappa/\Delta)$ in the recovery rate $\kappa$. \review{However, this does not rule out the possibility that global decoders may be able to suppress errors more efficiently. Using the example of \eczoohref[qubit stabilizer codes]{qubit_stabilizer} \cite{PhysRevLett.78.405,gottes1997} subject to Pauli errors, we show below that this is generally not the case. }
 We derive a \textit{lower bound} on the error rate that decreases exponentially with the recovery-to-noise ratio $\kappa/\Delta$, and is independent of the number of qubits $n$. 
In other words, it is impossible to reduce the logical error rate to zero in the $n\to\infty$ limit while maintaining a constant recovery rate. This result shows that the upper bounds we derive for general and global decoders are tight: to improve them, we need to add conditions that would exclude global decoders, or at least exclude the assumptions we use below.

\review{To prove the lower bound, we make the following two additional natural assumptions: } \\

\begin{itemize}
 \item 
 
   \review{\textbf{Assumption 1}. \textit{In the absence of recovery, the noise process $\mathcal L_E$ generates a nonzero probability of a logical flip, i.e.
   \begin{equation}\label{eq:flip_prob}
        \alpha_n(\tau):=\Tr\Bigl(\ket{ 1}\bra{ 1}\mathcal{R}e^{\hL_{E}\tau}\ket{ 0}\bra{ 0}\Bigl)>a,
    \end{equation}
\review{where $\tau> \tau_c \propto 1/\Delta$ and $a\in(0,1)$ are size independent parameters, $\mathcal L_{E}$ is the noise generator, $|w\>$ are the logical states. }}}\\

\review{This assumption ensures that the noise model is sufficiently powerful.  The threshold time $\tau_c$ determines the time after which the logical information is no longer perfectly recoverable. In Fig.~\ref{fig:tc_check}(a) we illustrate this property by plotting $\alpha_n(\tau)$ for different system sizes of the 2D toric code (see Section~\ref{sec:examples} for a definition). In this plot, one can clearly observe the threshold time $\tau_c$ after which the logical state appears highly mixed after the recovery map for any code size.}\\

    \item \review{\textbf{Assumption 2}. The recovery map $\mathcal R$ and total Lindblad operator $\mathcal{L}=\mathcal L_R+\Delta\mathcal L_E$ satisfy
    \begin{equation}\label{eq:inequality_cond}
    \begin{split}
        \Tr[\ket{ 0}\bra{ 0}\mathcal{R}&e^{\hL mt}\ket{ 0}\bra{ 0}]\\
        &\leq  \Tr[\ket{ 0}\bra{ 0}\left( \mathcal{R} e^{\hL t}\right)^m\ket{ 0}\bra{ 0}]
    \end{split}
    \end{equation}
for any time $t\geq 0$ and integer $m\geq0$.} \\

This assumption states that interleaving the noisy evolution with more recovery operations is more effective than not doing it. We illustrate this property in Fig.~\ref{fig:tc_check}(b) for the 2D toric code. Specifically, we plot both the right-hand side and the left-hand side of this inequality for different times $t$ and integers $m$. Notably, this inequality holds even at times $t>\tau_c$, when the average probability of errors for each physical qubit exceeds the code threshold. While we do not generally expect good performance from the decoder in this case, the toric code demonstrates some improvement even above the threshold.
\end{itemize}

\vspace{0.25cm}

Given these assumptions, we can show the existence of the minimal logical error rate. 
\review{
\begin{thm} \label{them:lower_bound} Consider an $n$-qubit quantum stabilized code, the noise process $\mathcal L_E$ in Eq.~\eqref{eq:noisy_L} with rate $\Delta$ where $\{E_\mu\}$ represent Pauli noise according to Definition~\ref{dfn:pauli_noise}, and the recovery process $\mathcal L_R$ in Eq.~\eqref{eq:noiseless_lind} generated by the recovery map $\mathcal R$ with rate $\kappa\geq 0$, which together satisfy Assumptions 1 and 2. Then the logical error at $t\geq 0$ satisfies
\be
\begin{split}
&\epsilon(t) \geq \frac{1}{2}\Bigl(1-\exp(-\Delta_{\rm eff}t)\Bigl),\\
&\Delta_{\rm eff} = -\frac{\kappa \log(1-ae^{-\kappa\tau_c})}{\kappa \tau_c+\log 2}.
\end{split}
\ee
\end{thm}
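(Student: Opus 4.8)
The plan is to lower-bound $\epsilon(t)$ by tracking a single logical state and to use Assumption 2 to turn the continuous process into a product of identical ``rounds'' whose net logical action is a classical symmetric bit-flip channel. Since $\epsilon(t)$ is a minimum over inputs, fixing $\ket{0}$ already gives $\epsilon(t)\ge 1-p_0(t)$ with $p_0(t):=\Tr(\ket0\bra0\,\mathcal R e^{\mathcal L t}\ket0\bra0)$. Reading Assumption 2 with round length $s=t/m$ (so that the single-recovery evolution time on the left equals $t$) yields $p_0(t)\le x_m:=\Tr[\ket0\bra0(\mathcal R e^{\mathcal L s})^m\ket0\bra0]$, which is the direction we need: Assumption 2 states precisely that interleaving extra recoveries can only improve retention.

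Next I would show that one round $\mathcal R e^{\mathcal L s}$ acts on the logical populations as a \emph{symmetric} bit-flip. For a stabilizer code under Pauli noise, $e^{\mathcal L s}$ is a Pauli-type channel, so it sends $\ket0\bra0$ to an incoherent mixture over syndrome sectors; the global decoder $\mathcal R$ returns each sector to a logical basis state, so the round output is diagonal, $(1-q)\ket0\bra0+q\ket1\bra1$. Because $\mathcal L_E$ commutes with conjugation by the logical $\bar X$ (the $\pm$ signs from (anti)commuting Paulis cancel) and a stabilizer recovery is covariant under $\bar X$ (equal syndromes, equal corrections), the flip probability out of $\ket1$ equals that out of $\ket0$. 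Hence the $2\times2$ transfer matrix is symmetric and $x_m=\tfrac12\bigl(1+(1-2q)^m\bigr)$. This symmetry is essential, as it pins the stationary logical population to $1/2$ and thereby produces the prefactor $\tfrac12$ in the claimed bound; a biased chain would relax to a value $\neq 1/2$ and the stated inequality could fail.

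To lower-bound the per-round flip probability $q=q(s)$, I split $\mathcal L=\kappa\mathcal R+(\Delta\mathcal L_E-\kappa\mathcal{I})$ and keep only the recovery-free term of the Dyson expansion of $e^{\mathcal L s}$; it carries weight $e^{-\kappa s}$ and equals $e^{-\kappa s}e^{\Delta\mathcal L_E s}$. Every discarded trajectory contains at least one $\mathcal R$, is completely positive, and contributes non-negatively to $\Tr(\ket1\bra1\,\cdot\,)$, so $q\ge e^{-\kappa s}\,\Tr(\ket1\bra1\,\mathcal R e^{\Delta\mathcal L_E s}\ket0\bra0)=e^{-\kappa s}\alpha_n(s)$. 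For $s>\tau_c$ Assumption 1 gives $\alpha_n(s)>a$, hence $q\ge a\,e^{-\kappa s}$.

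Finally I combine the pieces. Assuming $q\le 1/2$ (otherwise the memory decays even faster and the bound is immediate), $(1-2q)^m$ is monotone, so using $1-2q\le 1-2a\,e^{-\kappa s}$ and choosing the round length $s=\tau_c+(\log 2)/\kappa$ makes $2a\,e^{-\kappa s}=a\,e^{-\kappa\tau_c}$ and $m=t/s=\kappa t/(\kappa\tau_c+\log 2)$, so that
\be
x_m\le \tfrac12\Bigl(1+(1-a\,e^{-\kappa\tau_c})^{\kappa t/(\kappa\tau_c+\log 2)}\Bigr)=\tfrac12\bigl(1+e^{-\Delta_{\rm eff}t}\bigr),
\ee
with $\Delta_{\rm eff}$ exactly as stated; therefore $\epsilon(t)\ge 1-x_m\ge\tfrac12(1-e^{-\Delta_{\rm eff}t})$. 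I expect the main obstacle to be the second step: rigorously establishing that the per-round logical output is diagonal and that $\mathcal R$ is $\bar X$-covariant, so that the dynamics genuinely reduce to a symmetric two-state chain rather than a biased one. Cleanly isolating the recovery-free window so that Assumption 1 can be invoked even though $e^{\mathcal L s}$ already contains internal recoveries is the other delicate point, although the positivity argument above is what resolves it.
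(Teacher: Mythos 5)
Your proof is correct and follows essentially the same route as the paper's Appendix G argument: restrict to the fixed initial state $\ket{0}$, invoke Assumption 2 with round length $s=\tau_c+\log 2/\kappa$ to pass to the interleaved process $(\mathcal R e^{\mathcal L s})^m$, show each round acts on the logical populations as a \emph{symmetric} bit-flip channel (the paper does this via the parity superoperators $\mathcal X,\mathcal Z$ and their commutation with $\mathcal R$ and the Pauli noise, exactly your $\bar X$-covariance argument), and lower-bound the per-round flip probability by the recovery-free contribution $e^{-\kappa s}\alpha_n(s)>a e^{-\kappa s}$, where your Dyson-expansion/positivity step is equivalent to the paper's restriction to Poisson trajectories with no recovery events. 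The only blemish is your parenthetical dismissal of the case $q>1/2$ as ``immediate'' (a symmetric chain with $q>1/2$ oscillates, so $x_m$ can be near $1$ for even $m$ and the bound does not follow trivially); however, the paper itself simply asserts $p_1\leq 1/2$ without proof, so your argument reproduces the paper's reasoning including this shared implicit assumption.
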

A proof for the bound is given in Appendix. Note that $\tau_c\propto 1/\Delta$ by Assumption 1. In the limit $\Delta \ll \kappa$, we have
\begin{equation}
\Delta_{\rm eff} \simeq \frac{a}{\tau_c}e^{-\kappa\tau_c}.
\end{equation}
Suppose $\tau_c=c/\Delta$, for some constant $c>0$, the effective error rate is $\Delta_{\text{eff}}=O(\Delta e^{-c\kappa/\Delta})$ as $\Delta\to 0$. This result shows that the lower bound on the logical error rate decreases exponentially with the ratio of the recovery rate to the error rate. However, it also follows that under the assumptions in Eqs.\ (\ref{eq:flip_prob},\ref{eq:inequality_cond}) (which we have verified for the 2D toric code, but expect to hold more generally), it is impossible to obtain a quantum memory with either exponential or polynomial lifetime using only a constant recovery rate. Some careful readers may notice that Theorem~\ref{upper_bound_sl1} suggests that the repetition code subject to a single type of Pauli noise has a memory lifetime that grows polynomially with system size. This is not inconsistent with the lower bound. One can verify numerically that assumption 1 is violated by the repetition code, i.e., the timescale $\tau_c$ for logical information corruption under purely noisy dynamics grows with system size. }

\begin{figure*}[t!]
    \centering
    \includegraphics[width=1\textwidth]{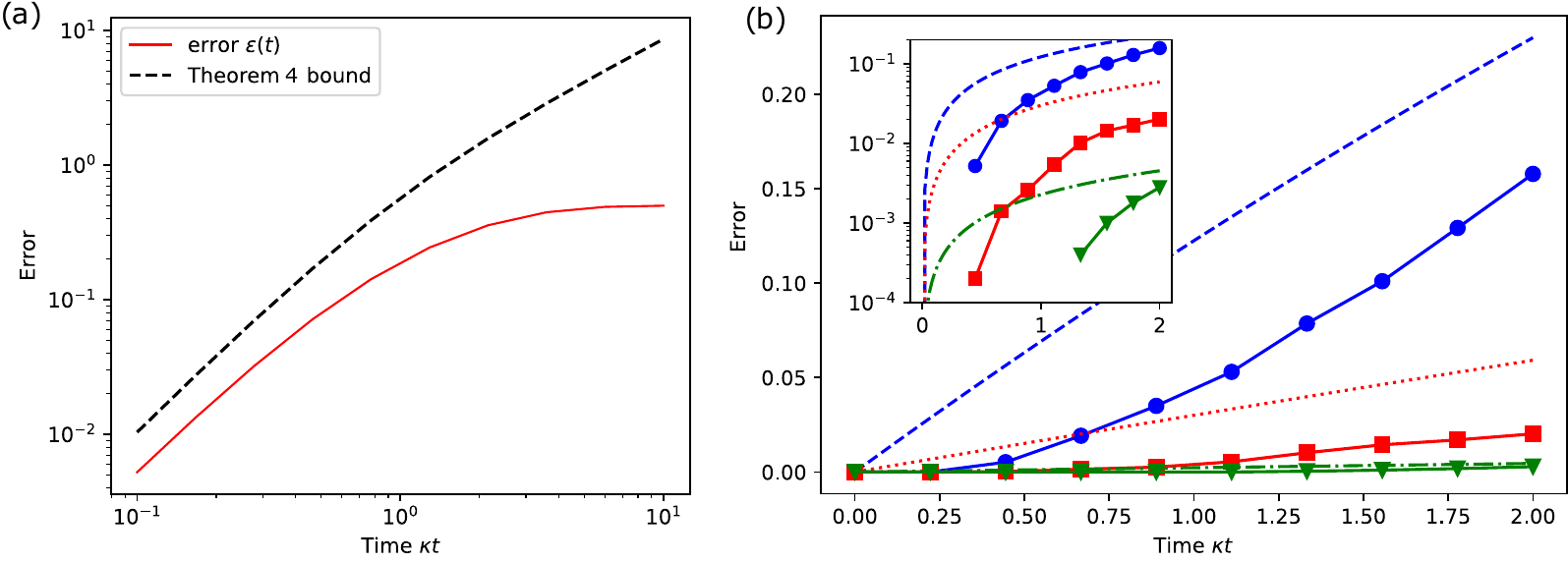}
    \caption{\textbf{Stabilizer codes.} (a) Logical error measure $\epsilon(t)$ for the five-qubit code and the corresponding upper bound in Eq.~\eqref{eq:early-bound}. (b) The same logical error measure for the two-dimensional toric code on a $L\times L$ lattice of linear sizes $L = 4$ (blue), $6$ (red), and $8$ (green), assuming the minimum-weight-matching algorithm has the threshold $h \approx 0.1031 \, n$  \cite{wang2003confinement}. The recovery rate grows linearly with the number of qubits, $\kappa = 0.1\, n$. The inset shows the same plot in logarithmic scale on the y-axis.}
    \label{fig:5q_and_toric_code}
\end{figure*}

\begin{figure*}[t!]
    \centering
    \includegraphics[width=1\textwidth]{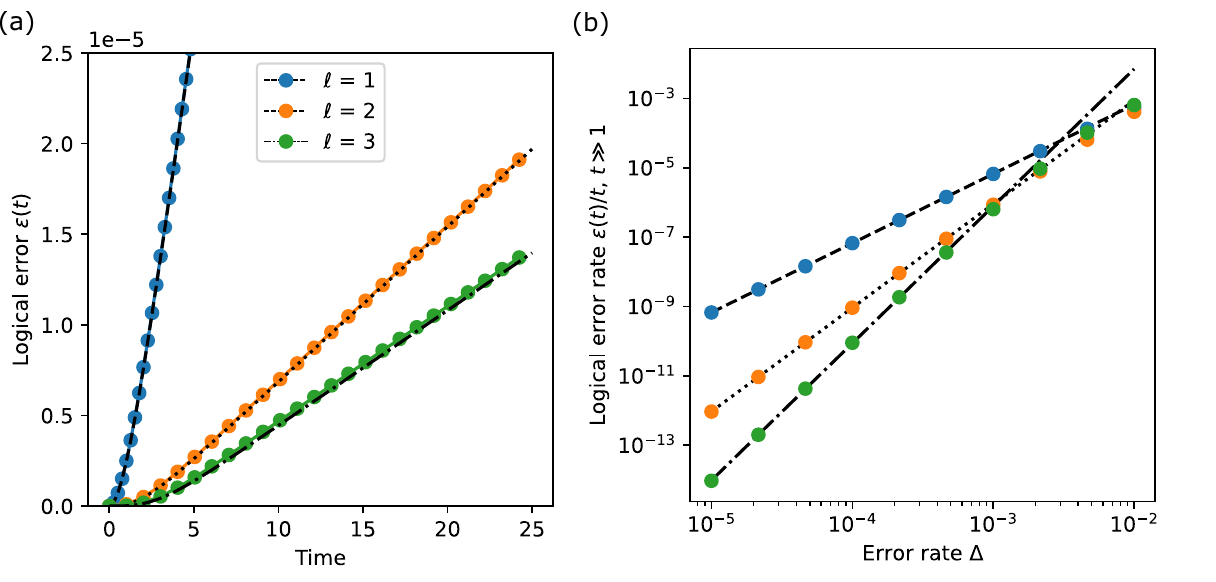}
    \caption{\textbf{Binomial code.} (a) The error probability $\epsilon(t)$ in Eq.~\eqref{eq:overlap_log_error} for the binomial code in Eq.~\eqref{eq:binomial_code} with different values of $\ell=1,2,3$ using the parameters $\kappa=1$, $\Delta = 10^{-3}$. The dots show numerical results (obtained for finite-size Hilbert space approximation), the curves show the fit of the form $\epsilon(t) = (c_\ell\Delta)^{\ell+1} F(\kappa t)$ (compare to the result in Theorem~\ref{generic_bound}), where $c_1 = 2.57$, $c_2 = 9.50$, and $c_3 = 28.2$. (b) The dots show the saturated error rate for different values of $\ell$ as a function of $\Delta$. The curves show the asymptotics of the form $\epsilon(t) = (c'_\ell\Delta)^{\ell+1}t$, where $c'_1 = 2.57$, $c'_2 = 9.51$, and $c'_3 = 28.54$. 
    This agrees with the theoretical bound in Eq.~\eqref{eq:general_theorem} when taking into 
    account that $\|\mathcal{L}_E\|_{1\to1,\mathsf{E}} \geq C_\ell > c_\ell, c'_\ell$,
    where $C_\ell := \sqrt{
    {\rm Tr} 
    \left( |0\rangle \langle 0| \mathcal{L}_E^\dag \mathcal{L}_E |0\rangle \langle 0| \right)
    }$,
    $|0\rangle$ is an $\ell$-dependent codeword in Eq.~\eqref{eq:binomial_code}, and $C_1 \approx 6.50$, $C_2 \approx 24.5$, and $C_3 \approx 61.0$.
    }
    \label{fig:fig3}
\end{figure*}

 \review{\section{Asymptotic error estimate: global decoders with Poissonian noise}}
\label{sec:asymptotic}

\review{The derivations in the previous sections established several general upper and lower bounds for the performance of global decoders combined with Pauli noise. This naturally leads to the question: Is it possible to distill the essence of these derivations into a single approximate formula describing the logical error of global decoders? This section aims to provide the reader with such a simplified expression, valid for times much longer than the characteristic recovery time. We use the Poisson picture of many ``faithful'' trajectories and estimate the number of trajectories that avoid logical errors. From this estimate we derive a practical expression for the rate of logical errors.}

\review{First, we specify such trajectories using the following definition:}

\begin{dfn}\label{def_unrecoverable}
We say that a trajectory $\boldsymbol \mu =(\mu_1,\dots,\mu_q)$ of weight $q$ is faithful if it contains no error subsequence of length $m>\ell$, i.e.~a subsequence $\{\mu_{k},\dots, \mu_{k+m-1}\}$ \review{satisfying $\mu_{k+i}>0$ for all $1\leq k\leq q-m+1$}.
\end{dfn}
 In other words, faithful trajectories contain no uninterrupted error sequences of length greater than $\ell$. 
 The total probability of faithful trajectories provides a lower bound on the probability that no logical error occurs (see the lemma below). 
 
 In passing, it is important to note that in general not all trajectories that are not faithful contribute to the logical error. For example, a sequence of more than $\ell$ Pauli errors does not contribute to a logical error if it can be reduced to a weight less than $\ell$ by canceling identical errors. However, since we are only aiming for an \review{the bound on} the logical error, one can exclude these trajectories from consideration and still keep the bound valid.

Let the full set of faithful trajectories be denoted by $G$. Then the logical error probability is bounded as:

\begin{lem}\label{stoch_bound}
For a Poissonian error process in Eq.~\eqref{eq:poisson_form}, the error in Eq.~\eqref{eq:overlap_log_error} satisfies
\be\label{eq:poisson_lemma}
\epsilon(t)\leq p(t) := \sum_{\boldsymbol \mu\in F\setminus G}p(\boldsymbol \mu,t)~,
\ee
where $F\setminus G$ stands for trajectories that are not faithful.
\end{lem}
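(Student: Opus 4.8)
The plan is to exploit the Poissonian trajectory decomposition of Eq.~\eqref{eq:poisson_form} and to show that every \emph{faithful} trajectory contributes exactly $1$ to the recovery fidelity, so that only non-faithful trajectories can diminish it. I would begin by inserting the expansion $\exp(\mathcal{L}t)=\sum_{\boldsymbol\mu\in F}p(\boldsymbol\mu,t)\mathcal{E}_{\boldsymbol\mu}$ into the fidelity appearing in Eq.~\eqref{eq:overlap_log_error}, writing, for a fixed codeword $|\psi_0\>$,
\[
\Tr\bigl(|\psi_0\>\<\psi_0|\,\mathcal R\exp(\mathcal{L}t)\,|\psi_0\>\<\psi_0|\bigr)=\sum_{\boldsymbol\mu\in F}p(\boldsymbol\mu,t)\,\Tr\bigl(|\psi_0\>\<\psi_0|\,\mathcal R\mathcal{E}_{\boldsymbol\mu}\,|\psi_0\>\<\psi_0|\bigr).
\]

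The crucial step is to evaluate the per-trajectory term for a faithful $\boldsymbol\mu\in G$. By Definition~\ref{def_unrecoverable}, such a trajectory is a sequence of uninterrupted error runs of length at most $\ell$, separated by recovery events $\mu_i=0$. I would process the trajectory block by block: starting from the codeword, a run of $m\leq\ell$ unitary error jumps produces the pure state $K_{\boldsymbol\mu}|\psi_0\>\<\psi_0|K_{\boldsymbol\mu}^\dag$ with $|\boldsymbol\mu|\leq\ell$, which is correctable by the recovery assumption Eq.~\eqref{eq:assumptions}; the ensuing recovery therefore returns the state to $|\psi_0\>\<\psi_0|$, and since the errors are unitary (Poissonian) the intermediate state has unit trace, fixing the proportionality constant in Eq.~\eqref{eq:assumptions} to $1$ by trace preservation. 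Iterating over the blocks, and finally applying the outer recovery $\mathcal R$ to the terminal run (again of length $\leq\ell$ by faithfulness), I obtain $\mathcal{R}\mathcal{E}_{\boldsymbol\mu}(|\psi_0\>\<\psi_0|)=|\psi_0\>\<\psi_0|$, so each faithful trajectory contributes $\Tr\bigl(|\psi_0\>\<\psi_0|\,|\psi_0\>\<\psi_0|\bigr)=1$, independently of $|\psi_0\>$.

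For the remaining, non-faithful trajectories $\boldsymbol\mu\in F\setminus G$, I would use only that $\mathcal{R}\mathcal{E}_{\boldsymbol\mu}(|\psi_0\>\<\psi_0|)$ is a valid density operator and $|\psi_0\>\<\psi_0|$ a projector, so the corresponding overlap is non-negative. Combining this with the faithful contributions and the normalization $\sum_{\boldsymbol\mu\in F}p(\boldsymbol\mu,t)=1$ yields, for every $|\psi_0\>$,
\[
\Tr\bigl(|\psi_0\>\<\psi_0|\,\mathcal R\exp(\mathcal{L}t)\,|\psi_0\>\<\psi_0|\bigr)\geq\sum_{\boldsymbol\mu\in G}p(\boldsymbol\mu,t)=1-\sum_{\boldsymbol\mu\in F\setminus G}p(\boldsymbol\mu,t)=1-p(t).
\]
Since this lower bound is uniform in $|\psi_0\>$, the minimum over codewords obeys the same inequality, and subtracting from one gives $\epsilon(t)\leq p(t)$, as claimed.

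I expect the main obstacle to be the careful verification in the second step that faithfulness guarantees every recovery acts on an accumulated error of weight at most $\ell$, including the edge cases: trajectories that begin or end with a recovery, the trivial-error action $\mathcal{R}(|\psi_0\>\<\psi_0|)=|\psi_0\>\<\psi_0|$, and the treatment of the terminal run by the outer recovery rather than an internal one. The proportionality-to-unity argument hinges essentially on the Poissonian (unitary) assumption $E_\mu^\dag E_\mu=I$, which trivializes the quadratic elementary errors and keeps the intermediate states pure with unit trace; without it one would have to track trace factors through each block and the clean block-by-block reset would break down.
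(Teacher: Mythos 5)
Your proof is correct and takes essentially the same route as the paper's: both expand $\exp(\mathcal{L}t)$ into Poissonian trajectories, show that every faithful trajectory returns the codeword exactly (using correctability of each uninterrupted error run of weight $\leq\ell$ via Eq.~\eqref{eq:assumptions}, with unitarity of the errors and trace preservation fixing the proportionality constant to one), and then bound the non-faithful contributions trivially. The only cosmetic difference is that you propagate the state forward block by block, whereas the paper encodes the same fact as an operator identity, $\mathcal{R}\mathcal{E}_{\boldsymbol\mu}\rho_0=(\mathcal{R}\mathcal{E}_{\boldsymbol\mu_1}\mathcal{R})\cdots(\mathcal{R}\mathcal{E}_{\boldsymbol\mu_m}\mathcal{R})\rho_0$, using idempotency of $\mathcal{R}$.
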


The proof of this lemma can be found in \refx[s]{proof_operator_bound}. \review{This lemma allows us to derive the upper bound of the logical error Eq.~\eqref{eq:poisson_lemma} by the fraction of faithful trajectories. For example, for small codes, one could evaluate $p(t)$ numerically by sampling the trajectories and counting the faithful ones directly. For example, the result for a code with $\ell = 6$ is shown in Fig.~\ref{fig:fig1}. More generally, an approximate analytical formula for this result is given below. }

We exploit the fact that errors and recoveries are independent stochastic processes. Then, for any trajectory of time $t$, the probability of $m$ recoveries is equal to $R(m,\kappa t)$, where \review{$R(m,x) := x^m e^{-x}/m!$} (see Eq.~\eqref{eq:stoch_prob}). \review{Let us condition on the event that $m$ corrections occur during the time period $t$. The probability that a logical error occurs between the start of the dynamics and the first correction is given by
\begin{equation}\label{eqs:p1_expression_seq}
    p_1 = \int_{0}^{t} dt_1 \pi(t_1 | t) s(t_1),
\end{equation}
where $\pi(t_1 | t)$ is the conditional probability that the time of the first recovery event in a random sequence is $t_1$, conditioned on the total evolution time $t$, and $s(t_1)$ is the probability that the error process during the interval $[0, t_1]$ does not result in an error weight greater than $\ell$. These probabilities are defined as
\begin{equation}
\begin{split}
    &\pi(t_1 | t) := \frac{\kappa \exp(-\kappa t_1)}{1 - \exp(-\kappa t)}, \\
    &s(t_1) := \sum_{m=0}^{\ell} R(m, N\Delta t_1).
\end{split}
\end{equation}
Similarly, the probability of a logical error occurring during the gap between the $(i-1)$-th and $i$-th corrections, given that previous corrections occurred at times $t_1, \dots, t_{i-1}$, is
\begin{equation}
    p_i = \int_{0}^{t - \sum_{j=1}^{i-1} t_j} dt_i \pi(t_i | t) s(t_i),
\end{equation}
Finally, the probability that a faithfulness violation will occur during the time between the last correction and the end of the dynamics is simply
\begin{equation}
    p_{m+1} = s\left(t - \sum_{j=1}^{m} t_j\right).
\end{equation}
Since the events described above are independent, the total probability of obtaining a non-faithful trajectory is then given by
\begin{widetext}
\be\label{exact_pt}
\begin{split}
p(t) = 1 - \sum_{m=0}^{\infty} R(m, \kappa t) \int_{0}^{t} dt_1 \pi(t_1 | t) s(t_1) \int_{0}^{t - t_1} dt_2 \pi(t_2 | t - t_1) s(t_2) \\
\ldots \int_{0}^{t - \sum_{j=1}^{m-1} t_j} dt_m \pi\left(t_m \Biggl| t - \sum_{j=1}^{m} t_j\right) s(t_m) s\left(t - \sum_{j=1}^{m} t_j\right),
\end{split}
\ee
\end{widetext}
In the asymptotic limit, where $\kappa t \gg 1$, a heuristic estimate can be made for the probability outlined in Eq.~\eqref{exact_pt}. First, we extend the upper limits of all integrals to infinity. This extension is justified because the function $\pi(t_i|\tau)$ vanishes quickly for $\kappa t_i \gg 1$, making the effect of the finite upper limit of the integration effectively negligible. For similar reasons, it is permissible to set the conditioned times $\tau\to\infty$ in $\pi(t_i|\tau)$, since this function tends to saturate for large values of $\kappa\tau$. Finally, we ignore the relatively small contribution of $p_{m+1}$ compared to the cumulative contribution of the previous terms. This approach results in the expression
\be\label{eq:asymptotic dynamics}
\begin{split}
p(t) &\approx 1- \sum_{m=0}^\infty R(m,t) \Biggl(\int_0^\infty d\tau \pi(\tau|\infty) s(\tau)\Biggl)^m \\
&= 1-\exp(-\Delta_{\rm eff} t),
\end{split}
\ee
where the effective logical error rate is
\be\label{eq:effective_rate}
\Delta_{\rm eff} = \frac{\kappa}{\left(1+\frac{\kappa}{N \Delta}\right)^{\ell+1}}.
\ee}

Let us examine the above expression from the standpoint of memory lifetime for multi-qubit codes.
Assuming that each qubit is subject to at least one type of error, the number of elementary error processes grows with the number of qubits $n$, i.e.~$N = \Theta(n)$.\footnote{Here and below we use the family of ``big-O" Bachmann-Landau notations: $o(f(x))$ (dominated by $f(x)$), $O(f(x))$ (bounded from above by $f(x)$), $\Theta(f(x))$ (bounded from below and above by $f(x)$), $\Omega(f(x))$ (bounded from below by $f(x)$), and $\omega(f(x))$ (dominates $f(x)$). More about definitions can be found in Ref.~ \cite{cormen2001introduction}.} 

Using the asymptotic behavior in Eq.~\eqref{eq:effective_rate}, we then get the scaling (for a constant $\Delta$):
\be\label{eq:scaling_radius}
\Delta_{\rm eff} =
\begin{cases}
\kappa \exp(- \Theta(\kappa \ell/n\Delta)) \qquad &\kappa = o(n),\\
\kappa (\frac{\kappa}{n\Delta})^{- \Theta(\ell)} \qquad &\kappa = \Omega(n),
\end{cases}
\ee
where all parameters, i.e.~$\kappa = \kappa(n)$, $\ell = \ell(n)$, must be treated as functions of the number of qubits $n$.
For the case of the two-dimensional \eczoohref[toric code]{surface}  \cite{kitaev2006anyons}, the radius satisfies $\ell= \Theta(\sqrt{n})$, so the effective rate is 
suppressed in number of qubits if the recovery rate scaling satisfies $\kappa=\omega(\sqrt{n})$.

The result
we presented above can be improved for certain families of codes and recovery maps defined on $n$ qubits. 
\review{In this case we consider that the code has a tolerable error weight $h = h(n)$, see Definition~1. This means that the subsequence of errors of length $|\boldsymbol \mu|\leq h$ is at most $2^{-\Omega(d)}$, where $d$ is the code distance.}
 This allows us to improve the upper bound in Eq.~\eqref{eq:scaling_radius} by replacing $\ell$ with $h$
 and taking into account the exponentially small contribution in $d$ coming from trajectories \review{that contain logical errors},
\be
s(t) = \sum_{m=0}^h R(m,N\Delta t) -2^{-\Omega(d)}.
\ee
This improvement grants us a stronger estimate for the logical error in the form
\be\label{eq:effective_rate_threshold}
\Delta_{\rm eff} = \frac{\kappa}{\left(1+\frac{\kappa}{N \Delta}\right)^{h+1}}+2^{-\Omega(d)}.
\ee

In a similar fashion to the analysis in Eq.~\eqref{eq:scaling_radius}, if a code family has a tolerable weight $h>\ell$, we can use Eq.~\eqref{eq:effective_rate_threshold} to derive the bound
\be
\Delta_{\rm eff} = \kappa \exp\Bigl(- \Theta(\kappa h/n\Delta)\Bigl)+2^{-\Omega(d)}, \quad \kappa = o(n).
\ee
For example, the toric code has a threshold error weight $h=\Theta(n)$, which leads us to 
the error rate $\Delta_{\rm eff} \sim \exp(-\Theta(\kappa/\Delta))$. 
Thus, a constant $\kappa$ yields a constant memory time,
while $\kappa\sim \log(n)$ yields a polynomial memory time.

These results are non-perturbative in $\Delta$, as perturbation theory would predict that the logical error rate scales as 
order $ O(\Delta^w)$ for some $w\geq 1$. 
We demonstrate this difference between our treatment and perturbation theory using the example of the toric code in Section~\ref{sec:examples}.

The difference in performance between autonomous global decoders and conventional active quantum error correction, which relies on syndrome measurement and subsequent correction  \cite{terhal2015quantum}, may seem unexpected. In the traditional method, it is usually sufficient to maintain a constant ratio between the single-qubit error rate $\Delta$ and a fixed inverse time scale $T^{-1}$ between recoveries to achieve exponential lifetime. 
Indeed, if $n T \Delta <h$, the probability of accumulating more than $h$ errors becomes exponentially small in $n$. 
The main difference between the two cases, stemming from our interpretation of autonomous recovery as a stochastic process, is that the time $T$ between two consecutive recoveries is \textit{not} fixed for the autonomous case and is instead determined by the Poisson distribution. Due to this fact, even for large $\kappa$, the probability that $n \Delta T>h$ is constant as a function of system size, although it is exponentially small in $\kappa/\Delta$. 
Therefore, increasing the system size alone does not increase the lifetime of the logical qubit.
\section{Examples of autonomous codes}
\label{sec:examples}

Finally, we provide a few examples of autonomous codes generated from global decoders of existing quantum codes. We start with qubit stabilizer codes. Our \textit{global} recoveries are different from local decoders  \cite{pastawski:2011,dengis2014optimal,Ticozzi2019, liu2023dissipative} in that jump operators of the latter apply recovery steps only on geometrically restricted regions.

The codespace $\mathsf C$ of an $[[n,1,d]]$ stabilizer code is formed by the $+1$ eigenstates of $n-1$ mutually commuting Pauli operators $S_\alpha$ that satisfy $S_\alpha^2 = I$, $[S_\alpha,S_\beta] = 0$ for all $\alpha, \beta$. 
The traditional recovery map includes two steps. In the first step, we measure all stabilizer generators $S_\alpha$, projecting the state into a subspace of mutual eigenstates with corresponding eigenvalues $s_\alpha = \pm 1$. 
This procedure is equivalent to applying a projection operator
\be
P(\vec s) = \prod_{\alpha=1}^{n-1} \frac 12\Bigl(1+s_\alpha S_\alpha\Bigl).
\ee
Next, we apply the corresponding recovery unitary $C(\boldsymbol s)$, which is a product of individual Pauli operators, depending on the $(n-1)$-dimensional vector of outcomes $\boldsymbol s = \{s_\alpha\}$. We can make a decision on the recovery using an algorithm or simply a lookup table that pairs every stabilizer configuration with its corresponding recovery.

In the autonomous regime, we propose to implement these recoveries using the continuous process, which combines both procedures:
\be
\mathcal R(\rho) = \sum_{\vec s\in \mathbb Z_2^{n-1}} A_{\vec s}\rho A^\dag_{\vec s},
\ee
where the jump operators are defined as
\be
\begin{split}
A_{\vec s} &= C({\vec s})P({\vec s}) = P C({\vec s}).
\end{split}
\ee
Here $P$ is the projector onto the codespace, and the last equality follows from the fact that $C({\vec s})$ commutes (anticommutes) with the stabilizer $S_\alpha$ if $s_\alpha = 1$ ($-1$):
\be
\begin{split}
C({\vec s})\prod_{\alpha=1}^{n-1} \frac 12\Bigl(1+s_\alpha S_\alpha\Bigl) &= \prod_{\alpha=1}^{n-1} \frac 12\Bigl(1+ S_\alpha\Bigl)C({\vec s}) \\
&=  P C({\vec s}).
\end{split}
\ee
For these stabilizer recovery models, we use simplistic Pauli error operators $E_\mu \in \{X_i,Y_i,Z_i\}$ in Eq.~\eqref{eq:noisy_L}, which act with the same rate $\Delta$ on each qubit $i$ (i.e., we set all $\lambda_\mu=1$).

In the simplest example, we consider an autonomous stabilizer decoder based on a five-qubit code. This distance $d=3$ code protects one logical qubit using five physical qubits and is stabilized by four operators $S_\alpha \in \{XZZXI, IXZZX, XIXZZ, ZXIXZ\}$, where $I$, $X$, and $Z$ are respectively the identity and the $X$- and $Z$-Pauli operators acting on the corresponding qubit of the system. We illustrate the performance of this code in Fig.~\ref{fig:5q_and_toric_code}(a). As can be seen from the figure, the upper bound in Eq.~\eqref{eq:early-bound} accurately describes the error rate in such a model.

Another relevant example is the two-dimensional toric code. 
This code is defined on a two-dimensional square lattice with $L\times L$ plaquettes and periodic boundary conditions, where physical qubits are situated on the edges. The stabilizers are divided into two groups. One group includes all products of four $Z$ operators acting on edges $s$ adjacent to a vertex (``stars"), which we denote as $A_s = \prod_{i\in s} Z_i$. The other group consists of all products of $X$ operators acting around a square $p$ (``plaquettes"), which we denote as $B_p$.
The codespace consists of the ground states of the operator
\begin{equation}
H = -\sum_s A_s - \sum_p B_p. \label{HTC}
\end{equation}
Using measurements of each group of stabilizers separately, it is possible to independently correct errors in the $X$ and $Z$ bases even if both of them are present in the system. To construct the recovery operator $\mathcal R$, we use the minimum-weight-matching 
algorithm  \cite{wang2003confinement}, which suggests the recovery unitary $C(\vec s)$ for each vector of measurement outcomes $\vec s$.
We compare the rate of logical error with the prediction given by the upper bound from Theorem~\ref{lem:all_time1}. In particular, Fig.~\ref{fig:5q_and_toric_code}(b) shows how the logical error depends on the linear size $L$ of the lattice. It can be seen that the upper bound correctly predicts the performance of the code.

Additionally, we compare our results with the predictions of perturbation theory for the autonomous toric code model. First, we find the solution of the spectral problem exactly for a Lindbladian with no noise ($\Delta = 0$). This solution has a $4^2$-dimensional steady-state manifold that is separated by a dissipative gap $\kappa$ from the rest of the eigenstates. The steady states are superpositions of four toric-code ground states. The rest of the eigenstates have the same eigenvalue $\kappa$. Using this exact solution, we use perturbation theory to determine how the eigenvalues of steady states are perturbed by noise. The real part of the lowest-order perturbation can be used as an estimate of the logical error rate.

Notably, as we show in \refx[s]{appendix_h}, when $\kappa$ is a system-size independent constant, the leading-order contribution from perturbation theory diverges as $L$ approaches infinity.
If the recovery rate scales with $L$ as $\kappa = \kappa_0L$ for some constant $\kappa_0>0$, the leading-order contribution from perturbation theory scales as
\be
\epsilon(t) = O\left(\kappa_0 t L^2\left(\frac{2\Delta}{e\kappa_0}\right)^{L/2}\right)
\ee
in the limit $L\to\infty$. This still provides a better estimate than that for the general recovery process in Theorem~\ref{generic_bound}, which requires $\kappa\sim L^2$ to ensure exponential suppression of the logical error rate. As a comparison, we can apply the asymptotic result obtained in Section~\ref{sec:asymptotic} to the autonomous toric code by setting $N = n=2L^2$ and $h = 2fL^2$ for some constant $f>0$ that indicates the finite threshold of the toric code. The error rate given by Eq.~\eqref{eq:effective_rate_threshold} is
\be
\epsilon(t)= O\left(\kappa_0 t Le^{-f\kappa_0L/\Delta}\right),
\ee
which suggests a non-perturbative contribution at $\Delta = 0$. Indeed, we see that, although the perturbation result does capture the exponential suppression of the error rate as $L$ approaches infinity, for small $\Delta$ it overestimates the error rate compared to the asymptotic behaviour.
This example highlights the importance of non-perturbative approaches in estimating the memory lifetime for an autonomous error-correcting code.  

Finally, we consider an example of a code that cannot be understood in terms of Pauli errors. An example of such a code is the \eczoohref[binomial code]{binomial}  \cite{michael206new} defined for the space of a quantum harmonic oscillator, $\mathsf H = \{|n\>_B, n\geq0\}$. The transitions between quantized oscillator levels are induced by the creation operator $a^\dag$ and the annihilation operator $a$ such that $a^\dag|n\>_B = \sqrt{n+1}|n+1\>_B$ and $a|n\>_B = \sqrt{n}|n-1\>_B$.  
The codewords of the binomial code of distance $d=2\ell+1$ are 
\be\label{eq:binomial_code}
|0\rangle,|1\rangle = \frac 1 {2^\ell} \sum_{s\in {\rm even, odd}}^{[0,2\ell+1]}\sqrt{\binom{2\ell+1}{s}}|s(2\ell+1)\rangle_B.
\ee
The binomial code tolerates single-photon processes as well as dephasing, with elementary errors generated from the set $E_\mu \in \{a,a^\dag, a^\dag a\}$, where weights $\lambda_\mu = 1/3$ are the same for each channel type in Eq.~\eqref{eq:noisy_L}. The recovery map $\mathcal{R}$ is defined using the procedure in \refx[s]{appendix_error_measures}. Fig.~\ref{fig:fig3} shows that the logical error rate of the code decreases exponentially with the code radius, for different values of $\ell$.

\section{Summary and outlook}\label{sec:discussion}

We derived the universal dependence of the logical error of a global quantum decoder on error model parameters. Under general assumptions, we found that global decoders provide viable error suppression. We also developed criteria under which the lifetime of the memory can be extended indefinitely by increasing the system size. To achieve this, decoders must operate at a rate that grows with system size. While this growth can be mild---polynomial suppression can be achieved with logarithmic rates---it shows that a constant dissipative gap of the recovery procedure is not sufficient to ensure a quantum memory whose lifetime grows indefinitely with system size. It also means, contrary to what one might naively imagine, that autonomous decoders cannot be perceived or constructed as stochastic versions of traditional error correction protocols. In fact, the structure of the correction map (e.g., represented by $\mathcal K_t$ in Eq.~\eqref{eq:full_model}) plays an important role in existing autonomous decoders  \cite{pastawski:2011}.

Another motivation for studying non-local dissipative processes is to see if they exhibit threshold-like behavior. While we do not generally observe sharp features like this in our analytical analysis of Poissonian models, it is possible that a transition could occur for a more general type of noise model that becomes weaker as system size increases.

In the future, we could use similar techniques to study local decoders. For example, we could try to prove analytically that there is a threshold for autonomous models based on existing cellular-automata decoders such as sweep-rule decoders  \cite{kubica2019cellular} or local decoders motivated by the thermalization of physical Hamiltonians~ \cite{liu2023dissipative}. \review{Another interesting direction is to improve the upper bound derived in Theorem~\ref{generic_bound} by adding conditions that exclude the case of global decoders. This would allow us to avoid the limitations imposed by the lower bound in Theorem~\ref{them:lower_bound}. Finally, we hope that the techniques can yield even stronger results for more concrete models such as bosonic codes. Using the structure of the codes, e.g.\ thresholds of sweep decoders  \cite{kubica2019cellular}, one can derive stronger bounds that will demonstrate the ultimate scalability of autonomous quantum memories.}

\section*{Acknowledgments}
Y.-J.L.~acknowledges support from the Max Planck Gesellschaft (MPG) through the International Max Planck Research School for Quantum Science and Technology (IMPRS-QST) and the Munich Quantum Valley, which is supported by the Bavarian state government with funds from the Hightech Agenda Bayern Plus.
A.V.G.~was supported in part by 
NSF QLCI (award No.~OMA-2120757), 
NQVL:QSTD:Pilot:FTL, 
DoE ASCR Quantum Testbed Pathfinder program (awards No.~DE-SC0019040 and No.~DE-SC0024220), 
NSF STAQ program, 
AFOSR, 
ARO MURI, 
AFOSR MURI, 
DARPA SAVaNT ADVENT, and 
ARL (W911NF-24-2-0107). 
A.V.G.~also acknowledges support from the U.S.~Department of Energy, Office of Science, 
National Quantum Information Science Research Centers, 
Quantum Systems Accelerator (QSA), 
and 
from the U.S.~Department of Energy, Office of Science, Accelerated Research in Quantum Computing, Fundamental Algorithmic Research toward Quantum Utility (FAR-Qu).
V.V.A.~acknowledges support from NSF QLCI grant OMA-2120757.
V.V.A.~thanks Olga Albert and Ryhor Kandratsenia for providing daycare support throughout this work.

\onecolumngrid

\section*{Appendices}

\subsection*{Appendix A: Recovery map $\mathcal R$ and error measures}
\labelx[Appendix A]{appendix_error_measures}

\review{Let us explicitly construct the recovery map and proving some of the code properties.}
To do so, we first consider the error operators $F_\alpha = \sum_\bsnu u^*_{\alpha \bsnu}K_\bsnu$, where $u_{\alpha\bsnu}$ are matrix elements of the unitary $u$ that diagonalizes the matrix $C$
in Eq.~\eqref{eq:knill_laflamme}, i.e. $C = u^\dag\hat C u$, where $\hat C = {\rm diag}\{d_\alpha\}$ \review{and $d_\alpha$
are eigenvalues of $C$.} The action of these operators is orthogonal in the codespace, i.e.
\be\label{eq:FF_knill_laflamme}
P F^\dag_\alpha F_\beta P = d_\alpha \delta_{\alpha\beta} P,
\ee
where $P$ is the projector to the codespace $\mathsf C$. Next, we can use the polar decomposition
\be
F_\alpha P = U_\alpha \sqrt{PF^\dag_\alpha F_\alpha P} =\sqrt{d_\alpha}U_\alpha P,
\ee
where $U_\alpha$ are unitary operators.

\review{Next, our goal is to write down the explicit expression for the recovery operator $\mathcal R$ and connect it to the properties of the error operators it guarantees to correct. To do so, we follow the steps of the standard textbook procedure, e.g. see Ref.~ \cite{nielsen_quantum_2010} chapter 10.3. We start with the most general form
\be\label{eqs:R_gen_form}
\mathcal R(\rho) = \sum_{\alpha=1}^{D^2-1} R_\alpha \rho R^\dag_\alpha,
\ee
where $R_\alpha$ are Kraus operators satisfying $\sum_\alpha R^\dag_\alpha R_\alpha = I$ and $D$ is the dimension of the Hilbert space. The condition in Eq.~\eqref{eq:assumptions} implies that for any $\beta$, $\beta'$ we have
\be
\mathcal R(F_{\beta}\rho F^\dag_{\beta'}) = \sum u_{\alpha\bsmu}u^*_{\beta'\bsmu'}\mathcal R(K_{\bsmu}\rho K^\dag_{\bsmu'})\propto \rho.
\ee
At the same time, the right hand side of this expression takes the form
\be
\mathcal R(F_{\beta}\rho F^\dag_{\beta'}) = \sum_{\alpha=1}^{D^2-1} R_\alpha F_{\beta}\rho  F^\dag_{\beta'} R^\dag_\alpha = \sqrt{d_\beta d_{\beta'}}\sum_{\alpha=1}^{D^2-1} P R_\alpha U_{\beta}P\rho  PU^\dag_{\beta'} R^\dag_\alpha P\propto \rho.
\ee
This condition implies that for any $\beta$ such that $d_\beta>0$, there exist complex numbers $c_{\alpha\beta}$ such that
\be
P R_\alpha U_{\beta}P = c_{\alpha\beta} P.
\ee
For the case of degenerte spectrum $d_\beta$, there is more than once choice of $R_\alpha$ that satisfies this condition. Since all these choices generate the same map $\mathcal R$, without the loss of generality, we can choose
\be\label{eqs:def_ra_operator}
R_\alpha = PU^\dag_\alpha, \qquad \alpha = 1,\dots, N_C,
\ee
where \( R_\alpha \) is defined only for operators \( F_\alpha \) with \( d_\alpha > 0 \). Here \( N_C \) represents the total number of such $F_\alpha$. 
Then, we can construct the recovery map as
\be\label{eq:recovery_map_kraus}
\mathcal R(\rho) = \sum_{\alpha=1}^{N_C} R_\alpha \rho R^\dag_\alpha+\frac 1q \Tr(\rho P_{\perp})\sigma(\rho),
\ee}
where $\sigma(\rho) = P\sigma(\rho)P$ is certain density function that may depend on $\rho$, \review{ $q$ is the codespace dimension,} and \( P_{\perp} = I - \sum_\alpha R^\dagger_\alpha R_\alpha \) defines the projector on the subspace of ``undecidable" error states. These error states are created by errors that do not obey the Knill-Laflamme condition. 

\review{Using the relation in Eq.~\eqref{eqs:rr_property}, let us show first that $\mathcal R^2 = \mathcal R$. In order to do this, we express
\be\label{eq:fstep_nlp}
\begin{split}
\forall \rho:\quad \mathcal R^2(\rho) = &\sum_{\alpha,\beta=1}^{D_0}  R_\alpha R_\beta \rho R^\dag_\beta R^\dag_\alpha+\frac 1q \sum_{\alpha}^{D_0}\Tr(R_\alpha\rho R^\dag_\alpha P_{\perp})P \\
&+ \frac 1q\Tr(\rho P_{\perp})\sum_{\alpha=1}^{D_0}  R_\alpha P R^\dag_\alpha+\frac 12 \Tr(\rho P_{\perp})\Tr(P P_{\perp})P.
\end{split}
\ee
The first term in Eq.~\eqref{eq:fstep_nlp} can be simplified using that correction Kraus operators satisfy
\be\label{eqs:rr_property}
\begin{split}
R_\alpha R_\beta = PU^\dag_\alpha PU^\dag_\beta = \frac{1}{\sqrt{d_\alpha}}PF^\dag_\alpha PU^\dag_\beta= \delta_{\alpha 0}R_\beta,
\end{split}
\ee
where we use index 0 to enumerate the ``trivial'' error, i.e. $F_0 = I$ and $d_0 = 1$.
The last term vanishes due to the orthogonality of the spaces represented by the operators $P$ and $P_\perp$, i.e. $PP_\perp = 0$. This property follows from the transformation
\be
P_\perp P = P - \sum_\alpha R_\alpha^\dag R_\alpha P = P - \sum_\alpha U_\alpha P U^\dag_\alpha P = P - \sum_\alpha \frac1{d_\alpha} F_\alpha P F^\dag_\alpha P = P-\frac1{d_0}F_0P = 0,
\ee
In a similar way, we express the second term as
\be
\Tr(R_\alpha\rho R^\dag_\alpha P_{\perp}) = \Tr(U^\dag_\alpha\rho U_\alpha PP_{\perp}P) = 0.
\ee
Finally, the third term can be simplified using the expression
\be
\sum_{\alpha=1}^{D_0}  R_\alpha P R^\dag_\alpha = \sum_{\alpha=1}^{D_0}  PU_\alpha^\dag  P U_\alpha P = \sum_{\alpha=1}^{D_0}  \frac1{d_\alpha}PF_\alpha^\dag  P F_\alpha P = P.
\ee
Thus, we arrive at
\be\label{eqs:idempotent_dd}
\begin{split}
\forall \rho:\quad \mathcal R^2(\rho) &= \sum_{\alpha\beta=1}^{D_0}  R_\alpha R_\beta \rho R^\dag_\beta R^\dag_\alpha +\frac 1q \Tr(\rho P_{\perp})P  = \mathcal R(\rho).
\end{split}
\ee
showing that the recovery map is an idempotent operation.}

Finally, using the structure of the Kraus operators in Eq.~\eqref{eq:recovery_map_kraus}, we can derive its action on error states as
\be
\forall \rho \in \mathsf{L}(\mathsf C):\quad \mathcal R(F_{\alpha}\rho F^\dag_{\beta}) = \sum_{\gamma=1}^{D_0}  PU^\dag_\gamma F_{\alpha}\rho F^\dag_{\beta}
U_\gamma P =  \sum_{\gamma=1}^{D_0}  \frac{1}{d_\gamma}  PF^\dag_\gamma F_{\alpha}P\rho P F^\dag_{\beta}
F_\gamma P,
\ee
where $\mathsf{L}(\mathsf C)$ is the space of linear operators on the codespace $\mathsf C$.
Using the Knill-Laflamme condition, we get the expression
\be
\forall \rho \in \mathsf{L}(\mathsf C):\quad\mathcal R(F_{\alpha}\rho F^\dag_{\beta}) = \sum_{\gamma=1}^{D_0}  d_{\gamma}\delta_{\alpha\gamma}\delta_{\beta\gamma} \rho   = d_\alpha \delta_{\alpha\beta} \rho.
\ee
Transforming back to the error basis consisting of individual errors, we get
\be\label{eq:r-invariance_exact}
\forall \rho \in \mathsf{L}(\mathsf C):\quad \mathcal R(K_{\bsmu}\rho K^\dag_{\bsnu}) = C_{\bsnu\bsmu} \rho.
\ee
This relation is important: we will use it for proving the properties of the recovery map in \refx[s]{appendix_generic_bound}.

Next, we present a proof of the relationship between the trace distance and the fidelity measures of logical error, defined in Eqs.~\eqref{eqs:trace_dis} and \eqref{eq:overlap_log_error}. In particular, we show that
\be
\delta(t)\leq 2\epsilon(t).
\ee
The first step is to utilize Holder's inequality, namely
\be\label{eqs:delta_bound}
T\Bigl((\exp(\mathcal Lt)|\psi_0\>\<\psi_0|, \exp(\mathcal Lt)|\psi_1\>\<\psi_1|\Bigl)\geq \frac 12\Tr(Q\exp(\mathcal Lt)\delta\rho_0),
\ee
where $\delta\rho_0 := |\psi_0\>\<\psi_0|-|\psi_1\>\<\psi_1|$, and $Q$ is any Hermitian operator of unit spectral norm. It is convenient to choose
$
Q = \mathcal R^\dag\delta\rho_0,
$
where $\mathcal R^\dag$ is the adjoint to the recovery operator $\mathcal R$. Then
\review{\be\label{eqs:trac_ineq_34}
\begin{split}
T\Bigl(\exp(\mathcal Lt)|\psi_0\>\<\psi_0|, \exp(\mathcal Lt)|\psi_1\>\<\psi_1|\Bigl)&\geq \frac 12\Tr\Bigl(\delta\rho_0\mathcal R\exp(\mathcal Lt)\delta\rho_0\Bigl).
\end{split}
\ee
Next, using the fact that $|\psi_0\>$ and $|\psi_1\>$ are orthogonal states in the codespace $\mathsf C$, we can rewrite the r.h.s. of this inequality using $|\psi_0\>\<\psi_0|+|\psi_1\>\<\psi_1| \leq I$, where $I$ is the identity operator. 
This leads us to
\be\label{eqs:inequality_trdis}
\begin{split}
T\Bigl(\exp(\mathcal Lt)|\psi_0\>\<\psi_0|, \exp(\mathcal Lt)|\psi_1\>\<\psi_1|\Bigl)\geq\Tr\Bigl(&|\psi_0\>\<\psi_0|\mathcal R\exp(\mathcal Lt)|\psi_0\>\<\psi_0|\Bigl)\\
&+\Tr\Bigl(|\psi_1\>\<\psi_1|\mathcal R\exp(\mathcal Lt)|\psi_1\>\<\psi_1|\Bigl)-1.
\end{split}
\ee
where we used the condition $\Tr(\mathcal R \exp(\mathcal Lt) \rho) = \Tr\rho = 1$ valid for any density matrix $\rho$ since both $\exp(\mathcal Lt)$ and $\mathcal R$ are trace preserving maps.
Incorporating this inequality into the definition of $\delta(t)$, we get
\be\label{eq:rep_inequality}
\begin{split}
\delta(t)&\leq 2-\min_{|\psi_0\>,|\psi_1\>\in G}\Biggl(\Tr\Bigl(|\psi_0\>\<\psi_0|\mathcal R\exp(\mathcal Lt)|\psi_0\>\<\psi_0|)+\Tr\Bigl(|\psi_1\>\<\psi_1|\mathcal R\exp(\mathcal Lt)|\psi_1\>\<\psi_1|\Bigl)\Biggl)\\
&\leq 2\Bigl(1-\min_{|\psi_0\>\in \mathsf C}\Tr(|\psi_0\>\<\psi_0|\mathcal R\exp(\mathcal Lt)|\psi_0\>\<\psi_0|)\Bigl)=2\epsilon(t).
\end{split}
\ee
This concludes our proof.}

\subsection*{Appendix B: Proof of Theorem \ref{generic_bound}}
\labelx[Appendix B]{appendix_generic_bound}

\review{This appendix contains the proof of Theorem~\ref{generic_bound}. In the first step of the proof, we show that the error measures in Eqs.~\eqref{eq:overlap_log_error} satisfy
\be\label{eq:second_error_measure}
\epsilon(t), \delta(t) \leq 1-\frac 12\min_{|\psi_0\>,|\psi_1\>\in G}\Tr(Q\exp(\mathcal L t)\delta\rho_0),
\ee
where $\delta \rho_0 := |\psi_0\>\<\psi_0|-|\psi_1\>\<\psi_1|$ and
$
Q = \mathcal R^\dag (|\psi_0\>\<\psi_0|-|\psi_1\>\<\psi_1|)
$, similar to the notation we used in \refx[s]{appendix_error_measures}. The inequality for $\delta(t)$ follows directly from its definition in Eq.~\eqref{eqs:trace_dis} and the property in Eq.~\eqref{eqs:delta_bound}. To prove this inequality for $\epsilon(t)$, we notice that
\be
\begin{split}
\frac 12\min_{|\psi_0\>,|\psi_1\>\in G}\Tr(Q\exp(\mathcal L t)\delta\rho_0) &= \min_{|\psi_0\>,|\psi_1\>\in G}\Bigl(\Tr(|\psi_0\>\<\psi_0|\mathcal R\exp(\mathcal Lt)|\psi_0\>\<\psi_0|)\\
& \qquad\qquad\qquad+\Tr(|\psi_1\>\<\psi_1|\mathcal R\exp(\mathcal Lt)|\psi_1\>\<\psi_1|)\Bigl)-1\\
&\leq \min_{|\psi_0\>\in \mathsf C}\Tr\Bigl(|\psi_0\>\<\psi_0|\mathcal R\exp(\mathcal Lt)|\psi_0\>\<\psi_0|\Bigl) = 1-\epsilon(t),
\end{split}
\ee
where we used the fact that $\Tr(|\psi_1\>\<\psi_1|\mathcal R\exp(\mathcal Lt)|\psi_1\>\<\psi_1|)\leq 1$. This expression leads us to the inequality in Eq.~\eqref{eq:second_error_measure} for $\epsilon(t)$.}

Next, it is convenient to switch to the imaginary frequency space $t\to s$ and write this inequality using the inverse Laplace transform $\mathscr{L}^{-1}$ as
\be\label{eqs:error_laplace}
\epsilon(t), \delta(t)\leq 1- \frac 12\min_{\rho_0}\mathscr{L}^{-1}\Biggl[\Tr\left(Q\frac{1}{s-\mathcal L}\delta\rho_0\right)\Biggl].
\ee
where $(s-\mathcal L)^{-1}$ is called the resolvent of $\mathcal L$. To further analyze this expression, we use the decomposition $\mathcal L = \kappa\mathcal L_R+\Delta \mathcal L_E$, where $\mathcal L_E$ and $\mathcal L_R$ are defined in Eqs.~\eqref{eq:noisy_L} and \eqref{eq:recovery_lindblad}, respectively. With error rate $\Delta$ as a small parameter, we are using Dyson's series
\be\label{eqs:dyson}
\frac{1}{s-\mathcal L} = \frac{1}{s-\mathcal L_R}\sum_{r=0}^\infty\Bigl(\Delta\mathcal L_E \frac{1}{s-\mathcal L_R}\Bigl)^r.
\ee
To simplify calculations, we can use diagrammatic notation to represent different superoperators. We introduce the following notation:
\be\label{eq:diagram_notations}
Q\otimes I \equiv \; \sq, \qquad
 \frac{1}{s-\mathcal L} \equiv \; \sss, \qquad \frac{1}{s-\mathcal L_R} \equiv \zzz\; , \qquad \Delta\mathcal L_E \equiv \x, \qquad \frac 12\Tr(\mathcal O\delta\rho_0)\equiv \<\mathcal O\>.
\ee\label{dagchange3}
The notation $A\otimes B^*$ stands for the matrix representation of the superoperator, which acts as $A$ from the left and as $B^\dag$ from the right, i.e. $(A\otimes B^*)(\rho) \equiv A\rho B^\dag$. Consequently, $Q\otimes I$ corresponds to the left multiplication with the operator $Q$.
Using this notation, the Dyson's series in Eq.~\eqref{eqs:dyson} can be expressed as an infinite sum of diagrams,
\be\label{eqs:dyson_diagrams}
\sss = \zzz+\zzz\x\zzz+\zzz\x\zzz\x\zzz+\zzz\x\zzz\x\zzz\x\zzz+\dots.
\ee
At the same time, the error expression in Eq.~\eqref{eqs:error_laplace} takes the diagrammatic form
\be
\epsilon(t),\delta(t) \leq 1-\frac 12\mathscr{L}^{-1}\,\Tr\left(Q\frac{1}{s-\mathcal L}\delta\rho_0\right) = 1-\mathscr{L}^{-1}\bigl\<\sq\sss\bigl\>.
\ee
Using Dyson's expansion and the diagrammatic representation, we can now rewrite the term on the right as
\be
\bigl\<\sq\sss\bigl\> = \bigl\<\sq\zzz\bigl\>+\bigl\<\sq\zzz\x\zzz\bigl\>+\bigl\<\sq\zzz\x\zzz\x\zzz\bigl\>+\bigl\<\sq\zzz\x\zzz\x\zzz\x\zzz\bigl\>+\dots.
\ee
It can be further simplified once we take into account the fact that recovery dynamics preserves the states in the codespace, which means $\mathcal L_R\delta\rho_0 = 0$. Therefore,
\be
\frac{1}{s-\mathcal L_R}\delta\rho_0 = \frac {1}{s}\delta\rho_0.
\ee
This property allows us to rewrite
\be
\bigl\<\sq\sss\bigl\> = \frac 1s+\frac 1s\bigl\<\sq\zzz\x\bigl\>+\frac 1s\bigl\<\sq\zzz\x\zzz\x\bigl\>+\frac 1s\bigl\<\sq\zzz\x\zzz\x\zzz\x\bigl\>+\dots.
\ee
Next, we use the decomposition
\be\label{eq:decoder_evo_expl_form}
\begin{split}
\exp(\mathcal L_Rt) = \mathcal W_t+\bigl(1-e^{-\kappa t}\bigl)&\mathcal R,
\end{split}
\ee
where we defined $\mathcal W_t := e^{-\kappa t}\mathcal K_t$. \review{This operator, similar to $\exp(\mathcal L_R t)$, satisfies the property in Eq.~\eqref{eq:good_correction}, i.e.
\be\label{eqs:cond_on_Wt}
\mathcal W_t(K_{\bsmu}\rho_0K^\dag_{\bsnu}) = \sum_{\bsmu'\bsnu'}b_{\bsmu\bsnu,\bsmu'\bsnu'}(t)K_{\bsmu'}\rho_0K^\dag_{\bsnu'},
\ee
where $b_{\bsmu\bsnu,\bsmu'\bsnu'}(t)=0$ if $|\bsmu'|>|\bsmu|$ or $|\bsnu'|>|\bsnu|$. Indeed,
\be\label{eqs:rewritten_K_arg}
\begin{split}
\mathcal W_t(K_{\bsmu}\rho_0K^\dag_{\bsnu}) &= \mathcal \exp(\mathcal L_R t)(K_{\bsmu}\rho_0K^\dag_{\bsnu})-(1-e^{-\kappa t})\mathcal R(K_{\bsmu}\rho_0K^\dag_{\bsnu})\\
& = \sum_{\bsmu'\bsnu'}(a_{\bsmu\bsnu,\bsmu'\bsnu'}(t)-C_{\mathcal \bsnu\bsmu}\delta_{\bsmu'\{\emptyset\}}\delta_{\bsnu'\{\emptyset\}})K_{\bsmu'}\rho_0K^\dag_{\bsnu'},
\end{split}
\ee
where we used Eq.~\eqref{eq:r-invariance_exact}.}
In the space of imaginary frequencies, the same expression takes the form
\be\label{eq:decoder_evo_expl_form_laplace}
\begin{split}
\frac1{s- \mathcal L_R} = \widehat{\mathcal  W_s}+\frac{\kappa}{s(s+\kappa)}\mathcal R, \qquad \widehat{\mathcal W}_s = \mathscr{L}\mathcal W_t.
\end{split}
\ee
This expression can also be written in diagrammatic form
\be\label{eqs:V_decomposition}
\zzz = \vvv+\nnn, \qquad \frac{\kappa}{s(s+\kappa)}\mathcal R= \vvv,\qquad
\widehat{\mathcal W}_s= \nnn.
\ee
Now let us consider the terms from the second to the $\ell$th (containing $k$ operators $\mathcal L_E$, where  $1 \leq k \leq \ell$), and rewrite it by expanding the last term
\be
\begin{split}
\bigl\<\sq\zzz\underbrace{\x\zzz\x\dots\x\zzz\x}_\textrm{$0<k\leq \ell$}\bigl\> =  \<\sq\zzz\underbrace{\x\zzz\x\dots\x\zzz\x}_\textrm{$k-1$}\nnn \x\> + \<\sq\zzz\underbrace{\x\zzz\x\dots\x\zzz\x}_\textrm{$k-1$}\vvv \x\>.
\end{split}
\ee
Then we take the first term on the right and decompose the double line next to the arrow again, which gives us
\be
\begin{split}
\bigl\<\sq\zzz\underbrace{\x\zzz\x\dots\x\zzz\x}_\textrm{$0<k\leq \ell$}\bigl\> =  \<\sq\zzz\underbrace{\x\zzz\x\dots\x\zzz\x}_\textrm{$k-2$}\nnn \x\nnn \x\> &+ \<\sq\zzz\underbrace{\x\zzz\x\dots\x\zzz\x}_\textrm{$k-2$}\vvv \x\nnn \x\>\\
&+ \<\sq\zzz\underbrace{\x\zzz\x\dots\x\zzz\x}_\textrm{$k-1$}\vvv \x\>.
\end{split}
\ee
Repeating this procedure with the first term several times until all double lines are decomposed, we get
\be\label{eqs:first_terms_full}
\begin{split}
\bigl\<\sq\zzz\underbrace{\x\zzz\x\dots\x\zzz\x}_\textrm{$0<k\leq \ell$}\bigl\> =& \<\sq\nnn\underbrace{\x\nnn\x\dots\x\nnn\x}_\textrm{$k$}\>+\<\sq\vvv \underbrace{\x\nnn\x\dots\x\nnn\x}_\textrm{$k$}\>\\
&+\sum_{m=1}^k \<\sq\zzz\underbrace{\x\zzz\x\dots\x\zzz\x}_\textrm{$k-m$}\vvv\underbrace{\x\nnn\x\dots\x\nnn\x}_\textrm{$m$}\>.
\end{split}
\ee
The right-hand side of this equation contains $k+2$ terms.  Our goal is to show that the right-hand side of this expression vanishes. We do so by using the following Lemma.

\begin{lem} \label{lem-diag-canc} For any $k$ satisfying $0<k\leq \ell$ and any superoperator $\mathcal O$, we have
\be\begin{split}
&\Tr(\mathcal O \mathcal R\mathcal L_E(\widehat{\mathcal W}_s\mathcal L_E)^{k-1} \delta\rho_0) = 0,\\
&\Tr(\mathcal O \mathcal R\mathcal (\widehat{\mathcal W}_s\mathcal L_E)^{k} \delta\rho_0) = 0.
\end{split}
\ee
In diagrammatic form this is
\be
\begin{split}
\bigl\<\mathcal O\vvv\underbrace{\x\nnn\x\dots\x\nnn\x}_\textrm{$k$}\> = 0,\qquad \bigl\<\mathcal O\vvv\nnn\underbrace{\x\nnn\x\dots\x\nnn\x}_\textrm{$k$}\> = 0.
\label{knill-lafl-lemma}
\end{split}
\ee
\end{lem}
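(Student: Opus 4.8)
The plan is to prove both identities at the operator level: I will show that the superoperator string applied to $\delta\rho_0$ produces the \emph{zero operator}, which is strictly stronger than the stated traces vanishing and immediately yields the result for arbitrary $\mathcal O$. First I would reduce the second identity to the first. Using the idempotency $\mathcal R^2=\mathcal R$ of Eq.~\eqref{eqs:idempotent_dd} together with the semigroup relation $\mathcal R\exp(\mathcal L_R t)=\lim_{t'\to\infty}\exp(\mathcal L_R(t'+t))=\mathcal R$, the decomposition in Eq.~\eqref{eq:decoder_evo_expl_form}, $\exp(\mathcal L_R t)=\mathcal W_t+(1-e^{-\kappa t})\mathcal R$, gives $\mathcal R\mathcal W_t=e^{-\kappa t}\mathcal R$, and hence in Laplace space $\mathcal R\widehat{\mathcal W}_s=(s+\kappa)^{-1}\mathcal R$. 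Applying this once, $\mathcal R(\widehat{\mathcal W}_s\mathcal L_E)^k=(s+\kappa)^{-1}\mathcal R\mathcal L_E(\widehat{\mathcal W}_s\mathcal L_E)^{k-1}$, so the second line of the lemma is a scalar multiple of the first and it suffices to treat the first.

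For the first identity I would track error weights. Starting from $\delta\rho_0\in{\rm End}(\mathsf C)$, every factor of $\mathcal L_E$ raises the weight of the accumulated error by at most one on each side (the jump term $E_\mu(\cdot)E_\mu^\dag$ adds one error on the left and one on the right, while each half of the anticommutator built from $E_\mu^\dag E_\mu$ adds a single elementary error on one side only), whereas each $\widehat{\mathcal W}_s$ is strictly error-reducing by Eq.~\eqref{eqs:cond_on_Wt}. Consequently $(\widehat{\mathcal W}_s\mathcal L_E)^{k-1}$ maps $\delta\rho_0$ to a linear combination of terms $K_{\bsmu}\delta\rho_0 K^\dag_{\bsnu}$ with $|\bsmu|,|\bsnu|\leq k-1\leq\ell-1$. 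By linearity it then remains to show that $\mathcal R\mathcal L_E$ annihilates each such term.

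I expect this final cancellation to be the main obstacle, since it is precisely where the Knill--Laflamme structure must be used. Fix $\sigma=K_{\bsmu}\delta\rho_0 K^\dag_{\bsnu}$ with $|\bsmu|,|\bsnu|\leq\ell-1$ and expand $\mathcal R\mathcal L_E(\sigma)$ into the three pieces $\mathcal R(E_\mu K_{\bsmu}\delta\rho_0 K^\dag_{\bsnu}E_\mu^\dag)$, $-\tfrac12\mathcal R(E_\mu^\dag E_\mu K_{\bsmu}\delta\rho_0 K^\dag_{\bsnu})$, and $-\tfrac12\mathcal R(K_{\bsmu}\delta\rho_0 K^\dag_{\bsnu}E_\mu^\dag E_\mu)$, summed over $\mu$ with weights $\lambda_\mu$. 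Prepending these errors raises the weight only to $\leq\ell$, so each argument is of the form $K_{\bsmu'}\delta\rho_0 K^\dag_{\bsnu'}$ with $|\bsmu'|,|\bsnu'|\leq\ell$, and Eq.~\eqref{eq:r-invariance_exact} applies: each piece equals its Knill--Laflamme coefficient times $\delta\rho_0$. Writing $\tilde\mu$ for the elementary error $E_\mu^\dag E_\mu$, the three coefficients are $C_{(\mu,\bsnu),(\mu,\bsmu)}$, $C_{\bsnu,(\tilde\mu,\bsmu)}$, and $C_{(\tilde\mu,\bsnu),\bsmu}$, and I would check via Eq.~\eqref{eq:knill_laflamme} that all three coincide because the underlying products all equal the same operator, $K_{(\mu,\bsnu)}^\dag K_{(\mu,\bsmu)}=K_{\bsnu}^\dag K_{(\tilde\mu,\bsmu)}=K_{(\tilde\mu,\bsnu)}^\dag K_{\bsmu}=K_{\bsnu}^\dag E_\mu^\dag E_\mu K_{\bsmu}$, whose compression $PK_{\bsnu}^\dag E_\mu^\dag E_\mu K_{\bsmu}P$ is a fixed codespace scalar.

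Denoting that common coefficient by $c_\mu$, the three pieces combine with weights $1-\tfrac12-\tfrac12=0$, so $\mathcal R\mathcal L_E(\sigma)=\sum_\mu\lambda_\mu(c_\mu-\tfrac12 c_\mu-\tfrac12 c_\mu)\,\delta\rho_0=0$ for each term, whence $\mathcal R\mathcal L_E(\widehat{\mathcal W}_s\mathcal L_E)^{k-1}\delta\rho_0=0$ and, by the reduction of the first paragraph, $\mathcal R(\widehat{\mathcal W}_s\mathcal L_E)^{k}\delta\rho_0=0$ as well. Both traces in the lemma then vanish for every superoperator $\mathcal O$. The only points requiring care are the bookkeeping that keeps all weights within the correctable radius $\ell$ and the verification that the jump term and the two drift halves generate identical correction coefficients; the intuition is simply that $\mathcal R$ restores a correctable state, and since $\mathcal L_E$ is trace-preserving on the codespace the net first-order drift after correction is exactly zero.
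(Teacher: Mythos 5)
Your proof is correct, and it establishes the same operator-level statement that the paper's proof does (namely $\mathcal R\mathcal L_E(\widehat{\mathcal W}_s\mathcal L_E)^{k-1}\delta\rho_0 = 0$ as an operator, not merely vanishing traces), but the route differs in two places. First, the paper never reduces the second identity to the first: it runs the same argument twice, expanding both superoperator strings in the error basis with coefficient matrices $A(s)$ and $B(s)$ and treating them in parallel; your reduction via $\mathcal R\exp(\mathcal L_R t)=\mathcal R$ and $\mathcal R^2=\mathcal R$, giving $\mathcal R\widehat{\mathcal W}_s=(s+\kappa)^{-1}\mathcal R$, is cleaner and halves the work. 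Second, and more substantively, the cancellation mechanisms differ. The paper applies Eq.~\eqref{eq:r-invariance_exact} to get $\delta\rho_0\sum_{\bsmu\bsmu'}A_{\bsmu\bsmu'}C_{\bsmu'\bsmu}$ and then recognizes the coefficient sum as $\Tr[\mathcal L_E(\widehat{\mathcal W}_s\mathcal L_E)^{k-1}|0\>\<0|]$, which vanishes because Lindbladian output is traceless; you instead cancel term by term, observing that the jump piece and the two anticommutator halves of $\mathcal L_E$ produce identical Knill--Laflamme coefficients (all fixed by the single compression $PK_{\bsnu}^\dag E_\mu^\dag E_\mu K_{\bsmu}P$), so they combine with weights $1-\tfrac12-\tfrac12=0$. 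These are two faces of the same fact---trace preservation of $\mathcal L_E$ \emph{is} the $1-\tfrac12-\tfrac12$ structure, and the KL condition makes $\mathcal R$ act like a trace on correctable error states---but yours is more explicit and yields the slightly stronger pointwise statement that $\mathcal R\mathcal L_E$ annihilates every individual correctable sandwich, while the paper's is more compact and never needs to unpack $\mathcal L_E$ into its three pieces. One presentational nit: since the coefficients in Eq.~\eqref{eq:good_correction} may depend on the initial codespace operator, your intermediate expansion should carry a general codespace operator $M$ (or handle $|\psi_0\>\<\psi_0|$ and $|\psi_1\>\<\psi_1|$ separately by linearity) rather than literally $\delta\rho_0$ in the middle of every term; your final cancellation is insensitive to this, so nothing breaks.
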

\begin{proof}.
\review{
We express the action of the Lindblad generators in a more explicit form:
\be\label{eqs:L-change_err98}
\begin{split}
\forall \bsmu,\bsmu'\in M_k: \quad \mathcal L_E \left[K_{\boldsymbol \mu} \rho K^\dag_{\boldsymbol \mu'}\right] &= \sum_\mu \Bigl(E_\mu K_{\boldsymbol \mu} \rho K^\dag_{\boldsymbol \mu'} E^\dag_\mu -\frac 12\{E^\dag_\mu E_\mu, K_{\boldsymbol \mu} \rho K^\dag_{\boldsymbol \mu'}\}\Bigl) \\
&= \sum_{\bsnu , \bsnu' \in M_{k+1}} L_{\bsmu\bsmu',\bsnu\bsnu'} K_{\boldsymbol \nu} \rho K^\dag_{\boldsymbol \nu'},
\end{split}
\ee
where $M_k = \{\bsmu: |\bsmu|\leq k\}$ is a set of trejectories of length smaller than $k$ and $L_{\bsmu\bsmu',\bsnu\bsnu'}$ are certain real coefficients. At the same time, according to the strict error-reducing condition in Eq.~\eqref{eq:good_correction}, we get
\be
\forall \bsmu,\bsmu'\in M_k:  \quad \exp(\mathcal L_R t)\left[K_{\boldsymbol \mu} \rho K^\dag_{\boldsymbol \mu'} \right]= \sum_{\bsnu , \bsnu' \in M_k} R_{\bsmu\bsmu',\bsnu\bsnu'} (t) K_{\boldsymbol \nu} \rho K^\dag_{\boldsymbol \nu'}.
\ee
Next, we use the definition of $\widehat{\mathcal W}_s$ in Eq.~\eqref{eq:decoder_evo_expl_form_laplace} and $\mathcal W_t$ in Eq.~\eqref{eq:decoder_evo_expl_form}, as well as the property in Eq.~\eqref{eq:r-invariance_exact} to show that
\be\label{eqs:Ws-action-non-reducing}
\begin{split}
\forall \bsmu,\bsmu'\in M_C:\qquad \widehat{\mathcal W}_s\left[K_{\boldsymbol \mu} \rho K^\dag_{\boldsymbol \mu'} \right] &= \mathscr{L}\Biggl(\exp(\mathcal L_R t)\left[K_{\boldsymbol \mu} \rho K^\dag_{\boldsymbol \mu'} \right] - (1-e^{-\kappa t})\mathcal R\left[K_{\boldsymbol \mu} \rho K^\dag_{\boldsymbol \mu'} \right]\Biggl) \\
  &= \sum_{\bsnu , \bsnu' \in M_k} \mathscr{L}\Bigl(R_{\bsmu\bsmu',\bsnu\bsnu'} (t)-(1-e^{-\kappa t})C_{\bsmu'\bsmu}\delta_{\bsmu\bsnu}\delta_{\bsmu'\bsnu'}\Bigl) K_{\boldsymbol \nu} \rho K^\dag_{\boldsymbol \nu'} \\
  &= \sum_{\bsnu , \bsnu' \in M_k} W_{\bsmu\bsmu',\bsnu\bsnu'}(s) K_{\boldsymbol \nu} \rho K^\dag_{\boldsymbol \nu'},
 \end{split}
\ee
where $W_{\bsmu\bsmu',\bsnu\bsnu'}(s)$ are elements of a given real-valued matrix function of $s$. Using Eqs.~\eqref{eqs:L-change_err98} and \eqref{eqs:Ws-action-non-reducing} we get for any $\rho$ in the codespace we have
\be\label{eq:eguig46}
\begin{split}
&\mathcal L_E(\widehat{\mathcal W}_s\mathcal L_E)^{k-1} \rho =  \sum_{\bsnu , \bsnu' \in M_k}A_{\bsmu\bsmu',\bsnu\bsnu'}(s) K_{\boldsymbol \mu} \rho K^\dag_{\boldsymbol \mu'},\\
&\widehat{\mathcal W}_s\mathcal L_E(\widehat{\mathcal W}_s\mathcal L_E)^{k-1} \rho =  \sum_{\bsnu , \bsnu' \in M_k}B_{\bsmu\bsmu',\bsnu\bsnu'}(s) K_{\boldsymbol \mu} \rho K^\dag_{\boldsymbol \mu'},
\end{split}
\ee
where we use the notations for the matrices $A(s):= L(W(s)L)^{k-1}$ and $B(s) := (W(s)L)^{k}$. It is important to note that, as the action of any Linbladian operator must return traceless operators, these operators must satisfy
\be\label{eqs:traceless_lindb}
\forall \rho, k>0: \quad \Tr(\mathcal L_E(\widehat{\mathcal W}_s\mathcal L_E)^{k-1} \rho) = 0, \quad \Tr(\widehat{\mathcal W}_s\mathcal L_E(\widehat{\mathcal W}_s\mathcal L_E)^{k-1} \rho) = 0.
\ee
The second inequality follows from the first, if we consider that $\mathcal W_t = e^{-\kappa t}\mathcal K_t$, $\mathcal K_t$ preserves the trace, and that for any traceless operator $O$ we have
\be\label{eqs:specified_eq_tra_cons}
{\rm Tr}[\mathscr{L}\, \mathcal W_t(O)] = \mathscr{L}\, {\rm Tr}[\mathcal W_t(O)] =\mathscr{L}\, e^{-\kappa t} {\rm Tr}[\mathcal K_t(O)] = \mathscr{L}\, e^{-\kappa t} {\rm Tr}[O] = 0.
\ee
Thus, for any $\rho$, the conditions in Eq.~\eqref{eq:eguig46}  satisfy
\be
\sum_{\bsmu,\bsmu'\in M_C}A_{\bsmu\bsmu'}\Tr (K_{\boldsymbol \mu'}\rho K_{\boldsymbol \mu}^\dag)=0, \qquad
\sum_{\bsmu,\bsmu'\in M_C}B_{\bsmu\bsmu'}\Tr (K_{\boldsymbol \mu'}\rho K^\dag_{\boldsymbol \mu})=0.
\ee
where we dropped the functional dependendency on $s$ for simplicity.
Now we can express the portion of the expression inside the trace in Eq.~\eqref{eq:r_kills_liouvillian} using the property of the recovery operator in Eq.~\eqref{eq:r-invariance_exact}:
\be
\begin{split}
\mathcal R\mathcal L_E(\widehat{\mathcal W}_s\mathcal L_E)^{k-1} \delta\rho_0 & =\sum_{\bsmu,\bsmu'\in M_C} A_{\bsmu\bsmu'} \mathcal R(K_{\boldsymbol \mu} \delta\rho_0 K^\dag_{\boldsymbol \mu'})\\
&= \delta\rho_0 \sum_{\bsmu,\bsmu'\in M_C} A_{\bsmu\bsmu'} C_{\bsmu'\bsmu} = \delta\rho_0 \sum_{\bsmu,\bsmu'\in M_C} A_{\bsmu\bsmu'} \<0|K^\dag_{\boldsymbol \mu'}K_{\boldsymbol \mu}|0\> = 0\\
& = \delta\rho_0 \sum_{\bsmu,\bsmu'\in M_C} A_{\bsmu\bsmu'} \Tr(K_{\boldsymbol \mu'}|0\>\<0|K^\dag_{\boldsymbol \mu}) = 0..
\end{split}
\ee
Similarly,
\be
\begin{split}
\mathcal R\widehat{\mathcal W}_s\mathcal L_E(\widehat{\mathcal W}_s\mathcal L_E)^{k-1} \delta\rho_0 & =\sum_{\bsmu,\bsmu'\in M_C} B_{\bsmu\bsmu'} \mathcal R(K_{\boldsymbol \mu} \delta\rho_0 K^\dag_{\boldsymbol \mu'})\\
&= \delta\rho_0 \sum_{\bsmu,\bsmu'\in M_C} B_{\bsmu\bsmu'} C_{\bsmu'\bsmu} = \delta\rho_0 \sum_{\bsmu,\bsmu'\in M_C} B_{\bsmu\bsmu'} \<0|K^\dag_{\boldsymbol \mu'}K_{\boldsymbol \mu}|0\> = 0\\
& = \delta\rho_0 \sum_{\bsmu,\bsmu'\in M_C} B_{\bsmu\bsmu'} \Tr(K_{\boldsymbol \mu'}|0\>\<0|K^\dag_{\boldsymbol \mu}) = 0.
\end{split}
\ee
Thus, both expressions in Eq.~\eqref{eq:r_kills_liouvillian} vanish and this leads to the statement of the Lemma.
To prove the last statement, we first write the diagram in symbolic form:
\be\label{eq:r_kills_liouvillian}
\begin{split}
&\bigl\<\mathcal O\vvv\underbrace{\x\nnn\x\dots\x\nnn\x}_\textrm{$k$}\bigl\> \equiv \frac {\Delta^k}2 \Tr(\mathcal O \mathcal R\mathcal L_E(\widehat{\mathcal W}_s\mathcal L_E)^{k-1} \delta\rho_0),\\
&\bigl\<\mathcal O\vvv\nnn\underbrace{\x\nnn\x\dots\x\nnn\x}_\textrm{$k$}\bigl\> \equiv \frac {\Delta^k}2 \Tr(\mathcal O \mathcal R\mathcal (\widehat{\mathcal W}_s\mathcal L_E)^{k} \delta\rho_0).
\end{split}
\ee
This expression concludes out proof.}
\end{proof}.
\\

As a result of this Lemma, all terms on the right-hand side of Eq.~\eqref{eqs:first_terms_full} vanish immediately. To show that the first term must vanish, we can use the definition of the operator $Q$ below Eq.~\eqref{eq:second_error_measure} and rewrite
\be
\<\sq\,\nnn\x^k\> = \<(\delta\rho_0 \otimes I)\vvv\nnn\underbrace{\x\nnn\x\dots\x\nnn\x}_\textrm{$k$}\>.
\ee
Therefore we have
\be\label{eqs:KL_cancelling}
\begin{split}
\bigl\<\sq\zzz\underbrace{\x\zzz\x\dots\x\zzz\x}_\textrm{$0<k\leq \ell$}\bigl\> = 0.
\end{split}
\ee
Following the removal of the vanishing terms, we obtain the series
\be
\bigl\<\sq\sss\bigl\> = \frac 1s+\frac 1s\bigl\<\sq\zzz\underbrace{\x\zzz\x\dots\x\zzz\x}_\textrm{$\ell+1$}\bigl\>+\frac 1s\bigl\<\sq\zzz\underbrace{\x\zzz\x\dots\x\zzz\x}_\textrm{$\ell+2$}\bigl\>+\frac 1s\bigl\<\sq\zzz\underbrace{\x\zzz\x\dots\x\zzz\x}_\textrm{$\ell+3$}\bigl\>+\dots.
\ee
This expression can be compacted again using Eq.~\eqref{eqs:dyson_diagrams} to obtain
\be\label{eq:dyson_with_KL}
\bigl\<\sq\sss\bigl\> = \frac 1s+ \frac 1s \bigl\<\sq\sss\underbrace{\x\zzz\x\dots\x\zzz\x}_\textrm{$\ell+1$}\bigl\>.
\ee
The rest of the proof uses this expression iteratively. As the first step, we rewrite
Eq.~\eqref{eq:dyson_with_KL} as
 \be\label{eqs:basic_decomp0}
\bigl\<\sq\sss\bigl\> = \frac 1s + \frac 1s\bigl\<\sq\sss\underbrace{\x\zzz\x\dots\x\zzz\x}_\textrm{$\ell$}\nnn\x\bigl\>+\frac 1s\bigl\<\sq\sss\underbrace{\x\zzz\x\dots\x\zzz\x}_\textrm{$\ell$}\vvv\x\bigl\>.
\ee
The last term vanishes as a result of Lemma~\ref{lem-diag-canc}. \review{To see this, we set
\be\label{eqs:O_lem_decsr}
\mathcal O = \sq\sss\underbrace{\x\zzz\x\dots\x\zzz\x}_\textrm{$\ell$}
\ee
and apply Lemma~\ref{lem-diag-canc} for the case $k=1$.  Then, we} rewrite the remaining terms as
\be\label{eqs:basic_decomp}
\bigl\<\sq\sss\bigl\> = \frac 1s + \frac 1{s}\bigl\<\sq\sss\underbrace{\x\zzz\dots\zzz\x}_\textrm{$\ell$}\nnn\x\bigl\>.
\ee
\review{Next we repeat the procedure in Eq.~\eqref{eqs:basic_decomp0}: we apply the decomposition $\zzz = \nnn+\vvv$ for the rightmost resolvent,
\be
\bigl\<\sq\sss\bigl\> = \frac 1s + \frac 1{s}\bigl\<\sq\sss\underbrace{\x\zzz\dots\zzz\x}_\textrm{$\ell-1$}\nnn\x\nnn\x\bigl\> + \frac 1{s}\bigl\<\sq\sss\underbrace{\x\zzz\dots\zzz\x}_\textrm{$\ell-1$}\vvv\x\nnn\x\bigl\>.
\ee
and, using Lemma~\ref{knill-lafl-lemma} for $k=2$, show that the last term vanishes, therefore
\be
\bigl\<\sq\sss\bigl\> = \frac 1s + \frac 1{s}\bigl\<\sq\sss\underbrace{\x\zzz\dots\zzz\x}_\textrm{$\ell-1$}\nnn\x\nnn\x\bigl\>.
\ee}
We repeat these two steps $\ell-2$ more times to get
\be\label{eqs:final_step_123}
\bigl\<\sq\sss\bigl\> = \frac 1s + \frac 1{s}\bigl\<\sq\sss\underbrace{\x\nnn\x\dots\x\nnn\x}_\textrm{$\ell+1$}\bigl\>.
\ee
As the next step, we take the inverse Laplace transform of the last term using the property that applies to any two analytic functions/operators $g_1$ and $g_2$ \reviewtwo{and their Laplace transforms $\hat g_1$ and $\hat g_2$},
\be\label{eqs:laplace_101}
\mathscr{L}^{-1} [\reviewtwo{\hat g_1(s) \hat g_2(s)}] = \int_0^t dt' g_1(t')g_2(t-t').
\ee
Applying this formula $\ell$ times, we get generalized expression for $\ell$ operators $g_1,\dots,g_{\ell}$ \reviewtwo{and their Laplace transforms $\hat g_1 \dots \hat g_{\ell+1}$},
\be
\mathscr{L}^{-1} [\reviewtwo{\hat g_1(s) \dots  \hat g_{\ell+1}(s)}] = \mathcal T\int dt_1\dots dt_{\ell} g_1(t_1)g_2(t_2-t_1)\dots g_{\ell +1} (t-t_{\ell}).
\ee
where we defined the time-ordered exponential as
\be\label{eqs:tm_ordrd}
\mathcal T\int dt_1\dots dt_k = \int_0^tdt_1\int_{0}^{t_1}dt_2\dots \int_0^{t_{k-1}}dt_k.
\ee
Using this definition, we translate the diagrammatic terms in the second term of Eq.~\eqref{eqs:final_step_123} back into mathematical operator notation, yielding
\be\label{eqs:Linf_conv}
\begin{split}
\mathscr{L}^{-1}\frac 1{s}\bigl\<\sq\sss &\underbrace{\x\nnn\x \dots\x\nnn\x}_{\textrm{$\ell+1$}}\bigl\>  \equiv \frac 12\Delta^{\ell+1}\mathcal T\int dt_0dt_1\dots dt_{\ell} \Tr \Bigl( Q e^{\mathcal L(t_1-t_0)}\mathcal L_E \mathcal W_{t_2-t_1} \dots \mathcal W_{t-t_{\ell}} \mathcal L_E\delta\rho_0\Bigl)\\
&\geq -\frac 12\Delta^{\ell+1}\mathcal T\int dt_0dt_1\dots dt_{\ell}\|e^{\mathcal L^\dag (t_1-t_0)}(Q)\| \|\mathcal L_E \mathcal W_{t_2-t_1} \dots \mathcal W_{t-t_{\ell}} \mathcal L_E\delta\rho_0\|_1\\
&\geq -\frac 12\Delta^{\ell+1}\mathcal T\int dt_0dt_1\dots dt_{\ell} \|\mathcal L_E \mathcal W_{t_2-t_1} \dots \mathcal W_{t-t_{\ell}} \mathcal L_E\delta\rho_0\|_1.
\end{split}
\ee 
Our next goal is to find an appropriate bound for the expression under the time-ordered integrals. Let 
$\mathsf{N}_0 = \{\rho : \operatorname{Tr}(\rho) = 1, \, \rho \geq 0, \rho \in {\rm End}(\mathsf C)\oplus 0\}$ be the set of density operators in the codespace.  Then we can prove the following lemma.
\begin{lem}
For all $\rho_1, \rho_2 \in \mathsf N_0$ and $\tau_j \geq 0$ for all $j \in \{1, \dots, \ell\}$, we have
\be 
\left\|\mathcal L_E\mathcal W_{\tau_\ell}\mathcal L_E \dots \mathcal W_{\tau_1} \mathcal L_E(\rho_1-\rho_2) \right\|_1 \leq 2\exp\left(-\kappa \sum_{j=1}^k\tau_j\right) (\chi+1)^{\ell+1} \|\mathcal L_E\|^\ell_{1\to1, \mathsf E}.
\ee
\end{lem}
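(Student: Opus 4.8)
The plan is to prove the estimate by peeling the superoperators off the chain from right to left, handling one partial‑recovery propagator $\mathcal{W}_{\tau_j}$ together with one error insertion $\mathcal{L}_E$ at each step, and then assembling the layerwise factors. The first ingredient is a contraction estimate for $\mathcal{W}_t=e^{-\kappa t}\mathcal{K}_t$: the bound $\|\mathcal{K}_t(\rho)\|_1\le\chi+1$ proved above for density matrices extends, by decomposing any Hermitian operator into its positive and negative parts (each a nonnegative multiple of a state), to $\|\mathcal{W}_t(O)\|_1\le(\chi+1)e^{-\kappa t}\|O\|_1$ for every Hermitian $O$. Since $\rho_1-\rho_2$ is Hermitian and both $\mathcal{L}_E$ and $\mathcal{W}_t$ preserve hermiticity, every operator produced along the chain is Hermitian, so each of the $\ell$ propagators $\mathcal{W}_{\tau_j}$ contributes exactly the factor $(\chi+1)e^{-\kappa\tau_j}$; these account for the exponential suppression $e^{-\kappa\sum_{j=1}^{\ell}\tau_j}$ and for $\ell$ of the $\ell+1$ factors of $\chi+1$.

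The second ingredient is that each error insertion acts on an operator supported in the correctable subspace $\mathsf{E}$, so that the restricted norm $\|\mathcal{L}_E\|_{1\to1,\mathsf{E}}$ of Eq.~\eqref{eq:contr_norm_E_def} is applicable. I would make this explicit by tracking the error weight layer by layer: $\rho_1-\rho_2\in{\rm End}(\mathsf{C})\oplus 0$ has weight $0$; by the strict error‑reducing property of $\mathcal{W}_t$ recorded in Eq.~\eqref{eqs:cond_on_Wt} each propagator preserves or lowers the weight (its coefficients vanish when $|\bsmu'|>|\bsmu|$ or $|\bsnu'|>|\bsnu|$), while each $\mathcal{L}_E$ raises it by at most one. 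Hence, reading from the right, the operator entering every insertion has weight at most $\ell$ and lies in ${\rm End}(\mathsf{E})\oplus 0$, where $\|\mathcal{L}_E(\,\cdot\,)\|_1\le\|\mathcal{L}_E\|_{1\to1,\mathsf{E}}\,\|\cdot\|_1$ is legitimate; the only operator that leaves $\mathsf{E}$ is produced by the leftmost insertion, at which point the chain terminates. I would organize the argument as an induction on $\ell$ whose step applies one $\mathcal{W}_{\tau_j}$ followed by one $\mathcal{L}_E$ and multiplies the running bound by $(\chi+1)e^{-\kappa\tau_j}\|\mathcal{L}_E\|_{1\to1,\mathsf{E}}$, carrying the invariant ``the current operator is Hermitian and supported in $\mathsf{E}$,'' with base case the single rightmost insertion $\mathcal{L}_E(\rho_1-\rho_2)$. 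Combined with $\|\rho_1-\rho_2\|_1\le 2$, this assembles into the claimed bound $2\,e^{-\kappa\sum_j\tau_j}(\chi+1)^{\ell+1}\|\mathcal{L}_E\|_{1\to1,\mathsf{E}}^{\ell}$.

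The main obstacle is the precise accounting of the powers of $\chi+1$ against those of $\|\mathcal{L}_E\|_{1\to1,\mathsf{E}}$: the chain carries $\ell$ propagators but $\ell+1$ error insertions, so to land on the exponent pattern $(\chi+1)^{\ell+1}\|\mathcal{L}_E\|_{1\to1,\mathsf{E}}^{\ell}$ rather than the pattern obtained by bounding every insertion by pure submultiplicativity, the distinguished base‑case insertion acting directly on the codespace difference $\rho_1-\rho_2$ must be charged a single contraction factor $(\chi+1)$ instead of a factor $\|\mathcal{L}_E\|_{1\to1,\mathsf{E}}$. Establishing this base case is the step where the codespace support of $\rho_1-\rho_2$ and the recovery identity $\mathcal{R}(K_{\bsmu}\rho_0K^\dag_{\bsnu})=C_{\bsnu\bsmu}\rho_0$ of Eq.~\eqref{eq:r-invariance_exact} must be exploited rather than the crude operator‑norm estimate used for the other $\ell$ insertions, and it is exactly what fixes the $\chi$‑versus‑$\|\mathcal{L}_E\|_{1\to1,\mathsf{E}}$ bookkeeping.

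A secondary technical point, which I expect to be routine, is the extension of the $\chi+1$ bound from states to traceless Hermitian operators that is invoked at each $\mathcal{W}$ layer; this follows from the positive/negative‑part decomposition noted above, together with the fact that $\mathcal{W}_t$ annihilates nothing inside $\mathsf{E}$ that would spoil the per‑layer contraction. Verifying that the ``supported in $\mathsf{E}$'' invariant genuinely holds at each error insertion where the restricted norm is invoked — i.e.\ that the weight never exceeds $\ell$ except after the terminal insertion — closes the induction.
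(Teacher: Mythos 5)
Your peeling scheme is, for the bulk of the chain, exactly the paper's own argument: the paper proves the one-step recurrence $\mathcal W_\tau \mathcal L_E(\rho_1-\rho_2)=e^{-\kappa\tau}C_j(\rho'_1-\rho'_2)$ with $C_j\leq(\chi+1)\|\mathcal L_E\|_{1\to1,\mathsf E}$ (Eq.~\eqref{eqs:recurrence_KL}), using the same ingredients you list — the Jordan decomposition of a traceless Hermitian operator into $\frac12\|O\|_1(\rho'_1-\rho'_2)$, the bound $\|\mathcal K_t(\rho_1-\rho_2)\|_1\leq 2(\chi+1)$ from the contraction assumption \eqref{eq:contraction_lR}, and the strict error-reducing property \eqref{eq:good_correction} to keep the error weight at most $\ell$ so that the restricted norm \eqref{eq:contr_norm_E_def} is applicable at every insertion. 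The genuine gap is the step you yourself flag as the "main obstacle." Iterating the recurrence $\ell$ times and then applying the leftmost $\mathcal L_E$ necessarily yields one factor $\|\mathcal L_E\|_{1\to1,\mathsf E}$ per insertion ($\ell+1$ of them) and one factor $(\chi+1)e^{-\kappa\tau_j}$ per propagator ($\ell$ of them), i.e.\ $2e^{-\kappa\sum_j\tau_j}(\chi+1)^{\ell}\|\mathcal L_E\|^{\ell+1}_{1\to1,\mathsf E}$. There is no mechanism by which the base-case insertion acting on $\rho_1-\rho_2$ can be "charged $(\chi+1)$ instead of $\|\mathcal L_E\|_{1\to1,\mathsf E}$": the constant $\chi$ is a property of the recovery semigroup and has no bearing on trace-norm growth under $\mathcal L_E$, and the recovery identity \eqref{eq:r-invariance_exact} concerns $\mathcal R$, which does not appear in the chain. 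Concretely, for single-qubit noise on $n$ qubits a generic codespace difference has $\|\mathcal L_E(\rho_1-\rho_2)\|_1=\Theta(n)$ while $\chi+1$ may be $O(1)$, so the step you propose would be false, not merely hard; with the exponents as printed the inequality itself can fail (e.g.\ a global decoder, $\mathcal K_t=\mathcal I$, hence effectively $\chi=0$, under uniform Pauli noise with large $N$).

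The resolution is that the printed statement has its exponents transposed (and the stray $k$ in the exponential should be $\ell$): the paper's own proof derives $2e^{-\kappa\sum_j\tau_j}(\chi+1)^{\ell}\|\mathcal L_E\|^{\ell+1}_{1\to1,\mathsf E}=\frac{2}{\chi+1}\bigl((\chi+1)\|\mathcal L_E\|_{1\to1,\mathsf E}\bigr)^{\ell+1}e^{-\kappa\sum_j\tau_j}$, and this — not the bound as displayed — is what is consumed downstream to produce the prefactor $\frac{1}{\chi+1}\bigl((\chi+1)\Delta\|\mathcal L_E\|_{1\to1,\mathsf E}/\kappa\bigr)^{\ell+1}$ in Eq.~\eqref{eqs:thm1_final} and Theorem~\ref{generic_bound}. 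So the first two paragraphs of your proposal, taken on their own and stopped before the base-case "fix," already constitute a correct proof of the corrected lemma, essentially identical to the paper's; the flagged bookkeeping step is not a missing lemma to be supplied but an artifact of the misprint, and pursuing it is the one place where your attempt goes wrong.
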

\begin{proof}  We first introduce the set of density operators with an error weight of at most $j$. For any integer $j\geq 0$, this set is defined as
\be\label{eqs;NEk_def}
\mathsf{N}_j := \left\{ \rho = \sum_{\boldsymbol{\mu}, \boldsymbol{\nu} \in S_j} a_{\boldsymbol{\mu}\boldsymbol{\nu}} K_{\boldsymbol{\mu}} \rho_0 K_{\boldsymbol{\nu}} : \rho \geq 0, \, \operatorname{Tr}(\rho) = 1, \, a_{\boldsymbol{\mu}\boldsymbol{\nu}} \in \mathbb{C}, \, \rho_0 \in \operatorname{End}(\mathsf{C}) \oplus 0 \right\},
\ee
where, as before, $K_{\bsmu} = \prod_{\mu \in \bsmu} K_{\mu}$ represents the composite error, $S_j := \{\bsmu : |\bsmu| \leq j, \forall \mu \in\bsmu: K_{\mu} \in E\}$ denotes the set of errors with length at most $j$, and $E := \{E_\alpha, E_\alpha^\dagger E_\alpha : \alpha \in \{1, \dots, N\}\}$ is the set of error operators. Naturally, for all $j > 0$, the inclusions $\mathsf{N}_0 \subset \mathsf{N}_j \subset \mathsf{N}$ hold.

Then, to establish the result of the Lemma, we first demonstrate that
\be\label{eqs:recurrence_KL}
\forall \rho_1, \rho_2 \in \mathsf{N}_{j}, \quad \exists C_j, \rho'_1, \rho'_2 \in \mathsf{N}_{j+1} : \quad \mathcal{W}_\tau \mathcal{L}_E(\rho_1 - \rho_2) = e^{-\kappa\tau}C_j(\rho'_1 - \rho'_2),
\ee
where $C_j \leq (\chi+1)\|\mathcal{L}_E\|_{1 \to 1, \mathsf{E}}$.

 Next, we define similar sets of general Hermitian operators that are not subject to any constraints on their eigenvalues,
\be\label{eqs:sdcsd-pllo}
\mathsf{O}_j := \left\{ O = \sum_{\boldsymbol{\mu}, \boldsymbol{\nu} \in S_j} a_{\boldsymbol{\mu}\boldsymbol{\nu}} K_{\boldsymbol{\mu}} M K_{\boldsymbol{\nu}} : \,O = O^\dagger, \, \operatorname{Tr}(O) = 0,\, a_{\boldsymbol{\mu}\boldsymbol{\nu}} \in \mathbb{C}, \, M \in \operatorname{End}(\mathsf{C}) \oplus 0 \right\}.
\ee

Using the definition in Eq.~\eqref{eqs:sdcsd-pllo} and the fact that Lindbladian $\mathcal L_E$ is trace preserving, we express the action of the Lindbladian as
\be\label{eqs:tr_rep_Lrhodiff}
\begin{split}
 \mathcal{L}_E(\rho_1 - \rho_2) &= \sum_{\bsmu, \bsnu \in S_j} a_{\bsmu \bsnu} \sum_\alpha \Bigl( E_\alpha K_\bsmu M K^\dag_\bsnu E_\alpha^\dag - \frac{1}{2} E_\alpha^\dag E_\alpha K_\bsmu M K^\dag_\bsnu - \frac{1}{2} K_\bsmu M K^\dag_\bsnu E_\alpha^\dag E_\alpha \Bigr) \\
 &= \sum_{\bsmu, \bsnu \in S_{j+1}} b_{\bsmu \bsnu} K_{\bsmu} M K^\dag_{\bsnu} \in \mathsf O_{j+1},
\end{split}
\ee
where $M\in \mathsf O_0$ and $b_{\bsmu \bsnu}$ are certain coefficients that linearly depend on $a_{\bsmu \bsnu}$.

Then we note that, for any traceless Hermitian operator from $\mathsf O_j$, we can express it in the form
\be\label{eqs:trace_rep_anyO}
\forall O\in \mathsf O_{j}, \qquad \exists \rho_1, \rho_2 \in \mathsf{N}_j : \quad O = \frac{1}{2} \|O\|_1 (\rho_1 - \rho_2).
\ee
In fact, any Hermitian operator can be written as \( O = A_+ - A_- \), where \( A_{\pm} \geq 0 \) are positive semidefinite operators. Therefore, the trace norm of the operator \( O \) can be written as
\be
\|O\|_1 = \Tr(A_+) + \Tr(A_-).
\ee
Thus, the zero trace condition leads to $\Tr(A_+) = \Tr(A_-) = \frac{1}{2} \|O\|_1$ and, consequently, to Eq.~\eqref{eqs:trace_rep_anyO}, where we define $\rho_{1,2} := A_\pm/\Tr(A_\pm)$. Since $\Tr \rho_{1,2} = 1$ and $\rho_{1,2}\geq 0$, they belong to $\mathsf N_j$.

Thus, using Eq.~\eqref{eqs:trace_rep_anyO}, we get 
\be
\forall \rho_1, \rho_2 \in \mathsf{N}_j \quad \exists \rho'_1, \rho'_2 \in \mathsf{N}_{j+1} : \qquad \mathcal{L}_E(\rho_1 - \rho_2) = \frac{1}{2} \|\mathcal{L}_E(\rho_1 - \rho_2)\|_1 (\rho'_1 - \rho'_2).
\ee
A similar statement holds for the action of the considered pair of superoperators
\be
\begin{split}
\forall \rho_1,\rho_2\in \mathsf N_j: \qquad   \mathcal{W}_\tau (\rho_1 - \rho_2)& = e^{-\kappa\tau}\sum_{\bsmu, \bsnu \in S_{j}} \mathcal{K}_\tau (K_\bsmu M K^\dag_\bsnu)\\
&= e^{-\kappa\tau}\sum_{\bsmu, \bsnu \in S_{j}} c_{\bsmu \bsnu} K_{\bsmu} M K^\dag_{\bsnu} \in \mathsf O_{j},
\end{split}
\ee
where we used the definition $\mathcal W_\tau = e^{-\kappa\tau}\mathcal K_\tau$ and the property in Eq.~\eqref{eq:good_correction}; here $c_{\bsmu \bsnu}$ are also certain coefficients that linearly depend on $b_{\bsmu \bsnu}$. Combining the results of Eqs.~\eqref{eqs:trace_rep_anyO} and \eqref{eqs:tr_rep_Lrhodiff}, we obtain that, for all $\rho_1,\rho_2\in \mathsf N_j$, there exist pairs $\rho_1',\rho_2'\in \mathsf N_{j+1}$ and $\rho_1'',\rho_2''\in \mathsf N_{j+1}$ such that
\be\label{eqs:iter_C_expls}
\begin{split}
 \mathcal W_\tau\mathcal L_E(\rho_1-\rho_2) =  \frac 12\|\mathcal L_E(\rho_1-\rho_2)\|_1 \mathcal K_\tau(\rho'_1-\rho'_2)& =\frac 12\|\mathcal L_E(\rho_1-\rho_2)\|_1 \mathcal W_\tau(\rho'_1-\rho'_2) \\
 &= \frac 14e^{-\kappa \tau}\|\mathcal L_E(\rho_1-\rho_2)\|_1 \|\mathcal K_\tau(\rho'_1-\rho'_2)\|_1 (\rho''_1-\rho''_2). 
 \end{split}
\ee
To bound this parameter, we note that, for all $\rho_1,\rho_2\in \mathsf N_j$, we have $\rho_1-\rho_2\in\mathsf O_j \subset {\rm End}(\mathsf E)\oplus 0$ for all $k\leq\ell$, where $\mathsf E$ is the subspace of correctaqble error states (see definition at the beginning of Sec.~\ref{sec:general}). Therefore, by definition of error-space-restricted contraction norm $\|\cdot\|_{1\to1, \mathsf E}$ in Eq.~\eqref{eq:contr_norm_E_def}, we get
\be\label{eqs:L_contr_trn}
\forall j\leq \ell,\quad  \rho_1,\rho_2\in \mathsf N_j: \quad \|\mathcal L_E(\rho_1-\rho_2)\|_1\leq 2\|\mathcal L_E\|_{1\to1, \mathsf E}.
\ee
Using the definition of the map $\mathcal{K}_t$ below Eq.~\eqref{eq:full_model}, we obtain
\be
\begin{split}
\mathcal{K}_t(\rho_1 - \rho_2) &= \mathcal{R}(\rho_1 - \rho_2) + e^{\kappa t} \left( e^{\mathcal{L}_R t} (\rho_1 - \rho_2) - \mathcal{R}(\rho_1 - \rho_2) \right) \\
&= \rho_1^\infty - \rho_2^\infty + e^{\kappa t} (\rho_1(t) - \rho_1^\infty) - e^{\kappa t} (\rho_2(t) - \rho_2^\infty),
\end{split}
\ee
where, as before, we denote $\rho_m^\infty := \mathcal{R}(\rho_m)$. Using the main assumption in Eq.~\eqref{eq:contraction_lR}, we arrive at the following bound:
\be\label{eqs:K_contr_trn}
\begin{split}
\forall \rho_1, \rho_2 \in \mathsf{N}: \quad \|\mathcal{K}_t(\rho_1 - \rho_2)\|_1 &\leq \|\rho_1^\infty - \rho_2^\infty\|_1 + e^{\kappa t} \|\rho_1(t) - \rho_1^\infty\|_1 + e^{\kappa t} \|\rho_2(t) - \rho_2^\infty\|_1 \\
&\leq \|\rho_1^\infty - \rho_2^\infty\|_1 + 2\chi \leq 2(\chi+1).
\end{split}
\ee
Combining Eq.~\eqref{eqs:iter_C_expls} with Eqs.~\eqref{eqs:L_contr_trn} and \eqref{eqs:K_contr_trn} we get the statement in Eq.~\eqref{eqs:recurrence_KL}.

By applying Eq.~\eqref{eqs:recurrence_KL} sequentially $\ell$ times while 
using values $\tau_j$ for $j = \{1, \dots, \ell\}$ instead of $\tau$, we 
find that there exist $\rho_1',\rho_2'\in \mathsf N_\ell$ such that
\be\label{eq:almost_final_77yh}
\mathcal W_{\tau_\ell}\mathcal L_E \dots \mathcal W_{\tau_1} \mathcal L_E(\rho_1-\rho_2) = \exp\left(-\kappa \sum_{j=1}^k\tau_j\right)C^k (\rho_1'-\rho_2'),
\ee
where $C := (C_0\dots C_{k-1})^{1/k}\leq (\chi+1) \|\mathcal{L}_E\|_{1 \to 1, \mathsf{E}}$. Finally, by applying $\mathcal L_E$ on both sides of Eq.~\eqref{eq:almost_final_77yh}, taking 1-norm, and using Eq.~\eqref{eqs:L_contr_trn}, we arrive at the statement of the Lemma.
\end{proof}
From this Lemma, we conclude that there exist $\rho'_1,\rho'_2\in  {\mathsf N}$ such that
\be
\begin{split}
\mathscr{L}^{-1}\frac 1{s}\bigl\<\sq\sss\underbrace{\x\nnn\x\dots\x\nnn\x}_{\textrm{$\ell+1$}}\bigl\>\geq -\frac 1{\chi+1}\Bigl(\Delta(\chi+1) \|\mathcal L_E\|_{1\to1, \mathsf E}\Bigl)^{\ell+1} \mathscr{L}^{-1}\Biggl\{\frac {1}{s^2(s+\kappa)^{\ell}}\Biggl\},
\end{split}
\ee
where the last term represents the convolution of $\ell$ exponential functions with two identities, which can be expressed using the special functions mentioned in the statement of this Theorem,
\be
\mathscr{L}^{-1}\Biggl\{\frac {1}{s^2(s+\kappa)^{\ell}}\Biggl\} = F_\ell\Bigl(\kappa t\Bigl).
\ee
Finally, combining Eqs.~\eqref{eqs:error_laplace} and \eqref{eqs:final_step_123}, we arrive at 
\be\label{eqs:thm1_final}
\epsilon(t),\delta(t) \leq  1-\min_{\rho_0}\mathscr{L}^{-1}\bigl\<\sq\sss\bigl\> \leq  \frac 1{\chi+1}\left(\frac{(\chi+1)\|\mathcal L_E\|_{1\to1, \mathsf E}\Delta}{\kappa}\right)^{\ell+1}F_\ell\Bigl(\kappa t\Bigl).
\ee
This expression concludes our proof.

\subsection*{Appendix C: Proof of Lemma~\ref{stoch_bound}}
\labelx[Appendix C]{proof_operator_bound}

This appendix contains the proof of Lemma~\ref{stoch_bound}. To express the logical error, we first consider the trajectory decomposition in Eq.~\eqref{eq:poisson_form}. This allows us to derive the bound
\review{\be\label{eqs:bound_theta}
\begin{split}
\epsilon(t) =1-\min_{\rho_0} \sum_{\boldsymbol \mu\in F} p(\boldsymbol \mu,t)=\max_{\rho_0} \sum_{\boldsymbol \mu\in F} p(\boldsymbol \mu,t)\Bigl(1-
\Tr\bigl(\rho_0 \mathcal R \mathcal E_{\boldsymbol \mu} \rho_0\bigl)\Bigl)
\end{split}
\ee
where the minimization/maximization is over pure states $\rho_0$ in the codespace $\mathsf C$.}
\review{Next, we can express any faithful trajectory $\boldsymbol \mu\in G$ as
\be
\mathcal R\mathcal E_{\bsmu}\mathcal R = \mathcal R\mathcal E_{\bsmu_1}\mathcal R\mathcal E_{\bsmu_2}\dots \mathcal E_{\bsmu_m}\mathcal R,
\ee
where $\bsmu_1$ have length at most $\ell$ and contains only errors, i.e. $(\bsmu_k)_i > 0$ and $|\bsmu_k|\leq \ell$. Then
\be\label{eqs:train_decomp33}
\forall \boldsymbol \mu\in G: \quad \mathcal R\mathcal E_{\boldsymbol \mu}\rho_0 = \mathcal R\mathcal E_{\boldsymbol \mu} \mathcal R\rho_0 = \mathcal R\mathcal E_{\boldsymbol \mu}\mathcal R\rho_0 = (\mathcal R\mathcal E_{\bsmu_1}\mathcal R)(\mathcal R\mathcal E_{\bsmu_2}\mathcal R)\dots \mathcal (\mathcal RE_{\bsmu_m}\mathcal R)(\rho_0),
\ee
where $\boldsymbol \mu_m$ are error subsequences.} For any state $\rho$, we have
\be
\mathcal R \mathcal E_{ \boldsymbol \mu_m}\mathcal R(\rho) = \mathcal R(E_{\boldsymbol \mu_m}\mathcal R(\rho)E^\dag_{\boldsymbol \mu_m}) = C_{\boldsymbol \mu_m \boldsymbol \mu_m}\mathcal R(\rho) =  \<0|E^\dag_{\boldsymbol \mu_m}E_{\boldsymbol \mu_m}|0\>\mathcal R(\rho) = \mathcal R(\rho),
\ee
where we used Eq.~\eqref{eq:r-invariance_exact} and the fact that the sequence $\boldsymbol \mu_m$ has weight smaller than the radius $\ell$ and, thus, satisfies Knill-Laflamme condition. Thus, for any faithful trajectory we have
\be
\mathcal R\mathcal E_{\boldsymbol \mu}\rho_0 = \mathcal R(\rho_0) = \rho_0.
\ee
Finally, using the fact that \review{maximum} over $\rho_0$ is taken over a pure states, we get 
\be\label{eqs:sec_term_vanish345}
\forall \boldsymbol \mu\in G: \quad \Tr\bigl(\rho_0  \mathcal R\mathcal E_{\boldsymbol \mu} \rho_0 \bigl) = \Tr \rho_0^2 = 1.
\ee

Combining this expression and the definition of the Heaviside function, we find from Eq.~\eqref{eqs:bound_theta} that
\review{\be
\begin{split}
\epsilon(t) \leq \max_{\rho_0}\sum_{\boldsymbol \mu\in F\setminus G}  p(\boldsymbol \mu,t)\Bigl(1-
\Tr\bigl(\rho_0  \mathcal E_{\boldsymbol \mu} \rho_0\bigl)\Bigl)+ \max_{\rho_0}\sum_{\boldsymbol \mu\in G}  p(\boldsymbol \mu,t)\Bigl(1-
\Tr\bigl(\rho_0  \mathcal E_{\boldsymbol \mu} \rho_0\bigl)\Bigl).
\end{split}
\ee
The last term vanishes due to Eq.~\eqref{eqs:sec_term_vanish345}.
Taking into account the fact that $1-
\Tr\bigl(\rho_0  \mathcal E_{\boldsymbol \mu} \rho_0\bigl)\leq 1$, we get}
\be
\epsilon(t) \leq \sum_{\boldsymbol \mu\in F\setminus G}  p(\boldsymbol \mu,t).
\ee
This concludes our proof.

\subsection*{Appendix D: Proof of Theorem~\ref{lem:all_time1}}
\labelx[Appendix D]{appendix_b}

This appendix contains the definition of the threshold and the proof of Theorem~\ref{lem:all_time1}. 

Recall that, for a Poissonian noise channel, the error jump operators satisfy $\sum_{\mu=1}^N E^{\dag}_{\mu}E_{\mu} = NI$. The Lindbladian can be written in a similar form as Eq.~\eqref{eq:poisson_form}
\begin{align}\label{sm:eq:lindblad}
    \mathcal{L}\rho = \frac{d}{dt}\rho &= \kappa\left(\mathcal{R}(\rho)-\rho\right)+\Delta\sum_{\mu=1}^N\left(E_{\mu}\rho E_{\mu}^{\dag}-\frac{1}{2}E_{\mu}^{\dag}E_{\mu}\rho-\frac{1}{2}\rho E_{\mu}^{\dag}E_{\mu}\right) \nonumber\\
    &= (\kappa+N\Delta)\left(p_0\left(\mathcal{E}_0(\rho)-\rho\right)+p_1\left(\mathcal{E}_1(\rho)-\rho\right)\right),
\end{align}
where the recovery $\mathcal{E}_0=\mathcal{R}$ and the error process $\mathcal{E}_{1}(\cdot)=\frac{1}{N}\sum_{\mu}E_{\mu}(\cdot) E_{\mu}^{\dag}$ have probabilities $p_0 = \frac{\kappa}{\kappa+N\Delta}$ and $p_1 = \frac{N\Delta}{\kappa+N\Delta}$, respectively. Using a Taylor expansion, we get
\begin{equation}
    \exp(\mathcal L t)= e^{-(\kappa+N\Delta) t}\exp\left((\kappa+N\Delta) t\sum_{\mu\in\{0,1\}}p_\mu \mathcal E_\mu \right)= \sum_{k=0}^{\infty} \mathcal{P}(k)\frac{(t(\kappa + N\Delta))^k}{k!}e^{-t(\kappa +N\Delta)},
\end{equation}
where the channel $\mathcal{P}(k)$ is defined as $\mathcal{P}(k) = (p_0\mathcal{E}_0+p_1\mathcal{E}_1)^k=\sum_{\{\textbf{a}\}} p_{a_1}p_{a_2}\cdots p_{a_k}\mathcal{E}_{a_1}\mathcal{E}_{a_2}\cdots\mathcal{E}_{a_k}$ and $\{\bf a\}$ is a set of all possible binary (jump) sequences of length $k$ with $a_i\in\{0, 1\}$ labeling the recovery and error channels. This is an alternative form of Eq.~\eqref{eq:poisson_form}. Using the definition of the error measure $\epsilon(t)$ [Eq.~\eqref{eq:overlap_log_error}], we find that, for an initial pure state $\rho_0$ in the codespace, we have
\begin{equation}
     \epsilon(t) =\sum_{k=0}^\infty p_e(k) \frac{(t(\kappa + N\Delta))^k}{k!}e^{-t(\kappa +N\Delta)}, \label{eq:PnC}
\end{equation}
where $p_e(k) = 1-\min_{\rho_0}\sum_{\{\textbf{a}\}} p_{a_1}p_{a_2}\cdots p_{a_k}\Tr[\rho_0\mathcal{R}\mathcal{E}_{a_1}\mathcal{E}_{a_2}\cdots\mathcal{E}_{a_k}\rho_0]$.
For integers $k\geq 1,h\geq 1$, a sequence $\textbf{a} =(a_1,\dots,a_k)$ has an error weight of at most $h$ if it contains no more than $h$ consecutive error jumps. Let $\{\textbf{a}_h\}$ be the set of all jump sequences of length $k$ with an error weight of at most $h$. Since $\{\textbf{a}_h\}\subseteq \{\textbf{a}\}$, it follows that
\begin{equation}\label{sm:eq:traj_prob}
\min_{\rho_0}\sum_{\{\textbf{a}_h\}} p_{a_1}p_{a_2}\cdots p_{a_k}\Tr[\rho_0\mathcal{R}\mathcal{E}_{a_1}\mathcal{E}_{a_2}\cdots\mathcal{E}_{a_k}\rho_0]\leq
    \min_{\rho_0}\sum_{\{\textbf{a}\}} p_{a_1}p_{a_2}\cdots p_{a_k}\Tr[\rho_0\mathcal{R}\mathcal{E}_{a_1}\mathcal{E}_{a_2}\cdots\mathcal{E}_{a_k}\rho_0].
\end{equation}
\review{where here and below we assume that we are minimizing over $\rho_0 = |\psi_0\>\<\psi_0|$, which are pure states in the codespace, $|\psi_0\>\in \mathsf C$.}.
To establish an upper bound on $\epsilon(t)$, let $h$ be a tolerable error weight for the code satisfying Eq.~\eqref{sm:eq:dfn_threshold}. 
Let $z \ge 0$ denote the number of zeros in the sequence $\textbf{a}_h = (a_1,a_2,\cdots,a_k)$. Then
\begin{equation}  \mathcal{E}_{a_1}\mathcal{E}_{a_2}\cdots\mathcal{E}_{a_k}=
\mathcal{E}_1^{m_0}\mathcal{R}\mathcal{E}_1^{m_1}\cdots \mathcal{R}\mathcal{E}_1^{m_{z}}, \label{sm:eq:induction}
\end{equation}
where $0 \le m_i \le h$ for $i = 0, \dots, z$ and $z+\sum_{i=0}^zm_i =k$.
Let us rewrite
\begin{align}
\Tr[\rho_0\mathcal{R}\mathcal{E}_1^{m_0}\cdots \mathcal{R}\mathcal{E}_1^{m_{z-1}}\mathcal{R}\mathcal{E}_1^{m_{z}}\rho_0]
&=\Tr[\rho_0\mathcal{R}\mathcal{E}_1^{m_0}\cdots \mathcal{R}\mathcal{E}_1^{m_{z-1}}\left(\mathcal{R}\mathcal{E}_1^{m_{z}}\rho_0-(1-\xi)\rho_0\right)]
\nonumber \\
&+(1-\xi)\Tr[\rho_0\mathcal{R}\mathcal{E}_1^{m_0}\cdots \mathcal{R}\mathcal{E}_1^{m_{z-1}}\rho_0 ]. \label{eq:xi}
\end{align}
By the definition of threshold, it follows from Eq.~\eqref{sm:eq:dfn_threshold} that the first term on the right-hand side of Eq.~(\ref{eq:xi}) is non-negative. Therefore, if $m_z>0$, then
\be
\Tr[\rho_0\mathcal{R}\mathcal{E}_1^{m_0}\cdots \mathcal{R}\mathcal{E}_1^{m_{z-1}}\mathcal{R}\mathcal{E}_1^{m_{z}}\rho_0] \geq
(1-\xi)\Tr[\rho_0\mathcal{R}\mathcal{E}_1^{m_0}\cdots \mathcal{R}\mathcal{E}_1^{m_{z-1}}\rho_0 ].
\ee
If $m_z = 0$, we have instead
\be
\Tr[\rho_0\mathcal{R}\mathcal{E}_1^{m_0}\cdots \mathcal{R}\mathcal{E}_1^{m_{z-1}}\mathcal{R}\mathcal{E}_1^{m_{z}}\rho_0] = \Tr[\rho_0\mathcal{R}\mathcal{E}_1^{m_0}\cdots \mathcal{R}\mathcal{E}_1^{m_{z-1}}\rho_0 ].
\ee
Inductively, we know that after $z+1$ steps, the exponent of the factor $(1-\xi)$ will be equal to the number of non-zero $m_i$'s in the jump sequence. We therefore find that
\begin{equation} \Tr[\rho_0\mathcal{R}\mathcal{E}_1^{m_0}\cdots \mathcal{R}\mathcal{E}_1^{m_{z-1}}\mathcal{R}\mathcal{E}_1^{m_{z}}\rho_0]
\geq (1-\xi)^{(k+1)/2} \geq (1-\xi)^k,
\label{sm:eq:recover_prob}
\end{equation}
where we obtained an upper bound on the number of non-zero $m_i$'s by setting $m_i = 1$ for all $i$ in the relation $z+\sum_{i=0}^zm_i =k$. The number of non-zero $m_i$'s is thus less than or equal to $(k+1)/2$.

Now we are ready to prove the following Lemma:
\begin{lem}\label{sm:lem1}
Assume an $n$-qubit code family with increasing $n$ and a tolerable error weight $h = h(n)$ with respect to the error channel $\mathcal E(\cdot) = \frac{1}{N} \sum_\mu E_\mu(\cdot) E_\mu^\dag$. Then there exists $\xi = 2^{-\Omega(d)}$ such that
\begin{equation}\label{eq:ineq}
     p_e(k)\leq 1-
   \left[(1-\xi)\left(1-p_1^{h+1}\right)\right]^k,
\end{equation}
where $p_1 = \frac{N\Delta}{\kappa+N\Delta}$ and $k\geq 0$.
\end{lem}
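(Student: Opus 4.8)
The plan is to bound $1-p_e(k)$ from below by the contribution of only the faithful jump sequences—those in $\{\textbf{a}_h\}$ with no more than $h$ consecutive error jumps—and to show that this restricted contribution factorizes into the announced product. The two analytic ingredients are already in place: Eq.~\eqref{sm:eq:traj_prob} shows that discarding all sequences outside $\{\textbf{a}_h\}$ can only decrease $\min_{\rho_0}\sum_{\{\textbf{a}\}}(\cdots)$, while Eq.~\eqref{sm:eq:recover_prob} shows that every surviving sequence obeys $\Tr[\rho_0\mathcal{R}\mathcal{E}_{a_1}\cdots\mathcal{E}_{a_k}\rho_0]\geq(1-\xi)^k$, with $\xi=2^{-\Omega(d)}$ inherited from Definition~\ref{sm:def:threshold} via Eq.~\eqref{sm:eq:dfn_threshold}. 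Combining these two facts gives
\begin{equation*}
1-p_e(k)=\min_{\rho_0}\sum_{\{\textbf{a}\}}p_{a_1}\cdots p_{a_k}\Tr[\rho_0\mathcal{R}\mathcal{E}_{a_1}\cdots\mathcal{E}_{a_k}\rho_0]\;\geq\;(1-\xi)^k\sum_{\{\textbf{a}_h\}}p_{a_1}\cdots p_{a_k},
\end{equation*}
where the trailing sum carries no $\rho_0$-dependence because the weights $p_{a_i}$ are state-independent.

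It then remains to lower-bound the purely combinatorial factor $S_k:=\sum_{\{\textbf{a}_h\}}p_{a_1}\cdots p_{a_k}$ by $(1-p_1^{h+1})^k$. I would read $S_k$ probabilistically: treating the $a_i\in\{0,1\}$ as independent Bernoulli variables with $\Pr[a_i=1]=p_1$ (recovery versus error), $S_k$ is exactly the probability that the resulting length-$k$ string contains no run of $h+1$ consecutive ones. Writing $A_i$ for the event ``positions $i,\dots,i+h$ are all ones'' for $1\leq i\leq k-h$, each with probability $p_1^{h+1}$, the faithful set is precisely the intersection of the complements, so $S_k=\Pr[\bigcap_{i=1}^{k-h}\bar A_i]$.

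The main obstacle is sharpness: a union bound over the $A_i$ gives only $S_k\geq 1-(k-h)p_1^{h+1}$, which is strictly weaker than the target $(1-p_1^{h+1})^k$. To obtain the exact product I would invoke positive association of decreasing events under a product measure (the Harris/FKG inequality). Each $\bar A_i$ is a decreasing function of the i.i.d.\ coordinates, hence
\begin{equation*}
S_k=\Pr\!\Big[\textstyle\bigcap_{i=1}^{k-h}\bar A_i\Big]\;\geq\;\prod_{i=1}^{k-h}\Pr[\bar A_i]\;=\;(1-p_1^{h+1})^{k-h}\;\geq\;(1-p_1^{h+1})^{k},
\end{equation*}
the final inequality using $0\leq 1-p_1^{h+1}\leq 1$ together with $k-h\leq k$ (and the statement is vacuous when $k\leq h$, where the intersection is empty). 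Substituting into the previous display yields $1-p_e(k)\geq[(1-\xi)(1-p_1^{h+1})]^k$, which is Eq.~\eqref{eq:ineq}. An equivalent FKG-free route is a transfer-matrix recursion on the trailing run length, tracking $S_k=\sum_{j=0}^h v_j^{(k)}$ with $v_j^{(k)}=p_1 v_{j-1}^{(k-1)}$ and $v_0^{(k)}=p_0 S_{k-1}$; I expect the correlation-inequality argument to be the cleanest way to reach the exact exponential form rather than the looser linear-in-$k$ bound.
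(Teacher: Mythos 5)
Your proposal is correct, and it reaches the bound by a genuinely different route in the key combinatorial step. The first half coincides with the paper: both you and the authors restrict the sum to the faithful set $\{\textbf{a}_h\}$ and pull out the uniform factor $(1-\xi)^k$ from Eq.~\eqref{sm:eq:recover_prob}, reducing everything to the purely combinatorial inequality $\sum_{\{\textbf{a}_h\}}p_{a_1}\cdots p_{a_k}\geq(1-p_1^{h+1})^k$. For that inequality the paper argues algebraically: using $p_0=1-p_1$ it writes $1-p_1^{h+1}=\sum_{i=0}^{h}p_1^{i}p_0$, expands the $k$-th power into concatenated blocks of ``$i$ errors followed by a recovery,'' regroups terms by their first $k$ symbols (each such prefix is automatically a faithful string), and bounds the leftover factors by $1$ --- an elementary, self-contained renewal-type expansion (Eq.~\eqref{sm:eq:final_lem2}). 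You instead read the sum as the probability, under the i.i.d.\ Bernoulli$(p_1)$ product measure, that a length-$k$ string contains no run of $h+1$ ones, and invoke the Harris/FKG inequality: each $\bar A_i$ is a decreasing event, decreasing events under a product measure are positively associated (by induction from the pairwise Harris inequality, since a finite intersection of decreasing events is decreasing), hence $\Pr[\bigcap_i\bar A_i]\geq\prod_i\Pr[\bar A_i]=(1-p_1^{h+1})^{k-h}\geq(1-p_1^{h+1})^{k}$. This is valid, in fact marginally stronger (exponent $k-h$ rather than $k$), and more conceptual; the price is importing a nontrivial correlation inequality where the paper needs only term-by-term bookkeeping. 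Two cosmetic points: your remark that the union bound is ``strictly weaker than the target'' is loose --- for small $k$ the union bound $1-(k-h)p_1^{h+1}$ actually exceeds $(1-p_1^{h+1})^k$; the accurate statement is that it cannot yield the target for all $k$, since it goes negative for large $k$. And for $k\leq h$ the index set of the $A_i$ is empty, so the intersection is the whole space and $S_k=1$ (rather than the intersection being ``empty''), which still gives the claim trivially.
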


\begin{proof}

The equality trivially holds for $k = 0$. We therefore consider the case $k\geq 1$.
To establish an upper bound on this probability, we use Eq.~\eqref{sm:eq:traj_prob} and Eq.~\eqref{sm:eq:recover_prob} to obtain
\begin{equation}\label{eq:target}
     p_{e}(k)\leq 1- \min_{\rho_0}\sum_{\{\textbf{a}_h\}} p_{a_1}p_{a_2}\cdots p_{a_k}\Tr[\rho_0\mathcal{R}\mathcal{E}_{a_1}\mathcal{E}_{a_2}\cdots\mathcal{E}_{a_k}\rho_0]\leq 1- (1-\xi)^{k}\sum_{\{\textbf{a}_h\}} p_{a_1}p_{a_2}\cdots p_{a_k}.
\end{equation}
We want to show that
\begin{equation}\label{eq:induct0}
   \left(1-p_1^{h+1}\right)^k\leq \sum_{\{\textbf{a}_h\}} p_{a_1}p_{a_2}\cdots p_{a_k}.
\end{equation}
To show this, we note that, for a given jump sequence $\textbf{a}_h$, its probability in Eq.~\eqref{eq:induct0} takes the form
\begin{equation}
     p_{a_1}p_{a_2}\cdots p_{a_k}=p_1^{m_0}p_0 p_1^{m_2}p_0
     \cdots
     p_1^{m_{z}},
\end{equation}
where $z+\sum_{i=0}^{z}m_i = k$ and each integer satisfies $0\leq m_i\leq h$. Each jump sequence $\textbf{a}_h$ is uniquely labelled by $\textbf{m}(\textbf{a}_h) = (m_0,m_1,\cdots,m_{z})$. 
We can establish the desired lower bound by noting that, when $p_0\neq 0$,
\begin{align}
(1- p_1^{h+1})^k &= 
\left[\sum_{i=0}^{h}p_1^{i}p_0\right]^k = \sum_{i_1 = 0}^{h}\sum_{i_2 = 0}^{h}\cdots \sum_{i_k = 0}^{h} p_1^{i_1}p_0p_1^{i_2}p_0\cdots p_1^{i_k}p_0
\nonumber \\
&= 
\sum_{z = 0}^k  \sum_{\mathbf{m}}
    \underbrace{p_1^{m_0}p_0 p_1^{m_1}p_0
     \cdots
    p_1^{m_{z}}}_\textrm{first $k$ factors}
     \times \left[\sum_{i'=0}^{h-m_z}p_1^{i'}p_0  \right]\times\left[\sum_{i=0}^{h}p_1^{i}p_0\right]^{k-z-1}\nonumber
     \\
     &\leq
     \sum_{z = 0}^k \sum_{\mathbf{m}}
    p_1^{m_1}p_0 p_1^{m_2}p_0
     \cdots
    p_1^{m_{z}}
     \nonumber \\
     &=
     \sum_{\{\textbf{a}_h\}} p_{a_1}p_{a_2}\cdots p_{a_k}.
     \label{sm:eq:final_lem2}
\end{align}
In the second line, the sum is over $\mathbf{m} = (m_0,m_1,\cdots,m_{z})$, where $0 \leq m_i \leq h$ and $z+\sum_{i=0}^{z}m_i = k$. 
In going from the first line to the second line, we re-write the sum in the first line using the following steps: we fix the first $k$ factors in each term and sum over the rest of the possible factors. Then we take the sum over all possible first $k$ factors (i.e.~$\sum_{z = 0}^k \sum_{\mathbf{m}}$). The inequality in the third line follows from $\left[\sum_{i'=0}^{h-m_z}p_1^{i'}p_0  \right]\left[\sum_{i=0}^{h}p_1^{i}p_0\right]^{k-z-1}\leq 1$, which trivially holds for $z\leq k-1$. When $z = k$ and $p_0\neq 0$, we have $m_i = 0$ for all $i$ due to the constraint $z+\sum_{i=0}^{z}m_i = k$. Therefore, the two factors cancel, and the inequality becomes an equality. When $p_0 = 0$, we can directly verify that Eq.~\eqref{sm:eq:final_lem2} also holds. 
This concludes the proof of the lemma.

\end{proof}

If we substitute Eq.\ (\ref{eq:ineq}) into Eq.~\eqref{eq:PnC}, we get
\begin{align}
    \epsilon(t) &= \sum_k p_{e}(k) \frac{[t(\kappa + N\Delta)]^k}{k!}e^{-t(\kappa +N\Delta)} \nonumber \\
    &\leq  \sum_k\left( 1- \left[(1-\xi)\left(1-\left(\frac{N\Delta}{\kappa+N\Delta}\right)^{h+1}\right)\right]^k\right)\frac{[t(\kappa + N\Delta)]^k}{k!}e^{-t(\kappa +N\Delta)} \nonumber \\
    &= 1-\exp\left(-(1-\xi)(\kappa+N\Delta)\left(\frac{N\Delta}{\kappa+N\Delta}\right)^{h+1}t-\xi(\kappa+N\Delta)t\right) \nonumber\\
    &= 1-\exp\left(-N\Delta(1-\xi)\left(\frac{N\Delta}{\kappa+N\Delta}\right)^{h}t-\xi(\kappa+N\Delta)t\right),
    \label{eq:bound}
\end{align}
which proves Theorem~\ref{lem:all_time1}.
\\
\\
We would like to conclude this Appendix with a side remark on the definition of the tolerable error weight and the threshold. In Definition~\ref{sm:def:threshold}, instead of Eq.~\eqref{sm:eq:dfn_threshold}, we can also consider a natural alternative condition on the measure of fidelity:
\begin{equation}\label{sm:eq:alternative_threshold}
    \mathcal \Tr\left[\rho \mathcal{R}\mathcal E^m \rho\right]\geq 1-\xi,\quad\quad \rho = \ket{\psi}\bra{\psi},
\end{equation}
for $\xi\in [0,1]$ and $0\leq m \leq h$. This condition is necessary but not sufficient for Eq.~\eqref{sm:eq:dfn_threshold} to hold. For instance, a pure logical state $\ket{\psi'}$ with a small logical rotation from the original logical state $\ket{\psi}$ can satisfy Eq.~\eqref{sm:eq:alternative_threshold} but will fail to satisfy Eq.~\eqref{sm:eq:dfn_threshold}. Note that $  \Tr\left[\rho \mathcal{R}\mathcal E^m \rho\right] =  F(\rho,\mathcal{R}\mathcal{E}^{m}\rho)^2$, where the fidelity between two quantum states $\rho,\sigma$ is defined as $F(\rho,\sigma) = \Tr(\sqrt{\rho^{1/2}\sigma\rho^{1/2}})$. For any quantum state $\gamma$, we have~ \cite{nielsen_quantum_2010}
\begin{equation}\label{sm:eq:tria_ineq}
    F(\rho,\sigma)\geq F(\rho,\gamma)F(\gamma,\sigma)-\sqrt{1-F(\rho,\gamma)^2}\sqrt{1-F(\gamma,\sigma)^2}.
\end{equation}
The fidelity measure also satisfies joint concavity~ \cite{nielsen_quantum_2010}
\begin{equation}
    F\left(\sum_i p_i\rho_i, \sum_i p_i\sigma_i\right)\geq \sum_i p_i F(\rho_i, \sigma_i),
\end{equation}
where $p_i$'s form a probability distribution over quantum states $\rho_i, \sigma_i$. For a generic mixed state in the codespace $\rho = \alpha|\psi\>\<\psi|+(1-\alpha)|\phi\>\<\phi|$, with $\alpha\in [0,1]$ and $|\psi\>,|\phi\>\in\mathsf C$, it follows from the joint concavity that
\begin{equation}\label{sm:eq:xi_gen}   F(\rho,\mathcal{R}\mathcal{E}_1^{m}\rho)\geq \alpha F(\rho_{\psi},\mathcal{R}\mathcal{E}_1^{m}\rho_{\psi})+(1-\alpha) F(\rho_{\phi},\mathcal{R}\mathcal{E}_1^{m}\rho_{\phi})\geq \sqrt{1-\xi},
\end{equation}
where $\rho_{\psi} = |\psi\>\<\psi|$ and $\rho_{\phi} = |\phi\>\<\phi|$. The last line follows once the integer $m$ satisfies $0\leq m \leq h$. For convenience, let us denote $\mathcal{M}_{z}=\prod_{i=0}^{z}\mathcal{R}\mathcal{E}_1^{m_i}$. Using  inequality~\eqref{sm:eq:tria_ineq} and the fact that $\sqrt{1-F(\rho_0,\mathcal{M}_{z}\rho_0)^2}\leq 1$, we can deduce that, if $m_z > 0$,
\begin{align}\label{sm:eq:finduction}
    F(\rho_0, \mathcal{M}_{z}\rho_0) &\geq F(\rho_0,\mathcal{M}_{z-1}\rho_0)F(\mathcal{M}_{z-1}\rho_0,\mathcal{M}_{z}\rho_0)-\sqrt{1-F(\mathcal{M}_{z-1}\rho_0,\mathcal{M}_{z}\rho_0)^2}
    \nonumber\\
    &\geq
    \sqrt{1-\xi}F(\rho_0,\mathcal{M}_{z-1}\rho_0)-\sqrt{\xi},
\end{align}
where the last line follows from $F(\mathcal{M}_{z-1}\rho_0,\mathcal{M}_{z}\rho_0)\geq \sqrt{1-\xi}$ when setting $\rho = \mathcal{M}_{z-1}\rho_0$ in Eq.~\eqref{sm:eq:xi_gen}. We also have $ F(\rho_0, \mathcal{M}_{z}\rho_0) = F(\rho_0,\mathcal{M}_{z-1}\rho_0)$ if $m_z = 0$. 
From Eq.~\eqref{sm:eq:finduction}, we note that
\begin{equation}
    \sqrt{1-\xi}F(\rho_0,\mathcal{M}_{z-1}\rho_0)-\sqrt{\xi}\geq \sqrt{1-\xi^{1/4}}F(\rho_0,\mathcal{M}_{z-1}\rho_0)-\sqrt{\xi}
\end{equation}
for $0\leq \xi\leq 1$. Applying the inequality inductively, similarly to the inductive derivation of Eq.~\eqref{sm:eq:recover_prob}, we have
\begin{equation}
     F(\rho_0, \mathcal{M}_{z}\rho_0) \geq \left(1-\xi^{1/4}\right)^{k/2}-\sqrt{\xi}\frac{1-\left(1-\xi^{1/4}\right)^{k/2}}{1-\sqrt{1-\xi^{1/4}}}
     \geq \left(1-\xi^{1/4}\right)^{k/2}
     - 2\xi^{1/4}.
\end{equation}
The second inequality follows from the use of Bernoulli's inequality $\sqrt{1-\xi^{1/4}}\leq 1+\xi^{1/4}/2$.
Since $ F(\rho_0, \mathcal{M}_{z}\rho_0)\geq 0$, this implies
\begin{equation}\label{sm:eq:fidelity_bound}
     F(\rho_0, \mathcal{M}_{z}\rho_0)^2 \geq \left(1-\xi^{1/4}\right)^{k}-4\xi^{1/4}.
\end{equation}
To bound Eq.~\eqref{eq:PnC}, we note that 
\begin{align}
p_e(k) &= 1-\min_{\rho_0}\sum_{\{\textbf{a}\}} p_{a_1}p_{a_2}\cdots p_{a_k}\Tr[\rho_0\mathcal{R}\mathcal{E}_{a_1}\mathcal{E}_{a_2}\cdots\mathcal{E}_{a_k}\rho_0] 
\nonumber \\
&=  1-\min_{\rho_0}\sum_{\{\textbf{a}\}} p_{a_1}p_{a_2}\cdots p_{a_k} F(\rho_0, \mathcal{M}_{z}\rho_0)^2 
\nonumber \\
& \leq 1+4\xi^{1/4}-\left(1-\xi^{1/4}\right)^k\sum_{\{\textbf{a}\}} p_{a_1}p_{a_2}\cdots p_{a_k} 
\nonumber \\
& \leq 1+4\xi^{1/4}-\left(1-\xi^{1/4}\right)^k\left(1-p_1^{h+1}\right)^k,
\end{align}
where $z$ denotes the number of 0's in the jump sequence $\textbf{a}$, and we used Eq.~\eqref{eq:induct0} in the last line.
Substituting this into Eq.~\eqref{eq:PnC}, we get a modified Theorem 2 based on this alternative definition of the tolerable error weight and the threshold:
\begin{equation}
    \epsilon(t)\leq  1+4\xi^{1/4}-\exp\left(-N\Delta(1-\xi^{1/4})\left(\frac{N\Delta}{\kappa+N\Delta}\right)^{h}t-\xi^{1/4}(\kappa+N\Delta)t\right).
\end{equation}
Note that $\xi$, and hence the contribution $\xi^{1/4}$, is exponentially small in the distance of the quantum codes. For codes with a large enough distance, the bound yields qualitatively the same scaling of the error rate as Eq.~\eqref{eq:bound} as the code distance increases.
\subsection*{Appendix E: Proof of Theorem~\ref{upper_bound_sl1}}
\labelx[Appendix E]{appendix_c}

This appendix contains the necessary assumptions and the proof of Theorem~\ref{upper_bound_sl1}. In particular, we consider a noise model where the error jump operators satisfy $E_\mu^2 = I$ and $E_\mu E_{\mu'} = \pm E_{\mu'}E_{\mu}$ for all $\mu,\mu'$. This noise model is a special case of a Poisonnian noise model. We will assume exactly the same setup as in the previous appendix, i.e.\ Eqs.~\eqref{sm:eq:lindblad} and~\eqref{eq:PnC}. We prove an error bound for the class of $n$-qubit error-correcting codes $\mathsf C$ that have a varying distance $d = d(n)$ and a tolerable error weight $h(n)$ tailored for this particular noise model (see Eq.~\eqref{sm:eq:xi_pauli} for a precise definition).

Let $\mathcal{Q}_0 = \mathcal{I}$ be the identity channel. For $k\geq 1$, we also define $\mathcal{Q}_k(\cdot) = \frac{1}{|\textbf{S}_k|}\sum_{\{\mu\}_k\in S_k} E_{\{\mu\}_k}(\cdot) E_{\{\mu\}_k}^{\dag}$, where $E_{\{\mu\}}=\prod_{\mu\in\{\mu\}}E_\mu$ and the set $\textbf{S}_k$ is the set of all possible distinct $k$ error indices. Note that $\mathcal{Q}_1 = \mathcal{E}_1$, where $\mathcal{E}_1(\cdot) = \frac{1}{N}\sum_{\mu}E_{\mu}(\cdot)E_{\mu}^{\dag}$ is the total error channel defined in the Lindbladian in Eq.~\eqref{sm:eq:lindblad} and $N$ is the total number of error jumps in the Lindbladian.

Let us consider the channel $\mathcal{Q}_1^k$. The product expands into a convex combination of terms that contain a different number of $E_{\mu}$'s.  Since the channel is symmetric under permutation of the error indices, all the terms with the same number of $E_{\mu}$'s will have the same coefficient. In particular, this implies that, for any integer $k\geq 0$,
\begin{equation}\label{sm:eq:phys_err_decomp}
    \mathcal{Q}_1^k = \sum_{i = 0}^k r_{ki}\mathcal{Q}_i,
\end{equation}
where $r_{ki}\geq 0$ and $\sum_i r_{ki} = 1$.

We consider the class of $n$-qubit error-correcting codes $\mathsf C$ with a varying distance $d = d(n)$ that satisfy the following: There exists an integer-valued function $h = h(n)>0$ such that for $k\leq h$ and any $|\psi\>\in\mathsf C$, the following holds:
\begin{equation} \label{sm:eq:xi_pauli} \mathcal{R}\mathcal{Q}_k|\psi\>\<\psi|-(1-\xi)|\psi\>\<\psi|\geq 0,
\end{equation}
where $\xi = 2^{-\Omega(d)}$ and is independent of $|\psi\>$. \review{Such code family is said to have a tolerable error weight with uniform Pauli noise}. It follows that
\begin{equation}
    \mathcal{R}\mathcal{Q}_1^k|\psi\>\<\psi| = \sum_{i = 0}^k r_{ki} \mathcal{R}\mathcal{Q}_i|\psi\>\<\psi|\geq (1-\xi)|\psi\>\<\psi|.
\end{equation}
Note that $h(n)$ is a tolerable error weight for $\mathsf{C}$. For example, if we let $h = \ell$, where $\ell$ is the error radius satisfying $d = 2\ell+1$, then we have $\xi = 0$. Moreover, if there exists a constant $f>0$ such that $nf\leq h$ for all $n$, then for $k\leq nf$, the code we consider has a threshold according to Definition~\ref{sm:def:threshold}. 
This set of codes is not very restricted and contains commonly known examples, e.g. the quantum stabilizer codes subject to single-qubit Pauli noise.

We now proceed to prove Theorem~\ref{upper_bound_sl1}. 
We first prove the following lemma:
\begin{lem}
Assume an $n$-qubit code family with increasing $n$ and a tolerable error weight $h = h(n)$ with respect to the error channel defined above. Then $p_e(k)$ in Eq.~\eqref{eq:PnC} satisfies
\begin{equation}\label{eq:ineq2}
     p_e(k)\leq 1-\left[(1-\xi)\left(1- p_1s_1\right)\right]^k,
\end{equation}
where $\xi = 2^{-\Omega(d)}$ and $s_1$ is the solution to the recurrence relation
\begin{equation}
     s_v = \frac{v}{N}p_1s_{v-1} + \left(1-\frac{v}{N}\right)p_1s_{v+1},\ s_0 = 0,\ s_{h+1} = 1,
\end{equation}
with $p_1 = \frac{N\Delta}{\kappa+N\Delta}$.
\end{lem}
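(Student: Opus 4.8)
The plan is to reproduce the proof of Lemma~\ref{sm:lem1} (Appendix D) step by step, replacing the ``at most $h$ consecutive errors'' counting by a birth--death walk on the \emph{effective} error weight that survives Pauli cancellation. I start from the trajectory form of $p_e(k)$ in Eq.~\eqref{eq:PnC} and split each jump string $\textbf a$ into inter-recovery blocks, $\mathcal E_1^{m_0}\mathcal R\mathcal E_1^{m_1}\cdots\mathcal R\mathcal E_1^{m_z}$. The one genuinely new input is the expansion $\mathcal E_1^{m}=\mathcal Q_1^{m}=\sum_j r_{mj}\mathcal Q_j$ of Eq.~\eqref{sm:eq:phys_err_decomp}: inside a block, $m$ uniform Pauli jumps collapse to a uniform mixture of weight-$j$ channels $\mathcal Q_j$, $j\le m$, so that the recovery acting after the block sees only the reduced weight $j$. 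For every block with $j\le h$ I invoke the tolerable-weight condition $\mathcal R\mathcal Q_j\ket\psi\bra\psi\ge(1-\xi)\ket\psi\bra\psi$ from Eq.~\eqref{sm:eq:xi_pauli}, exactly as Eq.~\eqref{sm:eq:recover_prob} was used; since there are at most $k$ blocks this extracts a factor $(1-\xi)^{k}$ and reduces the lemma to the classical estimate that the trajectories whose effective weight never reaches $h+1$ carry total probability at least $(1-p_1 s_1)^{k}$.

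The coefficients $r_{mj}$ are precisely the kernel of a birth--death chain, $\mathcal Q_1\mathcal Q_v=\tfrac{v}{N}\mathcal Q_{v-1}+\tfrac{N-v}{N}\mathcal Q_{v+1}$: one error lowers the effective weight with probability $v/N$ (the new Pauli coincides with one already present and annihilates) and raises it with probability $1-v/N$. Together with the per-step recovery probability $p_0=1-p_1$, which resets the weight to $0$, this is a Markov chain on $\{0,1,\dots,h,h{+}1\}$ with $h{+}1$ absorbing, and Eq.~\eqref{eq:recurrence} is exactly its first-passage identity: $s_v$ is the probability of reaching $h+1$ before a recovery returns the chain to $0$, with $s_0=0$, $s_{h+1}=1$. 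The decisive computation is then the probability that a single block started at weight $0$ is \emph{safe}, i.e.\ returns to $0$ by a recovery before the weight ever reaches $h+1$. A one-step analysis gives this block probability as $p_0+p_1(1-s_1)=1-p_1 s_1$, the exact analogue of the elementary geometric sum $\sum_{i=0}^h p_1^i p_0=1-p_1^{h+1}$ of Appendix D, now evaluated through the harmonic function $s_v$.

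With the per-block safe probability identified as $1-p_1 s_1$, I would close the argument by the same truncation/telescoping used in Eq.~\eqref{sm:eq:final_lem2}: expanding $(1-p_1 s_1)^{k}$ as a sum over $k$ safe blocks, fixing the first $k$ jumps and bounding the remaining factors by $1$, yields $(1-p_1 s_1)^{k}\le\sum_{\text{good }\textbf a}(\text{prob})$, where ``good'' means the effective weight stays $\le h$ throughout. Combining this with the $(1-\xi)^{k}$ harvested above gives $1-p_e(k)\ge[(1-\xi)(1-p_1 s_1)]^{k}$, which is Eq.~\eqref{eq:ineq2}; the observation that the number of weight-incrementing blocks is at most $k$ (so $1-\xi$ is raised to a power $\le k$) is identical to the remark below Eq.~\eqref{sm:eq:recover_prob}.

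I expect the main obstacle to be the second paragraph rather than the bookkeeping. Two points need care: justifying by permutation symmetry that $\mathcal Q_1^m$ really collapses to the uniform mixtures $\mathcal Q_j$ with the stated birth--death weights (so that only the final weight, and not the intermediate history, governs whether $\mathcal R$ succeeds), and the fact that---unlike the Poissonian case, where each block contained at most $h$ errors---the safe excursions here have unbounded length, with finiteness restored only by the recovery ``leakage'' $p_0$ that is already encoded in the recurrence for $s_v$. Handling this unbounded-length telescoping while keeping the clean product form $(1-p_1 s_1)^{k}$ is the crux.
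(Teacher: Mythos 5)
Your proposal is correct and follows essentially the same route as the paper's Appendix E proof: the same inter-recovery block decomposition, the permutation-symmetric expansion $\mathcal Q_1^m=\sum_i r_{mi}\mathcal Q_i$, extraction of $(1-\xi)^k$ via the uniform-Pauli tolerable-weight condition, the birth--death walk on the effective error weight with $h+1$ absorbing (whose first-passage probabilities satisfy exactly the stated recurrence), and the conditioning-on-the-first-$k$-jumps argument (the paper's process $\Phi(k)$, your telescoping) to lower-bound the good-trajectory probability by $(1-p_1s_1)^k$. The subtleties you flag---that only the collapsed weight matters for $\mathcal R$, and that excursions have unbounded length terminated only by the recovery leakage $p_0$---are precisely the points the paper handles with the random-walk identification and the fact that the walk terminates with probability one.
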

\begin{proof}
Our goal is to lower-bound the contribution
$
\min_{\rho_0}\sum_{\{\textbf{a}\}}
    p_{a_1}p_{a_2}\cdots p_{a_k}\Tr[\rho_0\mathcal{R}\mathcal{E}_{a_1}\mathcal{E}_{a_2}\cdots\mathcal{E}_{a_k}\rho_0]$ contained in $p_e(k)$ (defined below Eq.~\eqref{eq:PnC}), where $\{\textbf{a}\}$ is the set of length-$k$ trajectories $\textbf{a} = (a_1,a_2,\cdots,a_k)$ with the labels $a_i\in\{0,1\}$.
For a given jump sequence $\textbf{a}$, we can write
\begin{align}
 p_{a_1}p_{a_2}\cdots p_{a_k}\Tr[\rho_0\mathcal{R}\mathcal{E}_{a_1}\mathcal{E}_{a_2}\cdots\mathcal{E}_{a_k}\rho_0]
 &=
(p_1^{m_0})(p_0p_1^{m_1})\cdots (p_0p_1^{m_{z}})\Tr[\rho_0\mathcal{R}\mathcal{E}_1^{m_0}\cdots \mathcal{R}\mathcal{E}_1^{m_{z}}\rho_0],
\end{align}
where $z$ denotes the number of $0$'s in the sequence $\textbf{a} = (a_1,\cdots, a_k)$ and $z+\sum_{i = 0}^{z}m_i=k$. Note that the sequence $\textbf{a}$ is also uniquely labelled by $\textbf{m}(\textbf{a}) = (m_0,m_1,\cdots,m_z)$. We use Eq.~\eqref{sm:eq:phys_err_decomp} to decompose the error subsequence
\begin{align}
&(p_1^{m_0})(p_0p_1^{m_1})\cdots (p_0p_1^{m_{z}})\Tr[\rho_0\mathcal{R}\mathcal{E}_1^{m_0}\cdots \mathcal{R}\mathcal{E}_1^{m_{z}}\rho_0]
 \nonumber \\
 &=
 \left(p_1^{m_0}\sum_{i_0=0}^{m_0}r_{m_0i_0}\right)\left(p_0p_1^{m_1}\sum_{i_1=0}^{m_1}r_{m_1i_1}\right)\cdots
 \left(p_0p_1^{m_z}\sum_{i_z=0}^{m_z}r_{m_zi_z}\right)
 \Tr\left[\rho_0\mathcal{R}\mathcal{Q}_{i_0}\mathcal{R}\mathcal{Q}_{i_1}\cdots \mathcal{R}\mathcal{Q}_{i_z}\rho_0\right]
 \nonumber\\
 &\geq
 (1-\xi)^k\left(p_1^{m_0}\sum_{i_0=0}^{\min\{h,m_0\}}r_{m_0i_0}\right)\left(p_0p_1^{m_1}\sum_{i_1=0}^{\min\{h,m_1\}}r_{m_1i_1}\right)\cdots
 \left(p_0p_1^{m_z}\sum_{i_z=0}^{\min\{h,m_z\}}r_{m_zi_z}\right)
 \nonumber \\
 &=
 (1-\xi)^k\sum_{i_0=0}^{\min\{h,m_0\}}\sum_{i_1=0}^{\min\{h,m_1\}}\cdots
\sum_{i_z=0}^{\min\{h,m_z\}} \left(p_1^{m_0}r_{m_0i_0}\right)\left(p_0p_1^{m_1}r_{m_1i_1}\right)\cdots
 \left(p_0p_1^{m_z}r_{m_zi_z}\right).
 \label{sm:eq:path_seq2}
\end{align}
In the second line, we restricted the sums and used the fact that
\begin{equation}
    \Tr\left[\rho_0\mathcal{R}\mathcal{Q}_{i_0}\mathcal{R}\mathcal{Q}_{i_1}\cdots \mathcal{R}\mathcal{Q}_{i_z}\rho_0\right]\geq (1-\xi)^k,
\end{equation}
which follows from Eq.~\eqref{sm:eq:xi_pauli} using the same reasoning as the one used to derive Eq.~\eqref{sm:eq:recover_prob}.

To proceed, we note that the decomposition in Eq.~\eqref{sm:eq:phys_err_decomp} can be interpreted as a random-walk process that either creates or annihilates an error at each step. Each jump randomly applies one of the $N$ different errors. The same error cancels with itself if it is triggered an even number of times. Let $v$ be the number of physical errors at the current configuration (i.e.~$E_{\{\mu\}}\ket{\psi}\bra{\psi}E_{\{\mu\}}^{\dag}$ with a subset of error indices satisfying $|\{\mu\}|=v$ and some fixed reference logical state $\ket{\psi}$). The next jump has a probability of $1-v/N$ to create a physical error or a probability of $v/N$ to cancel a physical error. At each step, the number of physical errors will be updated accordingly.
For example,
\begin{equation}
    \mathcal{Q}_1\mathcal{Q}_1 = 1\cdot\frac{1}{N}\cdot\mathcal{Q}_0+1\cdot\frac{N-1}{N}\cdot\mathcal{Q}_2,
\end{equation}
where $\mathcal{Q}_2(\cdot) = \frac{1}{N(N-1)}\sum_\mu\sum_{\mu'\neq \mu}E_\mu E_{\mu'}(\cdot) E_{\mu'}^{\dag}E_\mu^{\dag}$. That is, when applying $\mathcal{Q}_1$ twice, there are two possible paths for the errors: (i) The first jump creates an error, and the second jump cancels the created error; (ii) The first jump creates an error, and the second jump creates another error. Similarly, we have $\mathcal{Q}_2\mathcal{Q}_1 = \frac{2}{N}\mathcal{Q}_1+\frac{N-2}{N}\mathcal{Q}_3$. This leads to three paths:
\begin{equation}\label{sm:eq:e3}
    \mathcal{Q}_1\mathcal{Q}_1\mathcal{Q}_1=1\cdot\frac{1}{N}\cdot 1\cdot\mathcal{Q}_1 + 1\cdot\frac{N-1}{N}\cdot\frac{2}{N}\cdot\mathcal{Q}_1+1\cdot\frac{N-1}{N}\cdot\frac{N-2}{N}\cdot\mathcal{Q}_3.
\end{equation}
The three terms correspond to the three possible paths when applying $\mathcal{Q}_1$ three times. This suggests that each coefficient $r_{ki}$ in Eq.~\eqref{sm:eq:phys_err_decomp} takes the form
\begin{equation}\label{sm:eq:path_decomp}
    r_{mi} = \sum _{\{\textbf{v}(m,i)\}}g(v_0,v_1)\cdots g(v_{m-1},v_m),
\end{equation}
where $\{\textbf{v}(m,i)\}$ is the set of all paths that lead to $i$ physical errors when $\mathcal{Q}_1$ is applied $m$ times. Here $\textbf{v}(m,i) = (v_0,v_1,v_2,\cdots, v_m)$, where $v_j$ is the number of physical errors after the $j$-th jump. By definition, $v_0 = 0$ and $v_m = i$. The coefficient $g(v_{j-1},v_{j})$ is the probability that $v_{j}$ errors are present after the $j$-th jump, conditioned on the presence of $v_{j-1}$ errors after the $(j-1)$-th jump. Explicitly, we have, for $j\geq 1$,
\begin{equation}
    g(v_{j-1},v_j) =
    \begin{cases}
        \frac{v_{j-1}}{N}\quad\text{if }v_{j}-v_{j-1}=-1,
        \\
         1-\frac{v_{j-1}}{N}\quad\text{if }v_{j}-v_{j-1}=1.
    \end{cases}
\end{equation}

The goal is to lower-bound the contribution
$
\min_{\rho_0}\sum_{\{\textbf{a}\}}
    p_{a_1}p_{a_2}\cdots p_{a_k}\Tr[\rho_0\mathcal{R}\mathcal{E}_{a_1}\mathcal{E}_{a_2}\cdots\mathcal{E}_{a_k}\rho_0]$ from Eq~\eqref{eq:PnC}. Summing over all possible length-$k$ trajectories $\{\textbf{a}\}$ is the same as summing over $\{\textbf{m}(\textbf{a})\}$, which labels each trajectory in $\{\textbf{a}\}$ uniquely by the integer sequence $(m_0,\cdots,m_z)$. Summing Eq.~\eqref{sm:eq:path_seq2} over $\{\textbf{m}(\textbf{a})\}$ and  using Eq.~\eqref{sm:eq:path_decomp} leads to
\begin{align}
    &\sum_{\{\textbf{m}(\textbf{a})\}}\sum_{i_0=0}^{\min\{h,m_0\}}\sum_{i_1=0}^{\min\{h,m_1\}}\cdots
\sum_{i_z=0}^{\min\{h,m_z\}}
 \left(p_1^{m_0}r_{m_0i_0}\right)\left(p_0p_1^{m_1}r_{m_1i_1}\right)\cdots
 \left(p_0p_1^{m_z}r_{m_zi_z}\right)
 \nonumber \\
&=\sum_{\{\textbf{m}(\textbf{a})\}}\sum_{i_0=0}^{\min\{h,m_0\}}\sum_{i_1=0}^{\min\{h,m_1\}}\cdots
\sum_{i_z=0}^{\min\{h,m_z\}}
\sum_{\{\textbf{v}(m_0,i_0)\}}\sum_{\{\textbf{v}(m_1,i_1)\}}\cdots \sum_{\{\textbf{v}(m_z,i_z)\}}
\nonumber\\
&\quad \quad
\left(\prod_{j=1}^{m_0}p_1g(v_{j-1},v_j)\right)p_0\left(\prod_{j=1}^{m_1}p_1g(v_{j-1},v_j)\right)p_0\cdots
\left(\prod_{j=1}^{m_z}p_1g(v_{j-1},v_j)\right)
\nonumber\\
&\geq \sum_{\{\textbf{m}(\textbf{a})\}}\sum_{i_0=0}^{\min\{h,m_0\}}\sum_{i_1=0}^{\min\{h,m_1\}}\cdots
\sum_{i_z=0}^{\min\{h,m_z\}}
\sum_{\{\textbf{v}_h(m_0,i_0)\}}\sum_{\{\textbf{v}_h(m_1,i_1)\}}\cdots \sum_{\{\textbf{v}_h(m_z,i_z)\}}
\nonumber\\
&\quad \quad
\left(\prod_{j=1}^{m_0}p_1g(v_{j-1},v_j)\right)p_0\left(\prod_{j=1}^{m_1}p_1g(v_{j-1},v_j)\right)p_0\cdots
\left(\prod_{j=1}^{m_z}p_1g(v_{j-1},v_j)\right),
\label{sm:eq:random_walk}
\end{align}
where in the last inequality we restrict the sum to the subset $\{\textbf{v}_h(m,i)\} \subseteq \{\textbf{v}(m,i)\}$ of paths along which the number of physical errors $v$ always satisfies $v\leq h$.
Each term in the sum is a product of $k$ jump probabilities for a random-walk trajectory specified by the sequence $(\textbf{v}_h(m_0,i_0),\cdots,\textbf{v}_h(m_z,i_z))$. To proceed, let us now define a random walk according to the rules in Eq.~\eqref{sm:eq:path_decomp}: Given a configuration with $v$ physical errors, the random walk updates with one of the three stochastic jumps:
\begin{enumerate}
    \item With probability $p_1(1-\frac{v}{N})$, $v\to v+1$.
    \item With probability $p_1\frac{v}{N}$, $v\to v-1$.
    \item With probability $p_0$, the process terminates and returns success.
\end{enumerate}
After the first jump of the walk, two more termination checks (note that these are not counted as jumps) are done before each future stochastic jump is taken:
\begin{enumerate}
    \item[4.] If $v = 0$, the process terminates and returns success.
    \item[5.]  If $v = h+1$, the process terminates and returns failure. 
\end{enumerate}
This random walk is always initialized in a configuration with zero physical errors, and the walk terminates with probability one (non-terminating trajectories have an infinite number of steps and therefore a probability of zero). To establish a lower bound for Eq.~\eqref{sm:eq:random_walk}, we consider a stochastic process $\Phi(k)$ of $k$ consecutive independent random-walk processes defined above. The stochastic process $\Phi(k)$ returns success if all $k$ random walks return success. By definition, $\Phi(k)$ contains at least $k$ jumps before it is terminated.
Furthermore, we note that the set of all possible first $k$ jumps of a successful $\Phi(k)$ is the same as the set of trajectories in Eq.~\eqref{sm:eq:random_walk}.

Namely, any trajectory specified by $(\textbf{v}_h(m_0,i_0),\cdots,\textbf{v}_h(m_z,i_z))$ in Eq.~\eqref{sm:eq:random_walk} is a valid trajectory for the first $k$ jumps in a successful $\Phi(k)$, and any possible set of first $k$ jumps taken by the process can be specified by some $(\textbf{v}_h(m_0,i_0),\cdots,\textbf{v}_h(m_z,i_z))$. 
To see this, suppose we have a trajectory for the first $k$ jumps for a successful $\Phi(k)$. Since no more than $h$ physical errors can be generated by the trajectory, we can find all the subsequences of jumps separated by a recovery (termination with a success) and define the corresponding $\textbf{v}_h(m,i)$ for each subsequence. Conversely, if we have some $(\textbf{v}_h(m_0,i_0),\cdots,\textbf{v}_h(m_z,i_z))$, we can identify all the subsequences of jumps separated by the termination steps 3 and 4 of $\Phi(k)$, with each subsequence belonging to one of the independent random walks in $\Phi(k)$. We therefore have a one-to-one map between the two sets of trajectories. It follows that they are the same set.

This identification leads to the lower bound
\begin{align}
&\sum_{\{\textbf{m}(\textbf{a})\}}\sum_{i_0=0}^{\min\{h,m_0\}}\sum_{i_1=0}^{\min\{h,m_1\}}\cdots
\sum_{i_z=0}^{\min\{h,m_z\}}
\sum_{\{\textbf{v}_h(m_0,i_0)\}}\sum_{\{\textbf{v}_h(m_1,i_1)\}}\cdots \sum_{\{\textbf{v}_h(m_z,i_z)\}}
\nonumber\\
&\quad \quad
\left(\prod_{j=1}^{m_0}p_1g(v_{j-1},v_j)\right)p_0\left(\prod_{j=1}^{m_1}p_1g(v_{j-1},v_j)\right)p_0\cdots
\left(\prod_{j=1}^{m_z}p_1g(v_{j-1},v_j)\right)
\nonumber \\
&\geq\sum_{\{\textbf{m}(\textbf{a})\}}\sum_{i_0=0}^{\min\{h,m_0\}}\sum_{i_1=0}^{\min\{h,m_1\}}\cdots
\sum_{i_z=0}^{\min\{h,m_z\}}
\sum_{\{\textbf{v}_h(m_0,i_0)\}}\sum_{\{\textbf{v}_h(m_1,i_1)\}}\cdots \sum_{\{\textbf{v}_h(m_z,i_z)\}}
\nonumber \\
&\quad\quad\quad\quad \Pr\left(\Phi(k)\text{ returns success}|\text{First $k$ jumps}\right)
\nonumber\\
&\quad \quad\quad\quad\quad\quad\times
\left(\prod_{j=1}^{m_0}p_1g(v_{j-1},v_j)\right)p_0\left(\prod_{j=1}^{m_1}p_1g(v_{j-1},v_j)\right)p_0\cdots
\left(\prod_{j=1}^{m_z}p_1g(v_{j-1},v_j)\right)
\nonumber\\
&=\Pr\left(\Phi(k)\text{ returns success}\right), \label{sm:eq:stochastic_bound}
\end{align}
where the conditional probabilities are the success probability of $\Phi(k)$ conditioned on its first $k$ stochastic jumps.

To solve for $\Pr\left(\Phi(k)\text{ returns success}\right)$, we only need to solve for the success probability of each random walk. Let us now consider a generalized random walk that follows the same rules but can start from initial configurations with any number of physical errors. Let  $s_v$ be the conditional probability that, when initialized in a configuration with $v\in [1, h]$ physical errors, the random walk terminates with failure. For mathematical convenience, we also define $s_0 =0$ and  $s_{h+1} = 1$. The probability $s_v$ for $v\in[1,h]$ can be solved from a recurrence relation
\begin{equation}
    s_v = \frac{v}{N}p_1s_{v-1} + \left(1-\frac{v}{N}\right)p_1s_{v+1},\ s_0 = 0,\ s_{h+1} = 1.
\end{equation}
Recall that $s_1$ is the failure probability conditioned on an initial configuration with one physical error. To get the failure probability, we need to multiply $s_1$ by $p_1$---the probability of creating one error from a zero-error configuration. Therefore, the failure probability for a single random walk starting from a zero-error configuration is $p_1s_1$. The success probability for the random walk and $\Phi(k)$ are therefore $(1- p_1s_1)$ and $(1- p_1s_1)^k$, respectively. This provides a lower bound for the contribution
\begin{equation}
    \min_{\rho_0}\sum_{\{\textbf{a}\}}
    p_{a_1}p_{a_2}\cdots p_{a_k}\Tr[\rho_0\mathcal{R}\mathcal{E}_{a_1}\mathcal{E}_{a_2}\cdots\mathcal{E}_{a_k}\rho_0]\geq \left[(1-\xi)(1- p_1s_1)\right]^k.
\end{equation}
This yields the desired bound stated in the lemma. 
\end{proof}

Plugging the bound on $p_e(k)$ from Eq.\ (\ref{eq:ineq2}) into Eq.\ (\ref{eq:PnC}), we arrive at the desired bound in Theorem \ref{upper_bound_sl1}.

\begin{figure}
    \centering   \includegraphics[scale =0.5]{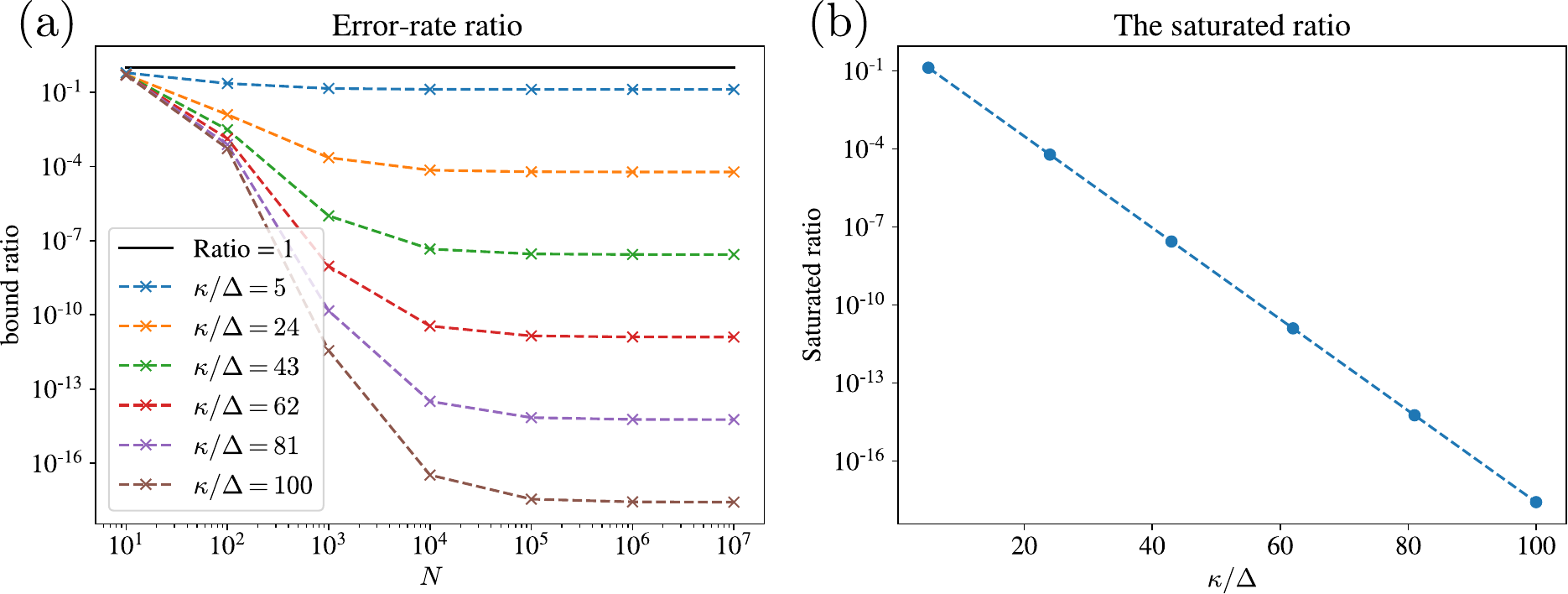}
    \caption{Numerical solutions to the error-rate ratio $s_1/\left(\frac{N\Delta}{\kappa+N\Delta}\right)^h$ for $h = 0.4N$. (a) The log-log plot shows the ratio as a function of $N$ for different $\kappa/\Delta$. We see that the ratio for different $\kappa/\Delta$ eventually saturates to a constant at large $N$. (b) The semi-log plot shows the saturated ratio as a function of $\kappa/\Delta$ with a fitted line $-0.4957\kappa/\Delta +0.0005$. Since $\lim_{N\to\infty}\left(\frac{N\Delta}{\kappa+N\Delta}\right)^{fN} = e^{-f\kappa/\Delta}$ for any constant $f$, the linearity of the plot of the saturated ratio suggests that $s_1$ has the form $\log s_1\sim -\kappa/\Delta$ at $N\to\infty$.}
    \label{fig:check_s1_1}
\end{figure}
\begin{figure}
    \centering    \includegraphics[scale = 0.5]{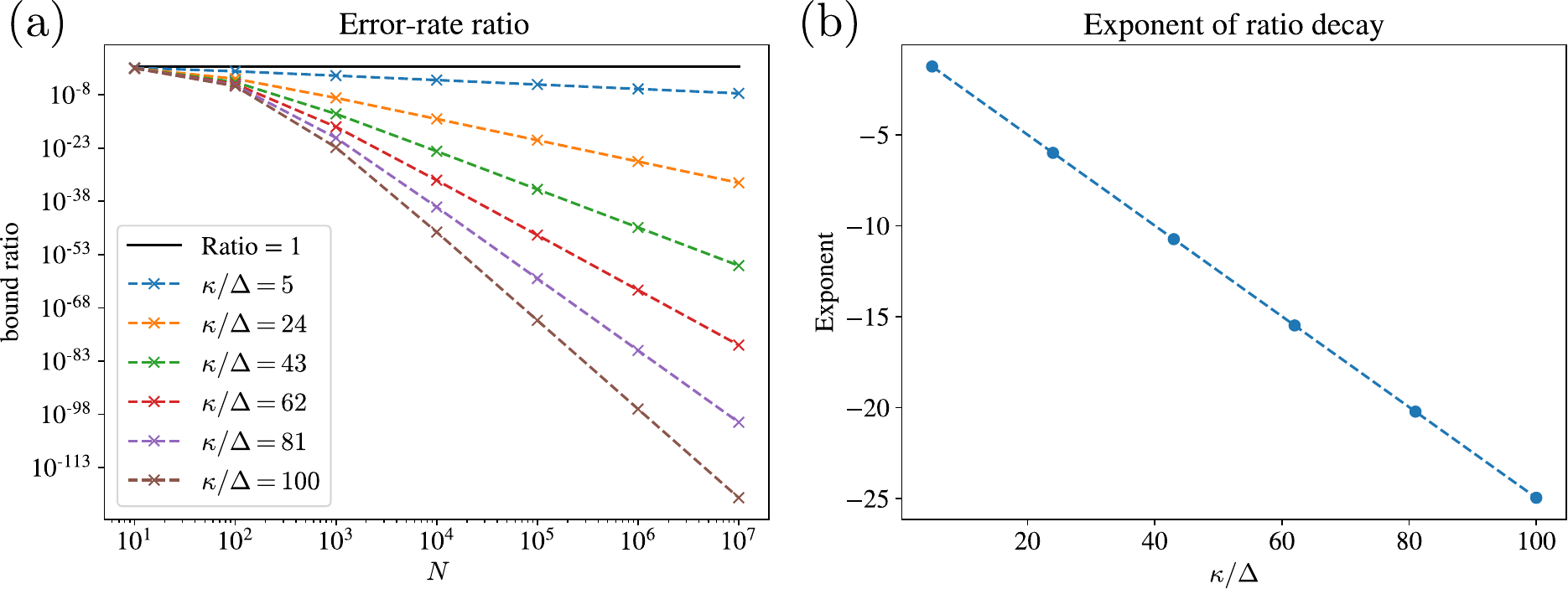}
    \caption{Numerical solutions to the error-rate ratio $s_1/\left(\frac{N\Delta}{\kappa+N\Delta}\right)^h$ for $h = N/2$. (a) The log-log plot shows the error-rate ratio as a function of $N$. We see in this case the ratio is steadily converging to 0 at $N\to\infty$. The linearity of the convergence suggests algebraic decay as a function of $N$ with different exponents. 
    (b) The plot, on a linear scale, shows the fitted exponents obtained by a linear fit to the straight lines in the error-ratio plot (a). A linear fit to the exponents yields $-0.24956\kappa/\Delta-0.00707\approx -\kappa/(4\Delta)$. We therefore deduce empirically that the ratio behaves as $\sim 1/N^{\kappa/(4\Delta)}$ as $N\to\infty$.}
    \label{fig:check_s1_2}
\end{figure}

To understand the behaviour of the improved bound (Theorem \ref{upper_bound_sl1}) compared to the one given by Theorem~\ref{lem:all_time1}, let us solve for $s_1$ numerically. We consider the ratio of error rates appearing in the two theorems: $s_1/\left(\frac{N\Delta}{\kappa+N\Delta}\right)^h$, where $h = fN$ with $f\in(0,1/2]$. We split the scenarios into two cases:
\begin{enumerate}
    \item[(i)] $f<1/2$. We evaluate the ratio for $f = 0.4$ and $0.45$ for different $N$ and different noise rates $\kappa/\Delta$. In Fig.~\ref{fig:check_s1_1}, we show the numerical results for $f = 0.4$. For $f = 0.45$, the plots look qualitatively the same.
    \item[(ii)] $f = 1/2$. The numerical results are shown in Fig.~\ref{fig:check_s1_2}.
\end{enumerate}
From these numerical results, we arrive at the empirical observations presented in the main text.

\subsection*{Appendix F: Proof of Theorem \ref{upper_bound_early_time}}
\labelx[Appendix F]{appendix_generic_boundarly_time_diagrams}

This appendix contains the proof of Theorem~\ref{upper_bound_early_time}. The proof of this theorem partially uses the proof of Theorem~\ref{generic_bound}. We start with Eq.~\eqref{eq:dyson_with_KL} which takes the form
\be
\bigl\<\sq\sss\bigl\> = \frac 1s+ \frac 1s \bigl\<\sq\sss\underbrace{\x\zzz\x\dots\x\zzz\x}_\textrm{$\ell+1$}\bigl\>.
\ee
Next, we recall that the Poissonian error generator 
takes the form
$\Delta\mathcal L_E(\rho) = N \Delta(\mathcal E(\rho) -\rho)$,
where $\mathcal E (\rho) = N^{-1}\sum_{\mu=1}^N E_\mu \rho E^\dag_\mu$ is the error superoperator. Let us use the notation
\be
N\Delta\mathcal E  =  \*, \qquad   N \Delta\mathcal I = \o.
\ee
where $\mathcal I$ is the identity superoperator.
Then, we can state the following diagrammatic relation (see Eq.~\eqref{eq:diagram_notations}):
\be\label{eqs:cross_decomp}
\x = \*-\o.
\ee
Using the decomposition in Eqs.~\eqref{eqs:V_decomposition} and \eqref{eqs:cross_decomp}, we have
\be\label{eqs:three_term_dyson}
\bigl\<\sq\sss\bigl\> = \frac 1s + \frac 1s\bigl\<\sq\sss\underbrace{\x\zzz\x\dots\x\zzz\x}_\textrm{$\ell$}\nnn\*\bigl\>-\frac 1s\bigl\<\sq\sss\underbrace{\x\zzz\x\dots\x\zzz\x}_\textrm{$\ell$}\nnn\o\bigl\>+\frac 1s\bigl\<\sq\sss\underbrace{\x\zzz\x \dots\x\zzz\x}_\textrm{$\ell$}\vvv\x\bigl\>.
\ee
Taking into account that $\mathcal W_t = e^{-\kappa t}\mathcal I$ for global decoder, the first terms can be rewritten as
\be
\frac 1s\bigl\<\sq\sss\underbrace{\x\zzz\x\dots\x\zzz\x}_\textrm{$\ell$}\nnn\*\bigl\> = \frac {1}{s(s+\kappa)}\bigl\<\sq\sss\underbrace{\x\zzz\x\dots\x\zzz\x}_\textrm{$\ell$}\*\bigl\>.
\ee
Using Eq.~\eqref{eqs:KL_cancelling}, we can rewrite the second term as
\be
\frac 1s\bigl\<\sq\sss\underbrace{\x\zzz\x\dots\x\zzz\x}_\textrm{$\ell$}\nnn\o\bigl\> = \frac { N\Delta}{s(s+\kappa)}\bigl\<\sq\sss\underbrace{\x\zzz\x\dots\x\zzz\x}_\textrm{$\ell$}\bigl\> = \frac { N\Delta}{s(s+\kappa)}\bigl\<\sq\sss\underbrace{\x\zzz\x\dots\x\zzz\x}_\textrm{$\ell+1$}\bigl\>.
\ee
The last term in Eq.~\eqref{eqs:three_term_dyson} vanishes due to Lemma~\ref{knill-lafl-lemma}.

By combining these results, we reach the conclusion that
\be
\bigl\<\sq\sss\bigl\> = \frac 1s + \frac 1{s(s+\kappa)}\bigl\<\sq\sss\underbrace{\x\zzz\x\dots\x\zzz\x}_\textrm{$\ell$}\*\bigl\>-\frac {N\Delta }{s(s+\kappa)}\bigl\<\sq\sss\underbrace{\x\zzz\x\dots\x\zzz\x}_\textrm{$\ell+1$}\bigl\>.
\ee
Subsequently, by utilizing Eq.~\eqref{eq:dyson_with_KL}, we can express
\be
\frac 1s\bigl\<\sq\sss\underbrace{\x\zzz\x\dots\x\zzz\x}_\textrm{$ \ell+1$}\bigl\> = \bigl\<\sq\sss\bigl\>-\frac 1s.
\ee
After rearranging the terms, we obtain
\be
\bigl\<\sq\sss\bigl\>\biggl(1+\frac{N \Delta }{s+\kappa}\biggl) = \frac 1s\biggl(1+\frac{ N \Delta}{s+\kappa}\biggl)+\frac 1{s(s+\kappa)}\bigl\<\sq\zzz\underbrace{\x\zzz\x\dots\x\zzz\x}_\textrm{$ \ell$}\*\bigl\>.
\ee
Lastly, by dividing both sides of the equation by $(1+N\Delta/(s+\kappa))$, we arrive at
\be
\bigl\<\sq\sss\bigl\> = \frac 1s +\frac {1}{s(s+\kappa+N \Delta)}\bigl\<\sq\sss\underbrace{\x\zzz\x\dots\x\zzz\x}_\textrm{$ \ell$}\*\bigl\>.
\ee
By repeating this procedure $\ell$ times, we obtain the expression
\be
\bigl\<\sq\sss\bigl\> = \frac 1s +\frac 1{s(s+\kappa+ N\Delta)^{\ell}}\bigl\<\sq\sss\*^{\ell+1}\bigl\>.
\ee
The explicit form of the inverse Laplace transform of this equation can be obtained as
\be
\begin{split}
\mathscr{L}^{-1}\bigl\<\sq\sss\*^{\ell+1}\bigl\>& = \frac 12(N \Delta)^{\ell+1} \Tr \Bigl[Q\exp(\mathcal Lt)\mathcal E^{\ell+1}(\delta\rho)\Bigl].
\end{split}
\ee
As $\mathcal{E}$ and $\exp(\mathcal{L}t)$ are completely positive trace-preserving (CPTP) maps that do not alter the matrix norm, we can put a lower bound on
\be\begin{split}
\Tr\Bigl[Q \exp(\mathcal Lt)\mathcal E^{\ell+1}(\delta\rho)\Bigl] &= \Tr\Bigl[Q \exp(\mathcal Lt)\mathcal E^{\ell+1}(|0\>\<0|)\Bigl] -\Tr\Bigl[Q \exp(\mathcal Lt)\mathcal E^{\ell+1}(|1\>\<1|)\Bigl] \\
&\geq -2\|Q\| = -2.
\end{split}
\ee
Thus we get
\be
\mathscr{L}^{-1}\bigl\<\sq\sss\*^{\ell+1}\bigl\>\geq -(N \Delta)^{\ell+1}.
\ee
Then we derive the bound
\be
\mathscr{L}^{-1}\bigl\<\sq\sss\bigl\> \geq 1 -(N \Delta)^{\ell+1}\mathscr{L}^{-1}\Biggl\{\frac {1}{s^2(s+\kappa+N \Delta)^{\ell}}\Biggl\}.
\ee
Taking the inverse Laplace transform, we get
\be
\epsilon(t),\delta(t) \leq  \frac{1}{(1+\kappa/N \Delta)^{\ell+1}}F\Bigl((\kappa +N \Delta)t\Bigl).
\ee
This expression concludes our proof.

\subsection*{\review{Appendix G: Proof of Theorem 5} }

\review{Let us first consider the logical bit-flip probability in the presence of the recovery process. Let $F^{(0)}$ be the subset of trajectories (see Eq.~\eqref{eq:poisson_form} and below for informal definition) without any recovery events (i.e. $\mu\neq0$), a subset of all possible trajectories $F$. From the Poissonian picture in Eq.~\eqref{eq:poisson_form}, we derive that
\begin{align}\Tr[\ket{1}\bra{1}\mathcal{R}e^{\hL\tau}\ket{0}\bra{0}]&=\Tr[\ket{1}\bra{1}\mathcal{R}\left(\sum_{\boldsymbol{\mu}\in F}p(\boldsymbol{\mu},\tau)\mathcal{E}_{\boldsymbol{\mu}}\right)\ket{0}\bra{0}]\nonumber\\&\geq\Tr[\ket{1}\bra{1}\mathcal{R}\left(\sum_{\boldsymbol{\mu}\in F^{(0)}}p(\boldsymbol{\mu},\tau)\mathcal{E}_{\boldsymbol{\mu}}\right)\ket{0}\bra{0}]\nonumber\\&=\Tr[\ket{1}\bra{1}\mathcal{R}\left(\sum_{\boldsymbol{\mu}\in F^{(0)}}p(\boldsymbol{\mu},\tau)\right)\left(\sum_{\boldsymbol{\mu}\in F^{(0)}}\frac{p(\boldsymbol{\mu},\tau)}{\sum_{\boldsymbol{\mu}'\in F^{(0)}}p(\boldsymbol{\mu}',\tau)}\mathcal{E}_{\boldsymbol{\mu}}\right)\ket{0}\bra{0}]\nonumber\\&=e^{-\kappa\tau}\Tr[\ket{1}\bra{1}\mathcal{R}e^{\hL_{E}\tau}\ket{0}\bra{0}]\geq\alpha_n(\tau)e^{-\kappa\tau}>ae^{-\kappa\tau},\label{eq:finite_{e}rror}
\end{align}
indicating a non-vanishing error probability. Here, we used Assumption~1 and the normalization
\begin{equation}
     \sum_{\boldsymbol \mu\in F^{(0)}} p(\boldsymbol \mu,\tau)=\sum_{n=0}^\infty \left(\frac{N\Delta}{\kappa + N\Delta}\right)^n\frac{[\tau(\kappa+N\Delta)]^n}{n!}e^{-\tau(\kappa+N\Delta)} = e^{-\kappa \tau}.
\end{equation}
Using Assumption 2 with corresponding Eq.~\eqref{eq:inequality_cond}, we then impose a lower bound on the logical error measure by considering the dynamics interleaved with the recovery map at every time interval of length $\tau$.
Let the parity superoperators be $\mathcal{X}(\rho) = \bar X\rho\bar X$ and $\mathcal{Z}(\rho) =\bar Z\rho\bar Z$, where $\bar X$ and $\bar Z$ are the logical operators. For stabilizer codes, these operators are Pauli strings that commute with the stabilizers. Also, the recovery $\mathcal{R}$ can be written in terms of projectors onto Pauli stabilizer configurations followed by Pauli strings that fix the stabilizer configuration (see Section~\ref{sec:examples} for an explicit expression).
Thus, these superoperators satisfy $[\mathcal{R},\mathcal{X}]=[\mathcal{R},\mathcal{Z}]=0$. 
Similarly, for Pauli errors ${\cal{E}}_{\mu>0}$, we have
\begin{equation}
    {\cal X}{\cal E}_{\mu}(\rho)=\mathcal{X}(E_{\mu}\rho E_{\mu})=E_{\mu}\mathcal{X}(\rho)E_{\mu}={\cal E}_{\mu}{\cal X}(\rho)~,
\end{equation}
and similarly for $\cal{X}\to\cal{Z}$.
As a result, the parity operators are conserved during the dynamics, i.e. $[\mathcal{X}, e^{\hL t}]=[\mathcal{Z}, e^{\hL t}] = 0$.
To analyze the logical error rate of our system, we limit our attention to the initial state $\ket{0}\bra{0}$. This state is an eigenvector of the parity operator $\mathcal{Z}$ with eigenvalue $+1$. 
Its eigenvalue is a good quantum number with respect to the noise, meaning that no off-diagonal matrix elements (such as $\ket{0}\bra{1}$) will be created during the evolution. 
Combining this with Eq.~\eqref{eq:finite_{e}rror}, we have
\begin{equation}
    \mathcal{R}e^{\hL\tau}\ket{ 0}\bra{ 0} = p_0\ket{ 0}\bra{ 0}+ p_1\ket{ 1}\bra{ 1},\ \frac{a}{2} e^{-\kappa\tau_c}< p_1\leq \frac{1}{2},\label{eq:dynamics_0}
\end{equation}
where $p_0= 1-p_1$ and here and below we have chosen $\tau = \tau_c + \log 2/\kappa$ such that $ae^{-\kappa\tau} = ae^{-\kappa\tau_c}/2< 1/2$.
By applying $\mathcal{X}$ to both sides of Eq.~\eqref{eq:dynamics_0}, we learn that
\begin{equation}
    \mathcal{R}e^{\hL\tau}\ket{ 1}\bra{ 1} = p_0\ket{ 1}\bra{ 1}+ p_1\ket{ 0}\bra{ 0}.
\end{equation}
Using simple linear algebra, we find:
\begin{equation}
     \left(\mathcal{R}e^{\hL\tau}\right)^m\ket{ 0}\bra{ 0} = \frac{1}{2}\left(1+(1-2p_1)^m\right)\ket{ 0}\bra{ 0}+ \frac{1}{2}\left(1-(1-2p_1)^m\right)\ket{ 1}\bra{1}.
\end{equation}
This yields
\begin{equation}
    \Tr[\ket{ 0}\bra{ 0}\mathcal{R}\left( e^{\hL \tau}\mathcal{R}\right)^m\ket{ 0}\bra{ 0}]\leq \frac{1}{2}+\frac{1}{2}(1-a e^{-\kappa\tau_c})^m.
\end{equation}
For a total time $t = m\tau$, we use Eq.~\eqref{eq:inequality_cond} from Assumption 2 and arrive at the following bound,
\begin{equation}
    \frac{1}{2}-\frac{1}{2}(1-a e^{-\kappa\tau_c})^{t/\tau}\leq 1-\Tr[\ket{ 0}\bra{ 0}\mathcal{R}e^{\hL t}\ket{ 0}\bra{ 0}]\leq \epsilon(t).\label{eq:lower_bound}
\end{equation}
 To make this bound comparable to previous results, we rewrite this inequality as $\epsilon(t) \geq \frac{1}{2}(1-\exp(-\Delta_{\rm eff}(t)t))$, where the effective logical error rate is
\begin{equation}
\Delta_{\rm eff}(t) := -\frac{1}{t}\log(1-2\epsilon(t)) \geq - \frac{1}{\tau}\log(1-a e^{-\kappa\tau_c}) = -\frac{\kappa\log(1-ae^{-\kappa\tau_c})}{\kappa\tau_c +\log 2}.
\end{equation}
This condition concludes our proof.}

\subsection*{Appendix H: Dissipative toric code}
\labelx[Appendix H]{appendix_h}

In this appendix, we closely examine the Lindblad operator for the autonomous decoder based on the two-dimensional toric code as described in Section~\ref{sec:examples}. We demonstrate that, in the absence of noise, the recovery Lindblad operator in Eq.~\eqref{eq:noiseless_lind} is exactly solvable. We use these solutions to perturbatively derive the spectral gap of the full operator.

For simplicity, we will limit our analysis to the case in which only star excitations are allowed, i.e., no plaquettes are excited. Despite this, we emphasize that our main conclusions should hold in the presence of both types of excitations. We will consider noise models where the eigenvalues of $B_p$ are always good quantum numbers, and focus our attention on the gauge sector where $B_p=+1$ for all $p$, i.e., the subspace that contains the ground states. The reduced Hilbert space will consist of states that have an even number of star excitations. We choose the following labeling convention for states that span the reduced Hilbert space:
\begin{equation} \label{eq:convention1}
\ket{0,0;  \vec{0} } = \prod_i (1+A_i) \ket{ \text{vac} }, \qquad \ket{r,s;  \vec{0} } = (g_x)^r (g_y)^s  \ket{0,0;  \vec{0} }, \qquad  \ket{r,s, \vec{k} } = \left(  \prod Z \right)_{\vec{k}} \ket{r,s, \vec{0}},
\end{equation}
where $A_i$ represents different star operators, $\ket{ \text{vac} }$ is the ground state of all $Z_j$ operators: $Z_j \ket{ \text{vac} } = \ket{ \text{vac} } $; $r,s \in 0,1$ label different topological sectors; $g_{x/y} = \Pi_{\text{hor/vert}} X$ is a product of $X$ operators along a string (on the dual lattice) that wraps around the horizontal/vertical direction of the torus. It is easy to check that $\ket{r,s; \vec{0} }$ are orthogonal ground states of $H$ (they are $+1$ eigenstates of all the $A_s$ and $B_p$). Excited states $\ket{r,s, \vec{k}}$ are labeled by $\vec{k}$. For a system with $L\times L$ stars, $\vec{k}$ is an $L^2$-dimensional vector that labels the excited stars with $1$ and de-excited stars with $0$. Excited eigenstates are defined by applying strings of $\left( \prod Z \right)$ operators on the ground state via the smallest number of $Z$ operators. When there are multiple minimal-weight excitation operators, then a $\left( \prod Z \right)_{\vec{k}} $ operator is chosen arbitrarily from the various options.

We consider the following recovery map:

\be \label{eq:global_diss}
\mathcal R(\rho) = \sum_{\vec k} K_{\vec k}\rho K^\dag_{\vec k},\qquad
    K_{\vec k} = \left(  \prod Z \right)_{\vec{k}} P_{\vec{k}},
\end{equation}
where $P_{\vec{k}} = \Pi_j (1+(-1)^{\vec{k}_j} A_j)$ is a projector onto a given star configuration $\vec{k}$, and $\left(  \prod Z \right)_{\vec{k}}$ is an operator that ``fixes'' the error using the minimal number of $Z$ operators. In other words, $K_{ \vec{k} } \ket{r,s; \vec{k} } =\ket{r,s; \vec{0} }$.  Note that the number of star configurations is $2^{L^2-1}$, which means that  the number of dissipators scales exponentially with system size. Note also that the dissipators are non-local, i.e.~they have support on the full lattice. Nevertheless, there is still a notion of locality: The  dissipators fix star errors via strings of $Z$ operators which minimize the total path length between excited stars, a process known as minimal weight matching.

One can easily check that arbitrary superpositions of toric code ground states are the \textit{only} steady states of the system.
It is thus clear that this idealized limit hosts a qudit steady-state structure. We now ask how stable this degeneracy is with respect to dephasing perturbations, which act on each physical qubit,
i.e.~how quickly do ``coherences'' decay in the presence of dephasing. To this end, we introduce the following dissipators on each edge of the lattice:
\begin{equation} \label{eq:dephasing}
E_\mu =  Z_\mu.
\end{equation}

Let us reformulate the question more precisely. We begin by noting that the jump operators $\{ K_{\vec{k}}, E_\mu\}$ commute with $B_p$, and hence indeed it makes sense to focus on the gauge sector with $B_p=+1$ for all $p$.  Even within this gauge sector, we can further partition the  subspace into different ``topological sectors''. Let us define the following projection operators:
\begin{equation} \label{eq:projectors}
P_{rs} = \sum_{ \vec{k} } \ket{r,s; \vec{k} } \bra{ r , s;  \vec{k} },
\end{equation}
where $r,s \in 0,1$ and $P_{rs}$ projects states into topological sector $r,s$.  We now note that all of the  dissipators  commute with $P_{rs}$. Thus these projectors are ``strong symmetries'' of the Lindbladian  \cite{prosen2012}, and hence there exists a basis where the Lindbladian in Eq.~\eqref{eq:markov_process} is be block-diagonalized and consists $4^2=16$ different blocks:
\begin{equation} \label{eq:blockLind}
    \hL= \text{Diag}[  \hL_{0,0}, \hL_{0,1}, \hL_{0,2}, \ldots \hL_{3,3} ],
\end{equation}
where the numbers 0 to 3 label four different topological sectors of the bras and kets accoding to the convention $(r=0, s=0) \rightarrow 0$, $(r=1, s=0) \rightarrow 1$, $(r=0, s=1) \rightarrow 2$, $(r=1, s=1) \rightarrow 3 $.  In words, the Lindbladian $\hL_{0,0}$ acts on operators where both ket and bra belong to the same topological sector $r=0, s=0$; on the other hand, $\hL_{0,1}$ acts on operators where the ket belongs to sector  $r=0, s=0$, while the bra belongs to sector $r=1, s=0$. Operators belonging to different topological sectors evolve independently.

We now show that, in the absence of dephasing ($\Delta=0$), the model is exactly solvable, meaning that we can write down exact expressions for the right and left eigenoperators of the Lindbladian in each topological sector and its spectrum. Let us define right and left eigenoperators of the Lindbladian:
\begin{equation}
    \hL_{p,q}(r_{p,q; m}) = \Lambda_m r_{p,q; m}, \qquad   \hL_{p,q}^\dagger(l_{p,q; m}) = \Lambda_m^* l_{p,q; m},
\end{equation}
where $p,q$ label the topological sector, and  $m$ represents different eigenvalues within a given sector (it turns out that the spectrum is the same in all sectors in this limit, so we suppress the topological labels on $\Lambda$).

For $\Delta=0$, each topological sector has exactly one eigenvalue of zero: $\Lambda_0 = 0$. One can show that the corresponding eigenoperators are
\begin{equation}
    r_{p,q; 0} = \ket{p; \vec{0}} \bra{q; \vec{0}}, \qquad l_{p,q; 0} = \sum_{ \vec{k}} \ket{ p; \vec{k}} \bra{q; \vec{k}}.
\end{equation}
This ensures that arbitrary \textit{superpositions} of toric code ground states are indeed steady states of the model:
\begin{equation}
    \ket{\psi} = \sum_{p=0}^3 c_p \ket{p; \vec{0}},  \qquad \hL(\ket{\psi} \bra{\psi}) = 0, \quad \forall \sum_{p=0}^3 |c_p|^2=1.
\end{equation}
The full Lindbladian $\hL$ therefore has $4^2=16$ eigenvalues of zero. In a slight abuse of notation, we shall call this a ``qubit steady state structure''. (Rather than a qudit steady state structure.)

We now turn to ``diagonal'' eigenoperators which come with a decay rate: $\Lambda_{\vec{k}} = -\kappa$. The corresponding eigenoperators are
\begin{equation}
    r_{p,q; \vec{k}} = \ket{p; \vec{k}} \bra{q; \vec{k}} - \ket{p; \vec{0}} \bra{q; \vec{0}}, \qquad l_{p,q; \vec{k}} = \ket{ p; \vec{k}} \bra{q; \vec{k}}.
\end{equation}
It is clear that $\tr[ r_{p,q; \vec{k}} ] = 0$ and  $\tr[l_{p,q; \vec{k}}^\dagger r_{p,q; \vec{k}} ] = 1$, which are necessary conditions.

Finally, we turn to off-diagonal eigenoperators  which come with a decay rate: $\Lambda_{\vec{k}, \vec{k'}} = -\kappa$. The corresponding eigenoperators are:
\begin{equation}
    r_{p,q; \vec{k}, \vec{k'}} = \ket{p; \vec{k}} \bra{q; \vec{k'}}, \qquad l_{p,q; \vec{k}, \vec{k'}} = \ket{p; \vec{k}} \bra{q; \vec{k'}}.
\end{equation}
In this case, the right and left eigenoperators happen to be identical. 

\subsubsection*{Perturbative results}

Having found the exact expressions for all of the right and left eigenoperators of the Lindbladian  in the absence of dephasing $\Delta=0$,  we would now like to use perturbation theory to examine the effects of small dephasing $\Delta > 0$. In particular, we would like to know the fate of the qubit steady state structure, i.e.~whether the Lindbladian with weak dephasing has 16 eigenvalues of zero in the thermodynamic limit.

The Lindbladian can be block diagonalized into different topological sectors (see Eq.~\eqref{eq:blockLind}) even in the presence of dephasing. Note that  all operators with non-zero trace must belong to one of the diagonal sectors $\hL_{q,q}$, while the off-diagonal sectors act on operators with zero trace. Since it is possible to initialize a valid (traceful) density matrix in each of the diagonal sectors $\hL_{q,q}$, we know that (even in the presence of dephasing) each of these sectors must have an eigenvalue of zero corresponding to the steady state in each topological sector. The full Lindbladian $\hL$ is thus guaranteed to have at least four eigenvalues of zero even in the presence of $Z$-dephasing. The off-diagonal sectors $\hL_{p,q}$ are not guaranteed steady state solutions in general. However, from the exact diagonalization above, we know that, in the absence of dephasing, these sectors  will have an eigenvalue of zero, since arbitrary \textit{superpositions} of toric code ground states are stable (e.g.~$\hL( \ket{0; \vec{0} } \bra{1; \vec{0} } )=0$). We wish to understand how the decay rate of these off-diagonal coherences scales with system size in the presence of dephasing; in particular, we would like to know if the decay  rate scales to zero in the thermodynamic limit.

Let us specialize to an off-diagonal sector $\hL_{p,q}$ (we shall suppress the topological indices henceforth) and examine the shift to the eigenvalue $\Lambda_0=0$  via perturbation theory. The corrections (up to third order) read
\begin{align}
    \delta \Lambda^{(1)} &= \langle l_0 | \mathcal{L'} | r_0  \rangle, \\
    \delta \Lambda^{(2)} &= \sum_{m \neq 0} \frac{ \langle l_0 | \mathcal{L'} | r_m  \rangle \langle l_m | \mathcal{L'} | r_0  \rangle}{\Lambda_0 - \Lambda_m } = \frac{1}{\kappa} \left( \langle l_0 |\mathcal{L'}^2 |r_0 \rangle - \langle l_0 | \mathcal{L'} | r_0 \rangle ^2 \right), \\
    \delta \Lambda^{(3)} &= \sum_{m_1 \neq 0} \sum_{m_2 \neq 0} \frac{ \langle l_0 | \mathcal{L'} | r_{m_1}  \rangle  \langle l_{m_1} | \mathcal{L'} | r_{m_2}  \rangle \langle l_{m_2} | \mathcal{L'} | r_0  \rangle}{(\Lambda_0 - \Lambda_{m_1})  (\Lambda_0 - \Lambda_{m_2} )  } - \delta \Lambda_1 \sum_{m \neq 0} \frac{ \langle l_0 | \mathcal{L'} | r_m  \rangle \langle l_m | \mathcal{L'} | r_0  \rangle}{(\Lambda_0 - \Lambda_m )^2} \\
    &= \frac{1}{\kappa^2} \left( \langle l_0 |\mathcal{L'}^3 |r_0 \rangle  - 3 \langle l_0 | \mathcal{L'} | r_0  \rangle \langle l_0 |\mathcal{L'}^2 |r_0 \rangle  + 2 \langle l_0 | \mathcal{L'} | r_0  \rangle^3 \right),
\end{align}
where we use the shorthand $\langle l | \hL' | r \rangle \equiv \tr[ l^\dagger \hL'(r)]$, and
\begin{equation}
    \hL'(\rho) = \sum_i \left(  Z_i  \rho Z_i  - \rho \right)
\end{equation}
is the dephasing perturbation. We have simplified some of the general expressions by noting that the relevant excited eigenvalues are all $\Lambda_m = -\kappa$ and $\Lambda_0 = 0$. Examining the  expressions for the eigenvalue shift, it is clear that, if $\langle l_0 |(\mathcal{L'})^x |r_0 \rangle =0 $ for all $x \leq y$, then $\delta \Lambda^{(y)} = 0$.

\textbf{One-dimensional system.}---Let's consider a $1 \times L$ lattice where $L=2j+1, j \in \mathbb{Z}$, such that $L$ is odd. We define two different toric code ground states via $\ket{2; \vz} =X_1 \ket{0; \vz} $, where $X_1$ is a global loop on the dual lattice in the vertical direction (in this case, the vertical direction has length 1 so the global loop is a single operator). We define the right and left eigenoperators in the unperturbed limit:
\begin{equation}
r_0 = \ket{2; \vz} \bra{0; \vz}, \qquad l_0 = \sum_{\vk}  \ket{2; \vk} \bra{0; \vk}.
\end{equation}
We next define the perturbation
\begin{equation}
\hL' (\rho) = \hZ(\rho) - L\rho, \qquad \hZ(\rho) = \sum_{i=1}^L Z_i   \rho Z_i,
\end{equation}
where $Z_i$ act only on the horizontal edges of the lattice.

Consider the first term in the perturbation theory:
\begin{align}
\tr[ l_0^\dagger \hL' (r_0)] &= \sum_{\vk}  \text{tr}[\ket{0, \vk} \bra{2; \vk}  \left( \hZ (\ket{2, \vz} \bra{0, \vz}) - L  \ket{2, \vz} \bra{0, \vz} \right) ]  \\
&= - L + \sum_{\vk}  \bra{2; \vk}   \hZ (\ket{2, \vz} \bra{0, \vz})  \ket{0, \vk} ]   \\
&= - L + L = 0.
\end{align}

To order $\alpha$, we find
\begin{align}
\tr[ l_0^\dagger (\hL')^\alpha (r_0)] &=  \sum_{i=0}^\alpha  \left(\begin{array}{c}
\alpha \\
i
\end{array}\right)
(-L)^{\alpha-i}  \sum_{\vk}  \bra{2; \vk}   \hZ^{i} (\ket{2, \vz} \bra{0, \vz})  \ket{0, \vk},
\end{align}
where we have introduced the binomial coefficient. We now note that,  for $i \leq j$,
\begin{equation}
\sum_{\vk}  \bra{2; \vk}   \hZ^{i} (\ket{2, \vz} \bra{0, \vz})  \ket{0, \vk}  = L^i, \qquad \text{for: } i \leq j.
\end{equation}
This implies
\begin{align}
\tr[ l_0^\dagger (\hL')^\alpha (r_0)] &= L^2 \sum_{i=0}^\alpha  \left(\begin{array}{c}
\alpha \\
i
\end{array}\right)
(-1)^{\alpha-i}  1^i = 0 , \qquad \text{for: }  \alpha \leq j,
\end{align}
where we have used a property of binomial coefficients.

We need to be careful at order $\alpha = j+1$. In this case,
\begin{equation}
\sum_{\vk}  \bra{2; \vk}   \hZ^{ j+1} (\ket{2, \vz} \bra{0, \vz})  \ket{0, \vk}  = L^{j+1} - 2 \left( \frac{L !}{ j!} \right).
\end{equation}
The term with the factorial is counting the number of configurations with $j+1$ excited edges which have a logical error after applying the recovery jump operators $L_{\vec{k}}$.  This implies
\begin{align}
\tr[ l_0^\dagger (\hL')^{j+1}(r_0)] &= - 2 \left( \frac{L !}{ j!} \right).
\end{align}
This is the lowest-order non-trivial recovery.

This implies that  the contribution to the eigenvalue at this order is
\begin{equation}
\Lambda/\kappa  = -2  \left( \frac{L !}{ (L/2)!} \right) \left( \frac{\Delta}{\kappa} \right)^{L/2} + O( \left( \Delta /\kappa \right)^{L/2 + 1}  ),
\end{equation}
where we have used $j+1 \approx L/2$.  We note that this term blows up in the thermodynamic limit $L \rightarrow \infty$, since the factorial term grows faster than the exponential term decays. One way to remedy this is to increase the dissipation linearly with the system size: $\kappa = \kappa_0 L$. Then the thermodynamic limit is well-defined:
\begin{equation}
\lim_{L \rightarrow \infty} \left( \frac{L !}{ (L/2)!} \right) \left( \frac{\Delta}{\kappa_0 L} \right)^{L/2} = 0
\end{equation}
for $\Delta / \kappa_0 \ll 1$ such that the recovery to the eigenvalue goes to zero in the thermodynamic limit.

\textbf{Two-dimensional system.}---A similar analysis can be done for a 2D system on an $L \times L$ lattice.  We consider a perturbation
\begin{equation}
\hL' (\rho) = \hZ(\rho) - 2L^2 \rho, \qquad \hZ(\rho) = \sum_{i=1}^{2L^2} Z_i   \rho Z_i,
\end{equation}
where $Z_i$ act on each of the $2L^2$ edges of the lattice.

Again, the lowest-order contribution comes at order $j+1$:
\begin{align}
\tr[ l_0^\dagger (\hL')^{j+1}(r_0)] &= -   2L \left( \frac{L !}{ j!} \right),
\end{align}
where this factor basically counts the number of configurations with $j+1$ excited edges, which have a logical error after applying the recovery jump operators $L_{\vec{k}}$.

The recovery to the eigenvalue at this order is
\begin{equation}
\Lambda/\kappa  = - 2 L \left( \frac{L !}{ (L/2)!} \right)\left( \frac{\Delta}{\kappa} \right)^{L/2} + O( \left( \Delta /\kappa \right)^{L/2 + 1}  ).
\end{equation}
Again, this term blows up in the thermodynamic limit $L \rightarrow \infty$, since the factorial term grows faster than the exponential term decays.  One way to remedy this is to increase the dissipation linearly with the \textit{linear} system size: $\kappa = \kappa_0 L$. Then the thermodynamic limit is well-defined:
\begin{equation}
\lim_{L \rightarrow \infty} L \left( \frac{L !}{ (L/2)!} \right) \left( \frac{\Delta}{\kappa_0 L} \right)^{L/2} = 0
\end{equation}
for $\Delta / \kappa_0 \ll 1$ such that the recovery to the eigenvalue goes to zero in the thermodynamic limit.

In conclusion, we have shown that the dissipative toric code with single-shot recovery jumps will host a qubit steady state structure in the presence of dephasing perturbations if the strength of the recovery process scales with the linear system size. Moreover, our analysis suggests that \textit{any} local perturbation will not destroy the qubit steady state for such a system. As we discuss in the main text, the perturbative analysis also misses a non-perturbative contribution for the system's lifetime, which will result in a quantitatively different scaling dependence on $\Delta$.

\bibliographystyle{quantum}
\bibliography{refs.bib}

\end{document}